\newlist{inlinelist}{enumerate*}{1}
\setlist*[inlinelist,1]{%
  label=(\roman*),
}
\definecolor{listingBG}{HTML}{FFFFCB}%
\definecolor{listingFrame}{HTML}{BBBB98}%
\definecolor{listingLineno}{rgb}{0.5,0.5,1.0}%
\definecolor{LightGrey}{rgb}{0.975,0.975,0.975}
\lstdefinelanguage{balzac}{
	commentstyle=\color{Gray},
	morecomment=[l]{//},
	morecomment=[s]{/*}{*/},
	classoffset=0,
        escapechar=\$,
	morekeywords={const,transaction,input,output,absLock,relLock,key,network,package},
	keywordstyle=\color{TealBlue}\bfseries,
	classoffset=1,
	morekeywords={sig,versig,ctxo,rtxo,ptxo,stxo,verscr,verctx,txid,fun,unit,int,string,bool,address,uint,date,checkDate,sha256},
	keywordstyle=\color{Blue}\bfseries,
	classoffset=2,
	morekeywords={BTC,true,false,and,or,if,then,else},
	keywordstyle=\color{Plum}\bfseries,
	classoffset=3,
	morekeywords={arg,val,wit,out,in,tkid,op,owner,tkval},
	keywordstyle=\color{MidnightBlue}\bfseries,
        frame=lines,
}
\newcommand{\ifempty}[3]{%
  \ifthenelse{\isempty{#1}}{#2}{#3}%
}
\newcommand{\ifdots}[3]{%
  \ifthenelse{\equal{#1}{...}}{#2}{#3}%
}
\newcommand{\hidden}[1]{}
\renewcommand{\vec}[1]{\boldsymbol{#1}}
\newcommand{\Real}[1]{\mathrm{Real}}
\newcommand{\codefont}{\fontsize{9}{9}\selectfont}
\newcommand{\lineno}[1]{{\tt\codefont {\textcolor{NavyBlue}{#1}}}}
\newcommand{\eg}{e.g.\@\xspace}
\newcommand{\ie}{i.e.\@\xspace}
\newcommand{\wrt}{w.r.t.\@\xspace}
\theoremstyle{plain}
\newtheorem{thm}{Theorem}
\newtheorem{lem}[thm]{Lemma}
\theoremstyle{definition}
\newtheorem{defn}{Definition}
\newtheorem{example}{Example}
\newenvironment{proofof}[2][]{%
  \ifempty{#1}
  {\subsection*{Proof of~\Cref{#2}}}
  {\subsection*{Proof of~\Cref{#2} ({#1})}}
  \label{#2-proof}
  }%
  {}
\newcommand{\sig}[3][]{\mathit{sig}^{#1}_{#2}\ifempty{#3}{}{({#3})}}
\newcommand{\txvername}{{\txColor{\sf ver}}}
\newcommand{\txver}[6]{\ifempty{#3}{\ifempty{#4}{\txvername_{#1}({#2},{#5},{#6})}
		{\txvername_{#1}^{{#3},{#4}}({#2},{#5},{#6})}}
	{\txvername_{#1}^{{#3},{#4}}({#2},{#5},{#6})}}
\newcommand{\BTC}{\textup{%
  \leavevmode
  \vtop{\offinterlineskip 
    \setbox0=\hbox{B}%
    \setbox2=\hbox to\wd0{\hfil\hskip-.03em
    \vrule height .3ex width .15ex\hskip .08em
    \vrule height .3ex width .15ex\hfil}
    \vbox{\copy2\box0}\box2}}\xspace}
\def\pmvColor{\color{ForestGreen}}
\newcommand{\pmvFmt}[1]{{\pmvColor{\sf #1}}}
\newcommand{\PartT}{\pmvFmt{{Hon}}\xspace} 
\newcommand{\pmv}[2][]{\pmvFmt{#2}_{\pmvColor{#1}}\xspace}
\newcommand{\pmvA}[1][]{\pmv[{#1}]{A}}
\newcommand{\pmvB}[1][]{\pmv[{#1}]{B}}
\newcommand{\pmvC}[1][]{\pmv[{#1}]{C}}
\newcommand{\pmvP}[1][]{\pmv[{#1}]{P}}
\newcommand{\Adv}{\pmv{Adv}} 
\def\txColor{\color{MidnightBlue}}
\def\fieldColor{\color{Plum}}
\newcommand{\txFmt}[1]{{\txColor{\sf #1}}}
\newcommand{\tx}[2][]{\txFmt{#2}_{\txColor{#1}}}
\newcommand{\txT}[1][]{\tx[#1]{T}}
\newcommand{\txTi}[1][]{\txFmt{T'_{\txColor{{\it #1}}}}}
\newcommand{\txTii}[1][]{\txFmt{T''_{\txColor{{\it #1}}}}}
\newcommand{\txTag}[3][]{{\fieldColor\sf #3}\ifempty{#1}{\ifempty{#2}{}{: {#2}}}{({#1})\ifempty{#2}{}{: {#2}}}}
\newcommand{\txIn}[2][]{\txTag[{#1}]{#2}{in}}
\newcommand{\txWit}[2][]{\txTag[{#1}]{#2}{wit}}
\newcommand{\txOut}[2][]{\txTag[{#1}]{#2}{out}}
\newcommand{\txf}{\txTag{}{f}} 
\newcommand{\txarg}{\txTag{}{arg}}
\newcommand{\txscript}{\txTag{}{scr}}
\newcommand{\txval}{\txTag{}{val}}
\newcommand{\tkop}{\txTag{}{op}}
\newcommand{\tkown}{\txTag{}{owner}}
\newcommand{\tkval}{\txTag{}{tkval}}
\newcommand{\tkid}{\txTag{}{tkid}}
\DeclareMathAlphabet{\mathbfsf}{\encodingdefault}{\sfdefault}{bx}{n}
\newcommand{\bcB}[1][]{{\mathbfsf{\txColor{B}}}_{\txColor{#1}}}
\newcommand{\eqdef}{\triangleq}
\newcommand{\mmid}{\,\|\,}
\newcommand{\irule}[2]{\dfrac{#1}{#2}}
\newcommand{\bnfdef}{::=}
\newcommand{\bnfmid}{\;|\;}
\newcommand{\nrule}[1]{{\scriptsize \textsc{#1}}}
\newcommand{\smallnrule}[1]{{\tiny \textsc{#1}}}
\newcommand{\sem}[2][]{\mbox{\ensuremath{\llbracket{#2}\rrbracket_{#1}}}}
\newcommand{\dom}[1]{\operatorname{dom} {#1}}
\newcommand{\ran}[1]{\operatorname{ran} {#1}}
\newcommand{\Nat}{\mathbb{N}}
\newcommand{\setenum}[1]{\{#1\}}
\newcommand{\setcomp}[2]{\left\{{#1} \,\middle|\, {#2}\right\}}
\newcommand{\seqat}[2]{{#1}.{#2}}
\newcommand{\runat}[2]{{#1}_{#2}}
\newcommand{\qedex}{\ensuremath{\diamond}}
\definecolor{LightGrey}{rgb}{0.95,0.95,0.95}
\definecolor{keyword}{HTML}{7F0055}
\newcommand{\tokScript}{\contrFmt{\expe_{\it TOK}}}
\newcommand{\btcScript}{\contrFmt{\expe_{\it BTC}}}
\def\tokColor{\color{MidnightBlue}}
\newcommand{\tokFmt}[1]{{\tokColor{#1}}}
\newcommand{\tok}[2][]{\tokFmt{#2}_{\tokColor{#1}}\xspace}
\newcommand{\tokT}[1][]{\tok[{#1}]{\tau}}
\newcommand{\tokTi}[1][]{\tok[{#1}]{\tau'}}
\newcommand{\genOp}{{\it gen}}
\newcommand{\burnOp}{{\it burn}}
\newcommand{\splitOp}{{\it split}}
\newcommand{\joinOp}{{\it join}}
\newcommand{\exchangeOp}{{\it xchg}}
\newcommand{\giveOp}{{\it give}}
\newcommand{\genRule}[1][]{\ifempty{#1}{\nrule{[Gen]}}{\smallnrule{[Gen]}}\xspace}
\newcommand{\burnRule}[1][]{\ifempty{#1}{\nrule{[Burn]}}{\smallnrule{[Burn]}}\xspace}
\newcommand{\splitRule}[1][]{\ifempty{#1}{\nrule{[Split]}}{\smallnrule{[Split]}}\xspace}
\newcommand{\joinRule}[1][]{\ifempty{#1}{\nrule{[Join]}}{\smallnrule{[Join]}}\xspace}
\newcommand{\exchangeRule}[1][]{\ifempty{#1}{\nrule{[Xchg]}}{\smallnrule{[Xchg]}}\xspace}
\newcommand{\giveRule}[1][]{\ifempty{#1}{\nrule{[Give]}}{\smallnrule{[Give]}}\xspace}
\newcommand{\actGen}[2]{\genOp({#1},{#2})}
\newcommand{\actBurn}[2]{\burnOp({#1},{#2})}
\newcommand{\actSplit}[3]{\splitOp({#1},{#2},{#3})}
\newcommand{\actJoin}[3]{\joinOp({#1},{#2},{#3})}
\newcommand{\actExchange}[2]{\exchangeOp({#1},{#2})}
\newcommand{\actGive}[2]{\giveOp({#1},{#2})}
\newcommand{\tokval}[3][]{\mathit{bal}^{#1}_{{#2}}\ifempty{#3}{}{({#3})}}
\newcommand{\genval}[2]{\mathit{minted}_{{#1}}\ifempty{#2}{}{({#2})}}
\newcommand{\burnval}[2]{\mathit{burnt}_{{#1}}\ifempty{#2}{}{({#2})}}
\newcommand{\tokvalC}[2]{\mathit{bal}_{{#1}}\ifempty{#2}{}{({#2})}}
\newlength\replength
\newcommand\repfrac{.1}
\newcommand\rulewidth{.6pt}
\newcommand\tdashfill[1][\repfrac]{\cleaders\hbox to \replength{%
  \smash{\rule[\arraystretch\ht\strutbox]{\repfrac\replength}{\rulewidth}}}\hfill}
\newcommand\tdotfill[1][\repfrac]{\cleaders\hbox to \replength{%
  \smash{\raisebox{\arraystretch\dimexpr\ht\strutbox-.1ex\relax}{.}}}\hfill}
\newcommand{\var}[2][]{#2_{#1}} 
\newcommand{\varX}[1][]{\var[#1]{x}} 
\renewcommand{\varXi}[1][]{\var[#1]{x'}} 
\newcommand{\varY}[1][]{\var[#1]{y}} 
\newcommand{\varYi}[1][]{\var[#1]{y'}} 
\newcommand{\varZ}[1][]{\var[#1]{z}}
\newcommand{\const}[2][]{#2_{#1}} 
\newcommand{\constPK}[1][]{\const[{#1}]{pk}}
\newcommand{\constSK}[1][]{\const[{#1}]{sk}}
\newcommand{\val}[2][]{#2_{#1}} 
\newcommand{\valV}[1][]{\val[#1]{v}}
\newcommand{\valVi}[1][]{\val[#1]{v'}}
\newcommand{\ArgA}[1][]{\vec{a}_{#1}}
\def\scriptColor{\color{Black}}
\newcommand{\script}[2][]{{\scriptColor{{\it #2}_{#1}}}}
\newcommand{\expe}[1][]{\script[{#1}]{e}}
\newcommand{\expei}[1][]{\script[{#1}]{e'}}
\newcommand{\versigName}{{\sf versig}}
\newcommand{\versig}[2]{\versigName({#1},{#2})}
\newcommand{\rtx}{{\sf rtx}}
\newcommand{\hashE}[1]{{\sf H}(#1)}
\newcommand{\hashSem}[1]{\ifempty{#1}{H}{H(#1)}}
\newcommand{\notE}[1]{{\sf not}~{#1}}
\newcommand{\sizeE}[1]{\ensuremath{| #1 |}}
\newcommand{\ifE}[3]{\mathsf{if}~{#1}~\mathsf{then}~{#2}~\mathsf{else}~{#3}}
\newcommand{\ifSem}[3]{\mathit{if}~{#1}~\mathit{then}~{#2}~\mathit{else}~{#3}}
\newcommand{\andE}{~{\sf and}~}
\newcommand{\orE}{~{\sf or}~}
\newcommand{\txout}{t} 
\newcommand{\txouti}{t'} 
\newcommand{\txo}{\fieldColor{\sf o}} 
\newcommand{\txof}[2]{{#1}.{#2}}
\newcommand{\ctxo}[1]{{\sf ctxo}\ifempty{#1}{}{.{#1}}}
\newcommand{\stxo}[2]{{\sf stxo}({#2})\ifempty{#1}{}{.{#1}}}
\newcommand{\rtxo}[2]{{\sf rtxo}({#2})\ifempty{#1}{}{.{#1}}}
\newcommand{\ptxo}[2]{{\sf ptxo}({#2})\ifempty{#1}{}{.{#1}}}
\newcommand{\txid}[1]{{\sf txid}({#1})}
\newcommand{\inidx}{{\sf inidx}}
\newcommand{\outidx}{{\sf outidx}}
\newcommand{\inlen}[1]{{\sf inlen({#1})}}
\newcommand{\outlen}[1]{{\sf outlen({#1})}}
\newcommand{\verscript}[2]{{\sf verscr}\ifempty{#1}{}{({#1},{#2})}}
\newcommand{\verrec}[1]{{\sf verrec}\ifempty{#1}{}{({#1})}}
\newcommand{\utxo}[1]{\mathit{UTXO}({#1})}
\newcommand{\sizeF}[1]{\mathit{size}\ifempty{#1}{}{({#1})}}
\newcommand{\true}{\mathit{true}}
\newcommand{\false}{\mathit{false}}
\newcommand{\CM}{\contrFmt{\expe_{\it CM}}}
\newcommand{\txr}[1]{{\fieldColor{\sf r}_{#1}}} 
\newcommand{\txp}{{\fieldColor{\sf p}}}
\def\contrColor{\color{RubineRed}}
\newcommand{\contrFmt}[1]{{\contrColor{\it #1}}}
\newcommand{\contrAdvC}[2]{\mathcal{C}} 
\newcommand{\splitname}{\textup{\texttt{split}}}
\newcommand{\splitB}[2]{{#1} \rightarrow {#2}}
\newcommand{\withdrawname}{\textup{\texttt{withdraw}}}
\newcommand{\withdrawC}[1]{\withdrawname\ifempty{#1}{}{\; {\pmv{#1}}}}
\newcommand{\afterName}{\texttt{after}}
\newcommand{\afterC}[2]{\textup{\afterName}\,{#1}\,\textup{\texttt{:}}\,{#2}}
\newcommand{\confG}[1][]{\Gamma_{#1}}
\newcommand{\confGi}[1][]{\Gamma'_{#1}}
\newcommand{\gnil}{\mathbf{0}}
\newcommand{\confDep}[3][]{\langle {#2}, {#3} \rangle_{#1}}
\newcommand{\confAuth}[3][]{{#2} \rhd_{#1} {#3}}
\newcommand{\labAuth}[3][]{{#2} \rhd_{#1} {\ifdots{#3}{\cdots}{#3}}} 
\newcommand{\labS}[1][]{\alpha_{#1}}        
\newcommand{\labC}{\gamma}          
\newcommand{\confS}[1]{\confG[{#1}]}
\newcommand{\runnameS}{\mathcal{S}}
\newcommand{\runnameC}{\mathcal{C}}
\newcommand{\runS}[1][]{\runnameS_{#1}}
\newcommand{\runSi}[1][]{\runnameS'_{#1}}
\newcommand{\runSii}[1][]{\runnameS''_{#1}}
\newcommand{\runC}[1][]{\runnameC_{#1}}
\newcommand{\runCi}[1][]{\runnameC'_{#1}}
\newcommand{\runCii}[1][]{\runnameC''_{#1}}
\newcommand{\stratS}[1]{\Sigma_{#1}^{\it s}}
\newcommand{\stratC}[1]{\Sigma_{#1}^{\it c}}
\newcommand{\stratSSet}{\mathbf{\Sigma}^{\it s}} 
\newcommand{\stratCSet}{\mathbf{\Sigma}^{\it c}} 
\newcommand{\stratMap}{\aleph}
\newcommand{\coher}[5]{\mathit{coher}({#1},{#2},{#3},{#4},{#5})}
\newcommand{\coherRel}[3]{{#1} \sim_{#3} {#2}}
\newcommand{\txMapC}{\mathit{txout}}
\newcommand{\tkMapC}{\mathit{tkid}}
\newcommand{\txMapCi}{\mathit{txout}'}
\newcommand{\tkMapCi}{\mathit{tkid}'}
\newcommand{\burnMapC}{\mathit{txburn}}
\newcommand{\burnMapCi}{\mathit{txburn}'}
\newenvironment{nscenter}
 {\parskip=0pt\par\nopagebreak\centering}
 {\parskip=2pt\par\noindent} 
\renewcommand\paragraph{\@startsection{paragraph}{4}{\z@}%
  {2.25ex \@plus 1ex \@minus .2ex}%
  {-0.75em}%
  {\normalfont\normalsize\bfseries}}
\begin{document}

\title{Computationally sound Bitcoin tokens}

\iftoggle{anonymous}{
\author{
}
}{
\author{
  \IEEEauthorblockN{Massimo Bartoletti}
  \IEEEauthorblockA{\textit{Universit\`a degli Studi di Cagliari}}
  Cagliari, Italy \\                                                                
  \texttt{bart@unica.it}
  \and
  \IEEEauthorblockN{Stefano Lande}
  \IEEEauthorblockA{\textit{Universit\`a degli Studi di Cagliari}}
  Cagliari, Italy \\                                                                
  \texttt{lande@unica.it}
  \and
  \IEEEauthorblockN{Roberto Zunino}
  \IEEEauthorblockA{\textit{Universit\`a degli Studi di Trento}}
  Trento, Italy \\                                                                
  \texttt{roberto.zunino@unitn.it}
}
}

\maketitle

\begin{abstract}
  We propose a secure and efficient implementation of fungible tokens on Bitcoin.
  Our technique is based on a small extension of the Bitcoin script language,
  which allows the spending conditions in a transaction to depend on the neighbour transactions.
  We show that our implementation is computationally sound: 
  that is, adversaries can make tokens diverge from their ideal functionality only with negligible probability.
\end{abstract}

\begin{IEEEkeywords}
Bitcoin, tokens, neighbourhood covenants
\end{IEEEkeywords}

\section{Introduction}
\label{sec:intro}

One of the main applications of blockchain technologies is the exchange
of custom crypto-assets, called \emph{tokens}.
Token transfers currently involve \mbox{$\sim{50}\%$} of the 
transactions on the Ethereum blockchain~\cite{tokens},
and they are at the basis of many protocols built on top of that platform~\cite{Angelo20dapps,Frowis19fc}.
Broadly, tokens are classified as \emph{fungible} or \emph{non-fungible}.
Fungible tokens can be split into smaller units:
different units of the same token can be used interchangeably.
Further, users can join units of the same fungible token, 
and exchange them with other crypto-assets.
Instead, non-fungible tokens cannot be split or joined.

Historically, the first implementations of tokens were developed before Ethereum, on top of Bitcoin.
Some of them (\eg,~\cite{epobc}) used small bitcoin fractions to represent the token value;
some others (\eg, \cite{openAssets,colu,counterparty})
embedded the token value in other transaction fields~\cite{Bartoletti19grid}, 
to cope with the fluctuating bitcoin price. 
All these implementations have a common drawback:
the correctness of the token actions is \emph{not} guaranteed by the consensus protocol 
of the blockchain. 
In fact, the blockchain is used just to notarize the actions that manipulate tokens, 
but not to check that these actions are actually permitted. 
Typically, the owners of these tokens must resort to \emph{off-chain} mechanisms (\eg, trusted authorities) 
to have some guarantees on the correct use of tokens,
\eg that they are not double-spent, or that distinct tokens are not joined.

By contrast, modern blockchain platforms support \emph{on-chain} tokens,
whose correctness is guaranteed by the consensus protocol of the blockchain.
Some platforms (\eg, Algorand~\cite{algorandAssets}) natively support tokens, 
while some others (\eg, Ethereum) encode them as smart contracts.
Bitcoin, instead, does not support tokens natively, and its limited script language
is not expressive enough to implement them as smart contracts.
Since adding native tokens to Bitcoin appears to be out of reach, 
given the resilience of the Bitcoin community to radical changes~\cite{BIP0002},
the only viable alternative is to devise a small, 
efficient extension of the script language
which increases the expressiveness of Bitcoin enough to support tokens.

A recent Bitcoin Improvement Proposal (BIP 119~\cite{BIP119,Swambo20bitcoin})
aims at extending the Bitcoin script language with \emph{covenants},
a class of operators that allow a transaction 
to constrain how its funds can be used by the redeeming transactions.
Although these covenants enable a variety of use cases, 
\eg vaults, batched payments, and non-interactive payment channels~\cite{BIP119},
they are not expressive enough to implement 
fungible tokens in a practical way.
Roughly, the covenants proposed in the literature can
check that the token units are preserved upon $\splitOp$ actions, 
but they cannot ensure this property upon $\joinOp$ actions
(we describe these and other issues in~\Cref{sec:overview}).

In this work, we propose a variant of covenants,
named \emph{neighbourhood covenants},
which can inspect not only the redeeming transaction,
but also the siblings and the parent of the spent one. 
This extension preserves the basic UTXO design of Bitcoin,
adding only a few opcodes to its script language, which is
kept efficient, loop-free, and \emph{non} Turing-complete.
Still, neighbourhood covenants significantly increase the expressiveness of 
Bitcoin as a smart contracts platform, allowing to execute
arbitrary smart contracts by appending a \emph{chain} of transactions
to the blockchain. 
Technically, we prove that neighbourhood covenants make Bitcoin Turing-complete.

Although this expressiveness result is of theoretical interest,
in itself it does not enable an efficient implementation of tokens.
To recover efficiency, we implement token actions in a single, 
succinct script which exploits neighbourhood covenants.
We devote a large portion of the paper to establish the security of our construction: 
in brief, we define a symbolic model of token actions, 
and a computational model, 
where performing these actions corresponds to appending transactions to the Bitcoin blockchain.
Our main technical result is a \emph{computational soundness} theorem, 
which ensures that any execution in the computational model
has a corresponding execution in the symbolic one.
Therefore, we guarantee that a computational adversary 
cannot make the behaviour of tokens diverge from the behaviour
of the symbolic model.

\paragraph*{Contributions}

We summarise our contributions as follows:
\begin{itemize}

\item we introduce a symbolic model of fungible tokens,
which formalises their archetypal features:
their minting and burning, the split and join operations, 
and the exchange of tokens with other tokens or with bitcoins 
(\Cref{sec:tokens:s});
  
\item we propose neighbourhood covenants as a Bitcoin extension (\Cref{sec:covenants}), 
and we show that they make Bitcoin Turing-powerful (\Cref{th:covenants:turing-completeness}).
We then discuss how to efficiently implement them on Bitcoin; 

\item we exploit neighbourhood covenants to implement tokens on Bitcoin (\Cref{sec:tokens:c});

\item we introduce a computational model for Bitcoin and we prove the computational soundness of our token implementation (\Cref{th:computational-soundness} in~\Cref{sec:computational-soundness});

\item as an consequence, 
  we show that a value preservation property established in the symbolic model 
  can be lifted \emph{for free} to the computational model (\Cref{th:tokvals-tokvalc}).

\end{itemize}

\noindent
\iftoggle{arxiv}
{The proofs of our results are in the~\appendixname.}
{Due to space constraints, we provide the proofs of our statements in a separate technical report~\cite{BLZ20arxiv}.}

\section{Overview of the approach}
\label{sec:overview}

In this~\namecref{sec:overview} we summarize our approach:
in particular, we sketch our implementation of Bitcoin tokens,
motivating the use of neighbourhood covenants to guarantee their security.

\paragraph*{Tokens}

We propose a symbolic model of fungible tokens.
Since non-fungible tokens are the special case of fungible ones
where each token is generated exactly in one unit,
hereafter we consider the general case of fungible tokens.
The basic element of our model is the \emph{deposit},
\ie a term of the form:
\[
\confDep[\varX]{\pmvA}{\valV:\tokT}
\tag*{$(\valV \in \Nat)$}
\]
which represents the fact that 
a user $\pmvA$ owns $\valV$ units of a token $\tokT$,
where $\tokT$ may denote either user-defined tokens or bitcoins ($\BTC$).
The index $\varX$ uniquely identifies the term within a \emph{configuration},
\ie a composition of deposits, \eg:
\[
\confDep[\varX]{\pmvA}{1:\tokT}
\mid
\confDep[\varY]{\pmvA}{2:\tokT}
\mid
\confDep[\varZ]{\pmvB}{3:\BTC}
\]
We define a few actions to mint and manipulate tokens.
First, any user $\pmvA$ can mint $\valV$ units of a new token, 
spending a deposit of $0~\BTC$. 
Performing this action (say, with $\valV = 10$) 
is modelled as a state transition:
\begin{equation}
  \label{eq:overview:gen}
  \confDep[{\varX[0]}]{\pmvA}{0:\BTC} 
  \xrightarrow{\genOp} 
  \confDep[{\varX[1]}]{\pmvA}{10:\tokT} 
\end{equation}
The label $\genOp$ over the arrow records that the performed action
is a token minting.
The state transition~\eqref{eq:overview:gen} ensures that
the identifier $\varX[1]$ of the new deposit and
the identifier $\tokT$ of the minted token are \emph{fresh}.
After performing the action, $\pmvA$ owns a deposit of ten units of the token $\tokT$.
As said before, one of the peculiar properties of fungible tokens is that they can be $\splitOp$. 
When splitting her deposit in two smaller deposits, 
$\pmvA$ can choose the owner of one of the new deposits, \eg:
\begin{equation}
  \label{eq:overview:split}
  \confDep[{\varX[1]}]{\pmvA}{10:\tokT} 
  \xrightarrow{\splitOp} 
  \confDep[{\varX[2]}]{\pmvA}{8:\tokT} 
  \mid
  \confDep[{\varX[3]}]{\pmvB}{2:\tokT} 
\end{equation}

A user can transfer the ownership of any of her deposits to another user.
For instance, $\pmvA$ can $give$ her deposit $\varX[2]$ to $\pmvB$:
\begin{equation}
  \label{eq:overview:give}
  \confDep[{\varX[2]}]{\pmvA}{8:\tokT} 
  \xrightarrow{give} 
  \confDep[{\varX[4]}]{\pmvB}{8:\tokT} 
\end{equation}

After that, $\pmvB$ owns a total of $10$ units of $\tokT$ in two separate deposits,
one with $8$ units, and the other one with $2$ units.
This reflects the UTXO nature of Bitcoin:
by contrast, in account-based blockchains like Ethereum,
$\pmvB$ would have a single account storing 10 units of $\tokT$.
Now, $\pmvB$ can $\joinOp$ his two deposits, obtaining a single deposit with $10$ units of $\tokT$.
When performing the $\joinOp$ action, $\pmvB$ can also choose the owner of the new deposit,
in this case transferring it back to $\pmvA$:
\begin{equation}
  \label{eq:overview:join}
  \confDep[{\varX[4]}]{\pmvB}{8:\tokT} 
  \mid
  \confDep[{\varX[3]}]{\pmvB}{2:\tokT} 
  \xrightarrow{\joinOp} 
  \confDep[{\varX[5]}]{\pmvA}{10:\tokT} 
\end{equation}

A crucial property of the $\joinOp$ operation is that only deposits of the same 
token can be joined together. 
Thus, two deposits of $\tokT$ and $\tokTi$ with $\tokT \neq \tokTi$ cannot be joined:
\[
\confDep[{\varX[4]}]{\pmvB}{8:\tokT} 
\mid
\confDep[{\varX[6]}]{\pmvA}{2:\tokTi} 
\; \not\xrightarrow{\joinOp} 
\]

In this configuration, if both $\pmvA$ and $\pmvB$ agree, 
they can \emph{exchange} the ownership of their tokens: 
\[
\confDep[{\varX[4]}]{\pmvB}{8:\tokT} 
\mid
\confDep[{\varX[6]}]{\pmvA}{2:\tokTi} 
\; \xrightarrow{\exchangeOp} \;
\confDep[{\varX[7]}]{\pmvA}{8:\tokT} 
\mid
\confDep[{\varX[8]}]{\pmvB}{2:\tokTi} 
\]

The $\exchangeOp$ operation also supports the exchange between bitcoins and other tokens,
representing the trade of tokens. 
For instance, $\pmvA$ can buy $2$ units of $\tokTi$ from $\pmvB$ for $1 \BTC$:
\[
\confDep[{\varX[8]}]{\pmvB}{2:\tokTi} 
\mid
\confDep[{\varX[9]}]{\pmvA}{1:\BTC} 
\xrightarrow{\exchangeOp} 
\confDep[{\varX[10]}]{\pmvA}{2:\tokTi} 
\mid
\confDep[{\varX[11]}]{\pmvB}{1:\BTC} 
\]

Finally, a user can $\burnOp$ any of her deposits.
This is rendered as a state transition where the burnt deposits 
are no longer present in the target configuration.
For instance, starting from the configuration above,
$\pmvA$ can burn her $2$ units of $\tokTi$ in $\varX[10]$:
\[
\confDep[{\varX[10]}]{\pmvA}{2:\tokTi} 
\mid
\confDep[{\varX[11]}]{\pmvB}{1:\BTC} 
\xrightarrow{\burnOp}
\confDep[{\varX[11]}]{\pmvB}{1:\BTC} 
\]

\paragraph*{Bitcoin}

Although Bitcoin does not support user-defined tokens, 
it implements all the operations discussed above on its native crypto-currency.
Intuitively, each deposit corresponds to a transaction output,
and performing actions corresponds to appending a suitable transaction that redeems it.

For instance, minting bitcoins is obtained through coinbase transactions,
which are used in Bitcoin to pay rewards to miners.
We represent a coinbase transaction as follows:
\begin{nscenter}
  \small
  \begin{tabular}[t]{|l|}
    \hline
    \multicolumn{1}{|c|}{$\txT[{\ref{eq:overview:gen}}]$} \\
    \hline
    \txIn[1]{$\bot$} \\[1pt]
    \txWit[1]{$\bot$} \\[1pt]
    \txOut[1]{$\{ \txscript: \versig{\constPK[\pmvA]}{\rtx.\txWit[]{}},\, \txval: 10 \BTC \}$} \\[1pt]
    \hline
  \end{tabular}
\end{nscenter}

In general, the $\txIn{}$ field points to a previous transaction on the blockchain,
that the current one is trying to \emph{spend}.
Here, the ``undefined'' value $\bot$ characterizes $\txT[{\ref{eq:overview:gen}}]$ as a coinbase, 
since it mints bitcoins without spending any transaction.
The $\txOut{}$ field is a record, 
where $\txscript$ is a \emph{script},
and $\txval$ is the amount of bitcoins that will be redeemed 
by a subsequent transaction which points to $\txT[{\ref{eq:overview:gen}}]$ and satisfies its script.
Here, the script $\versig{\constPK[\pmvA]}{\rtx.\txWit[]{}}$ 
verifies a signature on the redeeming transaction ($\rtx$, excluding its $\txWit{}$ field) 
against $\pmvA$'s public key $\constPK[\pmvA]$.
This signature is retrieved from the $\txWit{}$ field of $\rtx$.
Since $\pmvA$ is the only user who can redeem $\txT[{\ref{eq:overview:gen}}]$, we can say that
$\txT[{\ref{eq:overview:gen}}]$ is the \emph{computational counterpart} of the deposit $\confDep[{\varX[1]}]{\pmvA}{10:\BTC}$.

To perform the $\splitOp$ action~\eqref{eq:overview:split} on $\tokT = \BTC$,
we can spend $\txT[{\ref{eq:overview:gen}}]$ with a transaction $\txT[{\ref{eq:overview:split}}]$ 
with \emph{two} outputs:
\begin{nscenter}
  \small
  \begin{tabular}[t]{|l|}
    \hline
    \multicolumn{1}{|c|}{$\txT[{\ref{eq:overview:split}}]$} \\
    \hline
    \txIn[1]{$(\txT[{\ref{eq:overview:gen}}],1)$} \\[1pt]
    \txWit[1]{$\sig{\constSK[\pmvA]}{\txT[1]}$} \\[1pt]
    \txOut[1]{$\{ \txscript: \versig{\constPK[\pmvA]}{\rtx.\txWit{}},\, \txval: 8 \BTC \}$} \\[1pt]
    \txOut[2]{$\{ \txscript: \versig{\constPK[\pmvB]}{\rtx.\txWit{}},\, \txval: 2 \BTC \}$} \\[1pt]
    \hline
  \end{tabular}
\end{nscenter}

The first output, that we denote by $(\txT[{\ref{eq:overview:split}}],1)$, corresponds to the deposit
$\confDep[{\varX[2]}]{\pmvA}{8:\BTC}$ in~\eqref{eq:overview:split}.
Instead, the output $(\txT[{\ref{eq:overview:split}}],2)$ corresponds to $\confDep[{\varX[3]}]{\pmvB}{2:\BTC}$.
These outputs can be spent independently.
For instance, performing the $\giveOp$ action in~\eqref{eq:overview:give}
corresponds to appending a transaction which spends $(\txT[{\ref{eq:overview:split}}],1)$:
\begin{nscenter}
  \small
  \begin{tabular}[t]{|l|}
    \hline
    \multicolumn{1}{|c|}{$\txT[{\ref{eq:overview:give}}]$} \\
    \hline
    \txIn[1]{$(\txT[{\ref{eq:overview:split}}],1)$} \\[1pt]
    \txWit[1]{$\sig{\constSK[\pmvA]}{\txT[{\ref{eq:overview:give}}]}$} \\[1pt]
    \txOut[1]{$\{ \txscript: \versig{\constPK[\pmvB]}{\rtx.\txWit{}},\, \txval: 8 \BTC \}$} \\[1pt]
    \hline
  \end{tabular}
\end{nscenter}

At this point, we have two unspent outputs on the blockchain: 
$(\txT[{\ref{eq:overview:split}}],2)$ and $(\txT[{\ref{eq:overview:give}}],1)$.
We can perform the $\joinOp$ action in~\eqref{eq:overview:join} by spending both of them
\emph{simultaneously} with the following transaction, which has \emph{two} inputs:
\begin{nscenter}
  \small
  \begin{tabular}[t]{|l|}
    \hline
    \multicolumn{1}{|c|}{$\txT[{\ref{eq:overview:join}}]$} \\
    \hline
    \txIn[1]{$(\txT[{\ref{eq:overview:split}}],2)$} 
    \hspace{40pt}
    \txIn[2]{$(\txT[{\ref{eq:overview:give}}],1)$} \\[1pt]
    \txWit[1]{$\sig{\constSK[\pmvB]}{\txT[{\ref{eq:overview:join}}]}$} 
    \hspace{20pt}
    \txWit[2]{$\sig{\constSK[\pmvB]}{\txT[{\ref{eq:overview:join}}]}$} \\[1pt]
    \txOut{$\{ \txscript: \versig{\constPK[\pmvB]}{\rtx.\txWit{}},\, \txval: 10 \BTC \}$} \\[1pt]
    \hline
  \end{tabular}
\end{nscenter}

\paragraph*{Implementing Bitcoin tokens  with covenants}

Although the Bitcoin script language is a bit more flexible than shown above,
it does not allow to implement \emph{on-chain} tokens.
One of the first techniques to embed on-chain tokens in 
an \emph{extended} version of Bitcoin was described in~\cite{Moser16bw}.
The technique relies on \emph{covenants}, an extension of Bitcoin scripts
which allows transactions to constrain the scripts of the redeeming ones.
For instance, let $\expe$ be an arbitrary script.
A transaction output containing the script:
\[
\expe \andE \verrec{\rtxo{}{n}}
\]
can only be redeemed by a transaction which makes $\expe$ evaluate to true,
and whose script in the $n$-th output is syntactically equal to $\expe \andE \verrec{\rtxo{}{n}}$.

Using covenants, we can mint a token by appending the transaction $\txT$ below,
where the extra field $\txarg$ is syntactic sugar for a \emph{sequence} 
of values accessible by the script
(we denote with $\seqat{\txarg}{i}$ the $i$-th element of this sequence):

\resizebox{0.95\columnwidth}{!}{
  \hspace{-10pt}
  \begin{nscenter}
    \small
    \begin{tabular}[t]{|l|}
      \hline
      \\[-9pt]
      \multicolumn{1}{|c|}{$\txT$} \\[-1pt]
      \hline
      $\cdots$ \\
      \txOut[1]{$\{ \txarg: \constPK[\pmvA], $} \\[1pt]
      \hspace{30pt} $\txscript: \versig{\seqat{\ctxo{\txarg}}{1}}{\rtx.\txWit{}} \andE$ 
      \hspace{0pt} \textcolor{Gray}{// verify signature} \\[1pt]
      \hspace{48pt} $\rtxo{\txval}{1} = 1 \andE$ 
      \hspace{39pt} \textcolor{Gray}{// preserve value} \\[1pt]
      \hspace{48pt} $\verrec{\rtxo{}{1}},$ 
      \hspace{57pt} \textcolor{Gray}{// preserve script} \\[1pt]
      \hspace{30pt} $\txval: 1 \BTC \}$ \\[1pt]
      \hline
    \end{tabular}
  \end{nscenter}
}

\bigskip
The $\txarg$ field identifies $\pmvA$ as the owner of the token:
to transfer the ownership to $\pmvB$, $\pmvA$ must spend $\txT$ with a transaction $\txTi$, 
setting its $\txarg$ to $\pmvB$'s public key.
For this to be possible, $\txTi$ must satisfy the conditions specified in $\txT$'s script:
\begin{inlinelist}
\item the $\txWit{}$ field must contain the signature of the current owner;
\item the output at index 1 must have $1 \BTC$ value, to preserve the value of the token;
\item the script at index 1 in $\txTi$ must be equal to that in $\txT$.
\end{inlinelist}
Once $\txTi$ is on the blockchain, $\pmvB$ can transfer the token to another user, 
by appending a transaction which redeems $\txTi$.

Note that the transaction $\txT$ above actually mints a \emph{non}-fungible token, 
which can be transferred from one user to another, but whose value cannot be $\splitOp$
(further, the token has a subtle flaw related to $\joinOp$ actions: we will say more on this).
The first step to turn the token into a fungible one is to support the $\splitOp$ action.
We can achieve this by adding a second element to the $\txarg$ sequence, to represent 
the number of token units deposited in the transaction output.
We can implement a splittable token as follows:

\resizebox{0.95\columnwidth}{!}{
  \hspace{-10pt}
  \begin{nscenter}
    \small
    \begin{tabular}[t]{|l|}
      \hline
      \\[-9pt]
      \multicolumn{1}{|c|}{$\txT[\!\splitOp]$} \\[-1pt]
      \hline
      $\cdots$ \\
      \txOut[1]{$\{ \txarg: \constPK[\pmvA] \, \valV, $} \\[1pt]
      \hspace{30pt} $\txscript: \versig{\seqat{\ctxo{\txarg}}{1}}{\rtx.\txWit{}} \andE$ \\[1pt]
      \hspace{48pt} $\seqat{\rtxo{\txarg}{1}}{2} + \seqat{\rtxo{\txarg}{2}}{2} = \seqat{\ctxo{\txarg}}{2} \andE$ \\[1pt] 
      \hspace{48pt} $\verrec{\rtxo{}{1}} \andE \verrec{\rtxo{}{2}} \andE$ \\[1pt]
      \hspace{48pt} $\outlen{\rtx} = 2$ \\[1pt]
      \hspace{30pt} $\txval: \cdots \}$ \\[1pt]
      \hline
    \end{tabular}
  \end{nscenter}
}

\medskip
The last two lines of the script ensure that
any transaction which redeems $\txT[\!\splitOp]$ has exactly two outputs, 
each one with the same script of $\txT[\!\splitOp]$.
The second line ensures that the $\splitOp$ preserves the number of token units
(here, $\txval$ is immaterial).

Now, let $\expe[\splitOp]$ be the script used in $\txT[\!\splitOp]$.
To extend the token with the $\joinOp$ action, first we need to add a third element to the 
$\txarg$ sequence, to encode the action performed by a transaction
(say, $G$ for $\genOp$, $S$ for $\splitOp$, and $J$ for $\joinOp$).
The extended script could have the following form:
\[
\expe \; \eqdef \;
\ifE{\seqat{\rtxo{\txarg}{1}}{3 = S}}{\expe[\splitOp]}{\expe[\joinOp]}
\]
where $\expe[\joinOp]$ implements the join functionality, \ie:
\begin{inlinelist}
\item verify the signature on the redeeming transaction; 
\item check that the redeeming transaction has exactly \emph{two} inputs and one output;
\item \label{item:overview:join1:tkval} 
  ensure that the token units are preserved;
\item \label{item:overview:join1:verrec} 
  ensure that the joined transactions represent units of the \emph{same} token.
\end{inlinelist}

For instance, consider the transactions in~\Cref{fig:overview:join1},
where $\txT[4]$ and $\txT[3]$ represent, respectively, 
the deposits $\confDep[{\varX[4]}]{\pmvB}{8:\tokT}$ and $\confDep[{\varX[3]}]{\pmvB}{2:\tokT}$.
To perform the $\joinOp$ action in~\eqref{eq:overview:join}, we must spend
$\txT[4]$ and $\txT[3]$ with the transaction $\txT[5]$:
this requires to satisfy the script $\expe$ in $\txT[4]$ and $\txT[3]$.
For condition \ref{item:overview:join1:tkval},
the script must ensure that the $10$ token units redeemed by $\txT[5]$ 
are the sum of the $8$ units in $\txT[4]$ and the $2$ units in $\txT[3]$.
For condition \ref{item:overview:join1:verrec}, the script in $\txT[4]$
should check that it is the same as that in $\txT[3]$, and \emph{viceversa}.
Hence, to implement conditions \ref{item:overview:join1:tkval}-\ref{item:overview:join1:verrec},
the script in a transaction output 
must be able to access the fields in its \emph{sibling}, \ie the transaction output which is redeemed together
(\eg, $(\txT[3],1)$ is the sibling of $(\txT[4],1)$ when appending $\txT[5]$).
However, neither Bitcoin nor its extensions with covenants \cite{BLZ20isola,Moser16bw,Oconnor17bw,Swambo20bitcoin}
allow scripts to access the siblings.

\newcommand{\txFigJoinA}{%
  \begin{tabular}[t]{|l|}
    \hline
    \multicolumn{1}{|c|}{$\txT[4]$} \\
    \hline
    $\cdots$ \\
    \txOut[1]{$\!\!\{ \txarg: \!\constPK[\pmvB] \, 8 \, S,\; \txscript: \expe,  \cdots \}\!\!\!$} \\[1pt]
    \txOut[2]{$\cdots$} \\
    \hline
  \end{tabular}
}

\newcommand{\txFigJoinB}{%
  \begin{tabular}[t]{|l|}
    \hline
    \multicolumn{1}{|c|}{$\txT[3]$} \\
    \hline
    $\cdots$ \\
    \txOut[1]{$\!\!\{ \txarg: \constPK[\pmvB] \, 2 \, S,\; \txscript: \expe, \cdots\}\!\!\!$} \\[1pt]
    \txOut[2]{$\cdots$} \\
    \hline
  \end{tabular}
}

\newcommand{\txFigJoinC}{%
  \begin{tabular}[t]{|l|}
    \hline
    \multicolumn{1}{|c|}{$\txT[5]$} \\
    \hline
    \txIn[1]{$(\txT[4],1)$} \hspace{60pt}
    \txIn[2]{$(\txT[3],1)$} \\[1pt]
    \txOut[1]{$\{ \txarg: \constPK[\pmvA] \, 10 \, J, \; \txscript: \expe,\; \txval: \cdots \}$} \\
    \hline
  \end{tabular}
}

\begin{figure}[t]
  \centering
  \resizebox{\columnwidth}{!}{
    \begin{tikzpicture} 
      \node at (0,2.1) {\txFigJoinA};
      \node at (5.7,2.1) {\txFigJoinB};
      \node at (2.9,0.2) {\txFigJoinC};
      \draw [<-] (-0.5,0.3) -- (-2,0.3) -- (-2,1.24);
      \draw [<-] (6.3,0.3) -- (7.8,0.3) -- (7.8,1.24);
    \end{tikzpicture}
  } 
  \caption{A transaction $\txT[5]$ attempting to join $\txT[4]$ and $\txT[3]$.}
  \label{fig:overview:join1}
\end{figure}

\paragraph*{An insecure implementation of join}

To implement the $\joinOp$ action,
we start by extending Bitcoin scripts with an operator to access the sibling transaction outputs:
\[
\stxo{}{n}
\; \eqdef \;
\text{output redeemed by the $n$-th input of $\rtx$}
\]
Using this new operator, we can encode the conditions
\ref{item:overview:join1:tkval} and \ref{item:overview:join1:verrec}
in $\expe[\joinOp]$ as follows:
\begin{align*}
  & \seqat{\rtxo{\txarg}{1}}{2} = \seqat{\stxo{\txarg}{1}}{2} + \seqat{\stxo{\txarg}{2}}{2}
  && \text{\ref{item:overview:join1:tkval}}
  \\
  & \verrec{\stxo{}{1}} \andE \verrec{\stxo{}{2}}
  && \text{\ref{item:overview:join1:verrec}}
\end{align*}

Although this implementation of $\expe[\joinOp]$ correctly encodes the conditions,
it introduces a security vulnerability:
an adversary can join two deposits of \emph{different} tokens.
The attack is exemplified in~\Cref{fig:overview:join2}.
The transactions $\txT[\pmvA]$ and $\txT[\pmv{M}]$ mint $10$ units of \emph{different} tokens,
and transaction $\txT[3]$ joins them into a single deposit of the \emph{same} token.
Ideally, to counter this attack, $\expe[\joinOp]$ should check not only the sibling,
but also its ancestors until the minting transaction, and verify that it corresponds to
the minting ancestor of the current transaction output.
Although this would be possible by adding script operators that can go up the transaction graph
at an arbitrary depth, this would be highly inefficient from the point of view of miners,
who should record the \emph{whole} transaction graph, instead of just the set of unspent transactions (UTXO).

\newcommand{\txFigJoinAttackGenA}{%
  \begin{tabular}[t]{|l|}
    \hline
    \multicolumn{1}{|c|}{$\txT[A]$} \\
    \hline
    $\cdots$ \\
    \txOut[1]{$\{ \txarg: \!\constPK[\pmvA] \, 10 \, G, $} \\[1pt]
    \hspace{34pt} $\txscript: \expe,\, \txval: \cdots \}\;$ \\
    \hline
  \end{tabular}
}

\newcommand{\txFigJoinAttackGenM}{%
  \begin{tabular}[t]{|l|}
    \hline
    \multicolumn{1}{|c|}{$\txT[M]$} \\
    \hline
    $\cdots$ \\
    \txOut[1]{$\{ \txarg: \!\constPK[\pmv{M}] \, 10 \, G, $} \\[1pt]
    \hspace{34pt} $\txscript: \expe,\, \txval: \cdots \}\;$ \\
    \hline
  \end{tabular}
}

\newcommand{\txFigJoinAttackSplitA}{%
  \begin{tabular}[t]{|l|}
    \hline
    \multicolumn{1}{|c|}{$\txT[1]$} \\
    \hline
    \txIn[1]{$(\txT[\pmvA],1)$} \\[1pt]
    \txOut[1]{$\{ \txarg: \!\constPK[\pmvA] \, 8 \, S, $} \\[1pt]
    \hspace{34pt} $\txscript: \expe,\, \txval: \cdots \}$ \\[1pt]
    \txOut[2]{$\cdots$} \\
    \hline
  \end{tabular}
}

\newcommand{\txFigJoinAttackSplitM}{%
  \begin{tabular}[t]{|l|}
    \hline
    \multicolumn{1}{|c|}{$\txT[2]$} \\
    \hline
    \txIn[1]{$(\txT[\pmv{M}],1)$} \\[1pt]
    \txOut[1]{$\{ \txarg: \!\constPK[\pmvA] \, 7 \, S, $} \\[1pt]
    \hspace{34pt} $\txscript: \expe,\, \txval: \cdots \}$ \\[1pt]
    \txOut[2]{$\cdots$} \\
    \hline
  \end{tabular}
}

\newcommand{\txFigJoinAttackJoinM}{%
  \begin{tabular}[t]{|l|}
    \hline
    \multicolumn{1}{|c|}{$\txT[3]$} \\
    \hline
    \txIn[1]{$(\txT[1],1)$} \hspace{60pt}
    \txIn[2]{$(\txT[2],1)$} \\[1pt]
    \txOut[1]{$\{ \txarg: \!\constPK[\pmvA] \, 15 \, J,\; \txscript: \expe,\, \txval: \cdots \}$} \\
    \hline
  \end{tabular}
}

\begin{figure}[t]
  \centering
  \resizebox{0.9\columnwidth}{!}{
    \begin{tikzpicture} 
      \node at (0,4.5) {\txFigJoinAttackGenA};
      \node at (5.7,4.5) {\txFigJoinAttackGenM};
      \draw [->] (0,3.64) -- (0,3.25);
      \draw [->] (5.75,3.64) -- (5.75,3.25);
      \node at (0,2.1) {\txFigJoinAttackSplitA};
      \node at (5.7,2.1) {\txFigJoinAttackSplitM};
      \node at (2.9,-0.1) {\txFigJoinAttackJoinM};
      \draw [<-] (-0.5,-0.1) -- (-1.5,-0.1) -- (-1.5,0.97);
      \draw [<-] (6.3,-0.1) -- (7.3,-0.1) -- (7.3,0.97);
    \end{tikzpicture}
  } 
  \caption{A $\joinOp$ attack merging two different tokens.}
  \label{fig:overview:join2}
\end{figure}

\paragraph*{Neighbourhood covenants}

To address this issue, we use an operator which can go up the transaction graph only one level,
\ie up to the \emph{parent} of the current transaction.
Hence, implementing our Bitcoin extension with \emph{neighbourhood covenants}
requires miners to just record the UTXOs and their parents. 
By exploiting this new covenant, we can thwart the $\joinOp$ attack of~\Cref{fig:overview:join2},
and eventually obtain a secure and efficient implementation of fungible tokens.
We now sketch the script $\tokScript$ which implements tokens.
First, we add a fourth element to the $\txarg$ sequence, to record in each transaction output
the \emph{identifier} of the token deposited in that output.
As an identifier, we use the hash of the \emph{parent} of the minting transaction,
which we access through the script $\txid{\ptxo{}{1}}$.
When the script in a minting transaction 
(\eg, $\txT[\pmvA]$ and $\txT[\pmv{M}]$ in \Cref{fig:overview:join2})
is evaluated, 
it ensures that the $\txarg$ field of the minting transaction
actually contains the token identifier:
\[
\expe[\genOp] \eqdef \cdots \andE \seqat{\ctxo{\txarg}}{4} = \txid{\ptxo{}{1}}
\]
Then, the sub-scripts corresponding to all other token actions
check that the redeeming transaction preserves the token identifier.
For instance, in the $\joinOp$ sub-script,
besides checking conditions \ref{item:overview:join1:tkval} and \ref{item:overview:join1:verrec}
as shown before, we add the condition:
\[
\expe[\joinOp] \eqdef \cdots \andE \seqat{\ctxo{\txarg}}{4} = \seqat{\txof{\rtxo{}{1}}{\txarg}}{4}
\]
In~\Cref{fig:overview:join3} we show how this resolves the attack of~\Cref{fig:overview:join2}.
In order to append the malicious $\joinOp$ transaction $\txT[3]$,
we must satisfy the script $\tokScript$ in both $\txT[1]$ and $\txT[2]$.
These scripts check that the $tokid$ in the redeeming transaction $\txT[3]$
is equal to the identifiers of the two branches, $\hashSem{\txTi[\pmvA],1}$ and $\hashSem{\txTi[\pmv{M}],1}$:
by collision resistance of the hash function, this is not possible.

\newcommand{\txFigJoinOkGenA}{%
  \begin{tabular}[t]{|l|}
    \hline
    \multicolumn{1}{|c|}{$\txT[A]$} \\
    \hline
    \txIn[1]{$(\txTi[\pmvA],1)$} \\
    \txOut[1]{$\{ \txarg: \!\constPK[\pmvA] \, 10 \, G\, \hashSem{\txTi[\pmvA],1},\!\!\!$} \\[1pt]
    \hspace{34pt} $\txscript: \tokScript,\, \txval: \cdots \}\;$ \\
    \hline
  \end{tabular}
}

\newcommand{\txFigJoinOkGenM}{%
  \begin{tabular}[t]{|l|}
    \hline
    \multicolumn{1}{|c|}{$\txT[M]$} \\
    \hline
    \txIn[1]{$(\txTi[\pmv{M}],1)$} \\
    \txOut[1]{$\{ \txarg: \!\constPK[\pmv{M}] \, 10 \, G\, \hashSem{\txTi[\pmv{M}],1},\!$} \\[1pt]
    \hspace{34pt} $\txscript: \tokScript,\, \txval: \cdots \}\;$ \\
    \hline
  \end{tabular}
}

\newcommand{\txFigJoinOkSplitA}{%
  \begin{tabular}[t]{|l|}
    \hline
    \multicolumn{1}{|c|}{$\txT[1]$} \\
    \hline
    \txIn[1]{$(\txT[\pmvA],1)$} \\[1pt]
    \txOut[1]{$\{ \txarg: \constPK[\pmvA] \, 8 \, S \, \hashSem{\txTi[\pmvA],1},$} \\[1pt]
    \hspace{34pt} $\txscript: \tokScript,\, \txval: \cdots \}$ \\[1pt]
    \txOut[2]{$\cdots$} \\
    \hline
  \end{tabular}
}

\newcommand{\txFigJoinOkSplitM}{%
  \begin{tabular}[t]{|l|}
    \hline
    \multicolumn{1}{|c|}{$\txT[2]$} \\
    \hline
    \txIn[1]{$(\txT[\pmv{M}],1)$} \\[1pt]
    \txOut[1]{$\{ \txarg: \constPK[\pmvA] \, 7 \, S \, \hashSem{\txTi[\pmv{M}],1},\;\;\,$} \\[1pt]
    \hspace{34pt} $\txscript: \tokScript,\, \txval: \cdots \}$ \\[1pt]
    \txOut[2]{$\cdots$} \\
    \hline
  \end{tabular}
}

\newcommand{\txFigJoinOkJoinM}{%
  \begin{tabular}[t]{|l|}
    \hline
    \multicolumn{1}{|c|}{$\txT[3]$} \\
    \hline
    \txIn[1]{$(\txT[1],1)$} \hspace{100pt}
    \txIn[2]{$(\txT[2],1)$} \\[1pt]
    \txOut[1]{$\{ \txarg: \!\constPK[\pmvA] \, 15 \, J \, tokid,\; \txscript: \tokScript,\; \txval: \cdots \}$} \\
    \hline
  \end{tabular}
}

\begin{figure}[t]
  \centering
  \resizebox{\columnwidth}{!}{
    \begin{tikzpicture} 
      \node at (0,4.5) {\txFigJoinOkGenA};
      \node at (5.7,4.5) {\txFigJoinOkGenM};
      \draw [->] (0,3.64) -- (0,3.25);
      \draw [->] (5.75,3.64) -- (5.75,3.25);
      \node at (0,2.1) {\txFigJoinOkSplitA};
      \node at (5.7,2.1) {\txFigJoinOkSplitM};
      \node at (2.9,-0.1) {\txFigJoinOkJoinM};
      \draw [<-] (-1.17,-0.1) -- (-2,-0.1) -- (-2,0.97);
      \draw [<-] (7,-0.1) -- (7.8,-0.1) -- (7.8,0.97);
      \draw [line width=0.5mm, red] (7.6,0.6) -- (8.0,0.2);
      \draw [line width=0.5mm, red] (7.6,0.2) -- (8.0,0.6);
    \end{tikzpicture}
  } 
  \caption{Thwarting a $\joinOp$ attack.}
  \label{fig:overview:join3}
\end{figure}

\paragraph*{Thwarting forgery attacks}

Note that minting a token amounts to appending a $\genOp$ transaction.
For instance, in~\Cref{fig:overview:gen-attack},
$\pmvA$ appends the $\genOp$ transaction $\txT[\pmvA]$,
which mints 10 units of a fresh token
with identifier $\hashSem{\txTi[\pmvA],1}$.
To validate this operation, the script $\btcScript$ in
$\txTi[\pmvA]$ is executed, 
verifying only that $\txT[\pmvA]$ is signed by $\pmvA$.
Consequently, when minting a token, $\pmvA$ can choose arbitrary 
values for the other fields of the transaction $\txT[\pmvA]$:
in particular, $\pmvA$ could choose a token identifier
(\ie, the value in $\txarg.4$) which is different from the correct one.

For instance, \Cref{fig:overview:gen-attack} shows
the adversary $\pmv{M}$ \emph{forging} 10 additional units of the token
with identifier $\hashSem{\txTi[\pmvA],1}$.
This is achieved by appending the transaction $\txT[\pmv{M}]$, 
which spends a standard transaction $\txTi[\pmv{M}]$,
and setting its $\txarg.4$ field to $\hashSem{\txTi[\pmvA],1}$.
However, these forged token units are \emph{unspendable}:
when $\pmv{M}$ attempts to spend these units
by appending $\txT[2]$, the $\tokScript$ script in $\txT[\pmv{M}]$
is finally executed.
There, the check $\seqat{\ctxo{\txarg}}{4} = \txid{\ptxo{}{1}}$
in $\expe[\genOp]$ fails,
because $\seqat{\ctxo{\txarg}}{4}$ evaluates to $\hashSem{\txTi[\pmvA],1}$,
while $\txid{\ptxo{}{1}}$ evaluates to $\hashSem{\txTi[\pmv{M}],1}$.
In general, although forging tokens is unavoidable,
the script $\tokScript$ ensures that forged tokens are unspendable,
so making forgery immaterial.


\newcommand{\txFigForgeOkDepA}{%
  \begin{tabular}[t]{|l|}
    \hline
    \multicolumn{1}{|c|}{$\txTi[A]$} \\
    \hline
    \txIn[1]{$\cdots$} \\
    \txOut[1]{$\{ \txarg: \!\constPK[\pmvA]$} \\[1pt]
    \hspace{34pt} $\txscript: \versig{\seqat{\ctxo{\txarg}}{1}}{\rtx.\txWit{}} \!\!\!$ \\
    \hspace{34pt} $\txval: \cdots \}\;$ \\
    \hline
  \end{tabular}
}

\newcommand{\txFigForgeOkDepM}{%
  \begin{tabular}[t]{|l|}
    \hline
    \multicolumn{1}{|c|}{$\txTi[\pmv{M}]$} \\
    \hline
    \txIn[1]{$\cdots$} \\
    \txOut[1]{$\{ \txarg: \!\constPK[\pmv{M}]$} \\[1pt]
    \hspace{34pt} $\txscript: \versig{\seqat{\ctxo{\txarg}}{1}}{\rtx.\txWit{}} \!\!\!$ \\
    \hspace{34pt} $\txval: \cdots \}\;$ \\
    \hline
  \end{tabular}
}

\newcommand{\txFigForgeOkGenA}{%
  \begin{tabular}[t]{|l|}
    \hline
    \multicolumn{1}{|c|}{$\txT[\pmvA]$} \\
    \hline
    \txIn[1]{$(\txTi[\pmvA],1)$} \\
    \txOut[1]{$\{ \txarg: \!\constPK[\pmvA] \, 10 \, G\, \hashSem{\txTi[\pmvA],1},\!\!\!\!$} \\[1pt]
    \hspace{34pt} $\txscript: \tokScript,\, \txval: \cdots \}\;$ \\
    \hline
  \end{tabular}
}

\newcommand{\txFigForgeOkGenM}{%
  \begin{tabular}[t]{|l|}
    \hline
    \multicolumn{1}{|c|}{$\txT[M]$} \\
    \hline
    \txIn[1]{$(\txTi[\pmv{M}],1)$} \\
    \txOut[1]{$\{ \txarg: \!\constPK[\pmv{M}] \, 10 \, G\, \hashSem{\txTi[\pmv{A}],1},\!\!\!\!$} \\[1pt]
    \hspace{34pt} $\txscript: \tokScript,\, \txval: \cdots \}\;$ \\
    \hline
  \end{tabular}
}

\newcommand{\txFigForgeOkSplitA}{%
  \begin{tabular}[t]{|l|}
    \hline
    \multicolumn{1}{|c|}{$\txT[1]$} \\
    \hline
    \txIn[1]{$(\txT[\pmvA],1)$} \\[1pt]
    \txOut[1]{$\{ \txarg: \constPK[\pmvA] \, 8 \, S \, \hashSem{\txTi[\pmvA],1},$} \\[1pt]
    \hspace{34pt} $\txscript: \tokScript,\, \txval: \cdots \}$ \\[1pt]
    \txOut[2]{$\cdots$} \\
    \hline
  \end{tabular}
}

\newcommand{\txFigForgeOkSplitM}{%
  \begin{tabular}[t]{|l|}
    \hline
    \multicolumn{1}{|c|}{$\txT[2]$} \\
    \hline
    \txIn[1]{$(\txT[\pmv{M}],1)$} \\[1pt]
    \txOut[1]{$\{ \txarg: \constPK[\pmv{M}] \, 7 \, S \, \hashSem{\txTi[\pmv{A}],1},\;\;\,$} \\[1pt]
    \hspace{34pt} $\txscript: \tokScript,\, \txval: \cdots \}$ \\[1pt]
    \txOut[2]{$\cdots$} \\
    \hline
  \end{tabular}
}

\begin{figure}[t]
  \resizebox{1.03\columnwidth}{!}{
    \begin{tikzpicture} 
      \node at (0,7) {\txFigForgeOkDepA};
      \node at (6.05,7) {\txFigForgeOkDepM};
      \draw [->] (0,5.93) -- (0,5.39);
      \draw [->] (5.9,5.91) -- (5.9,5.39);
      \node at (0,4.5) {\txFigForgeOkGenA};
      \node at (6,4.5) {\txFigForgeOkGenM};
      \draw [->] (0,3.63) -- (0,3);
      \draw [->] (5.9,3.63) -- (5.9,3);
      \node at (0,1.85) {\txFigForgeOkSplitA};
      \node at (6,1.85) {\txFigForgeOkSplitM};
      \draw [line width=0.5mm, red] (5.7,3.5) -- (6.1,3.1);
      \draw [line width=0.5mm, red] (5.7,3.1) -- (6.1,3.5);
    \end{tikzpicture}
  } 
  \caption{Thwarting a forgery attack.}
  \label{fig:overview:gen-attack}
\end{figure}

\section{A symbolic model of tokens}
\label{sec:tokens:s}

Let $\pmvA, \pmvB, \ldots$ range over \emph{users},
and let $\tokT, \tokTi, \ldots$ range over \emph{tokens},
encompassing both user-defined ones and bitcoins ($\BTC$).
A term
\(
\confDep[\varX]{\pmvA}{\valV:\tokT}
\)
represents a \emph{deposit} of $\valV \in \Nat$ units of the token $\tokT$ owned by $\pmvA$ 
(the index $\varX$ is an unique identifier of the deposit).
A term
\(
\confAuth[\varX]{\pmvA}{\labS}
\)
represents $\pmvA$'s \emph{authorization} to perform the \emph{action} $\labS$
on the deposit $\varX$.
The possible token actions are the following:
\begin{itemize}

\item $\actGen{\varX}{\valV}$ represents the act of spending a bitcoin deposit $\varX$
  to mint $\valV$ units of a new token.
  The owner of these units is the user who owned the deposit $\varX$.

\item $\actBurn{\vec{\varX}}{\varY}$ represents the act of destroying a sequence of deposits $\vec{\varX}$,
moving them to an unspendable deposit $\varY$.

\item $\actSplit{\varX}{\valV}{\pmvB}$ represents the act of splitting a deposit $\varX$
(say, containing $\valV+\valVi$ units of a token $\tokT$) in two deposits of $\tokT$.
The first one of these deposits is owned by the same owner of $\varX$, and contains $\valV$ token units.
The second one if owned by $\pmvB$, and contains the remaining $\valVi$ units.

\item $\actJoin{\varX}{\varY}{\pmvC}$ represents the joining of two deposits $\varX$ and $\varY$
of the same token, into a new deposit, owned by $\pmvC$.

\item $\actExchange{\varX}{\varY}$ represents the act of atomically exchanging
the owners of the two deposits $\varX$ and $\varY$ (not both of bitcoins).
In particular, when one of the deposits stores $\BTC$,
this action represents buying/selling tokens for bitcoins.

\item $\actGive{\varX}{\pmvB}$ represents a donation of the deposit $\varX$ to $\pmvB$.

\end{itemize}

Users follow a common pattern to perform token actions:
\begin{inlinelist}
\item first, the involved users grant their authorization on the action;
\item once all the needed authorizations have been granted, the action can actually be performed. 
\end{inlinelist}

A \emph{configuration} $\confG$ is a compositions of deposits and authorizations.
We assume that configurations form a commutative monoid under the composition operator $\mid$,
and we use $\gnil$ to denote the empty configuration.
We require that
if $\confDep[\varX]{\pmvA}{\valV:\tokT}$ and $\confDep[\varXi]{\pmvB}{\valVi:\tokTi}$ both occur in $\confG$,
then $\varX \neq \varXi$.
We define a transition semantics between configurations in~\Cref{fig:tokens-s:semantics}.
Transitions are decorated with labels, which describe the performed actions.
The rules for granting authorizations (\Cref{fig:tokens-s:auth:semantics}
are straightforward.

\begin{figure*}
  \centering
  \small
  \begin{tabular}{c}
    \(
    \irule
    {
    \confG = \confAuth[\varX]{\pmvA}{\actGen{\varX}{\valV}} \mid \confGi \quad
    \valV > 0 \quad
    y,\tokT \; \text{fresh}
    }
    {\confDep[\varX]{\pmvA}{0:\BTC} 
    \mid \confG
    \xrightarrow{\actGen{\varX}{\valV}} 
    \confDep[\varY]{\pmvA}{\valV:\tokT} 
    \mid \confGi
    }
    \genRule[small]
    \)
    \hspace{0pt}
    \(
    \irule
    { 
    \confG = \big( \mmid_{i \in 1..n} \, \confAuth[{\varX[i]}]{\pmvA[i]}{\actBurn{x_1 \cdots x_n}{y}} \big) \mid \confGi \quad
    n=1 \lor (n \geq 1 \land \forall i: \tokT[i]=\BTC)
    }
    {\big( \mmid_{i \in 1..n} \confDep[x_i]{\pmvA[i]}{\valV[i]:\tokT[i]} \big) 
    \mid \confG
    \xrightarrow{\actBurn{x_1 \cdots x_n}{y}}
    \confGi}
    \burnRule[small]
    \)
    \\[20pt]
    \(
    \irule
    {
    \confG = \confAuth[\varX]{\pmvA}{\actSplit{\varX}{\valV}{\pmvB}} \mid \confGi \quad
    \valV,\valVi \geq 0 \quad
    y,y' \; \text{fresh}
    }
    {\confDep[x]{\pmvA}{(\valV + \valVi):\tokT} \mid \confG
    \xrightarrow{\actSplit{\varX}{\valV}{\pmvB}}
    \confDep[y]{\pmvA}{\valV:\tokT} \mid \confDep[y']{\pmvB}{\valVi:\tokT}
    \mid \confGi
    }
    \splitRule[small]
    \)
    \hspace{0pt}
    \(
    \irule
    {
    \confG = \confAuth[\varX]{\pmvA}{\actJoin{\varX}{\varY}{\pmvC}} \mid \confAuth[\varY]{\pmvB}{\actJoin{\varX}{\varY}{\pmvC}} \mid \confGi \quad
    z \; \text{fresh}
    }
    {\confDep[x]{\pmvA}{\valV:\tokT} \mid \confDep[y]{\pmvB}{\valVi:\tokT} \mid \confG
    \xrightarrow{\actJoin{x}{y}{\pmvC}} 
    \confDep[z]{\pmvC}{(\valV + \valVi):\tokT} \mid \confGi
    }
    \joinRule[small]
    \)
    \\[20pt]
    \(
    \irule
    {
    \confG = \confAuth[\varX]{\pmvA}{\actExchange{x}{y}} \mid \confAuth[\varY]{\pmvB}{\actExchange{x}{y}} \mid \confGi \quad
    \tokT \neq \BTC \quad 
    x',y' \; \text{fresh} 
    }
    {\confDep[x]{\pmvA}{\valV:\tokT} \mid \confDep[y]{\pmvB}{\valVi:\tokTi} 
    \mid \confG
    \xrightarrow{\actExchange{x}{y}} 
    \confDep[x']{\pmvA}{\valVi:\tokTi} \mid \confDep[y']{\pmvB}{\valV:\tokT}
    \mid \confGi
    }
    \exchangeRule[small]
    \)
    \qquad
    \(
    \irule
    {
    \confG = \confAuth[\varX]{\pmvA}{\actGive{x}{\pmvB}} \mid \confGi \quad
    y \; \text{fresh} 
    }
    {\confDep[x]{\pmvA}{\valV:\tokT}
    \mid \confG
    \xrightarrow{\actGive{x}{\pmvB}} 
    \confDep[y]{\pmvB}{\valV:\tokT}
    \mid \confGi
    }
    \giveRule[small]
    \)
  \end{tabular}
  \caption{Semantics of token actions.}
  \label{fig:tokens-s:semantics}
\end{figure*}

Rule \genRule consumes a bitcoin deposit $\varX$ owned by $\pmvA$ 
to generate $\valV$ units of a new token $\tokT$, which are stored in a fresh deposit $\varY$.
Note that $\pmvA$'s authorization is required to perform the action.
For simplicity, we assume that minting tokens has no cost:
it would be straightforward to adapt the rule to require a minting fee.
Rule \burnRule removes from the configuration a single token deposit 
(when $n=1$ and $\tokT[1] \neq \BTC$), 
or atomically removes a sequence of bitcoin deposits (when $\tokT[i] = \BTC$ for all $i$).
Rule \splitRule divides a deposit $\varX$ in two fresh deposits $\varY$ and $\varZ$,
preserving the number of token units.
%
Rule \joinRule allows $\pmvA$ and $\pmvB$ to merge two deposits of a token $\tokT$, 
preserving the amount of token units, and transferring the new deposit to $\pmvC$.
Rule \exchangeRule allows $\pmvA$ and $\pmvB$ to swap two deposits, containing either
user-defined tokens or bitcoins.
Finally, rule \giveRule allows $\pmvA$ to donate one of her deposits to another user.

The transition relation $\xrightarrow{}$ is non-deterministic, 
because of the fresh names generated for deposits and tokens:
however, given a transition $\confG \xrightarrow{\labS} \confGi$
the label $\labS$ is uniquely determined from $\confG$ and $\confGi$.
%
\label{def:runS}
A \emph{symbolic run} $\runS$ is a (possibly infinite) sequence 
$\confG[0] \confG[1] \cdots$,
where $\confG[0]$ contains only $\BTC$ deposits, and for all $i \geq 0$
there exists some (unique) $\labS[i]$ such that $\confG[i] \xrightarrow{\labS[i]} \confG[i+1]$.
For all $i \geq 0$, we denote with $\runat{\runS}{i}$ the $i$-th element of the run,
when this element exists.
If $\runS$ is finite, we denote its length as $|\runS|$,
and we write $\confS{\runS}$ for its last configuration, \ie $\runat{\runS}{|\runS|-1}$.

\begin{defn}[Token balance]
  \label{def:token-s:tokval}
  We define the balance of a token $\tokT \neq \BTC$ 
  in a configuration $\confG$ inductively as follows:
  \[
  \begin{array}{ll}
    \tokval{\tokT}{\gnil} = 0
    & \;\tokval{\tokT}{\confG \mid \confGi} = \tokval{\tokT}{\confG} + \tokval{\tokT}{\confGi}
    \\[5pt]
    \tokval{\tokT}{\confDep[\varX]{\pmvA}{\valV:\tokT}} = \valV
    & \;\tokval{\tokT}{\confDep[\varX]{\pmvA}{\valV:\tokTi}} = 0 \quad (\tokTi \neq \tokT)
  \end{array}
  \]
\end{defn}

The following~\namecref{lem:tokens-s:tokval-preservation} establishes a basic preservation property:
the balance of a token after a run is equal to the \emph{minted} value minus the \emph{burnt} value, 
defined as:
\begin{align*}
  & \genval{\tokT}{\runS} = v
    \;\; \text{if } \exists i : 
    \begin{array}{l}
      \runat{\runS}{i} \xrightarrow{\actGen{\varX}{\valV}} \runat{\runS}{i+1},\; \text{and}
      \\
      \text{$\tokT$ occurs in $\runat{\runS}{i+1}$ but not in $\runat{\runS}{i}$}
    \end{array}
  \\
  & \burnval{\tokT}{\runS} = \sum \setcomp{\valV}{\exists i : 
    \begin{array}{l} \runat{\runS}{i} \xrightarrow{\actBurn{\varX}{\varY}} \runat{\runS}{i+1},\; \text{and} \\ 
      \runat{\runS}{i} = \confG \mid \confDep[\varX]{\pmvA}{\valV:\tokT}
    \end{array}
  }
\end{align*}

\begin{lem} 
  \label{lem:tokens-s:tokval-preservation}
  Let $\runS$ be a finite symbolic run.
  For all $\tokT \neq \BTC$:
  \[
  \tokval{\tokT}{\runS} 
  \; = \; 
  \genval{\tokT}{\runS} - \burnval{\tokT}{\runS}
  \]
\end{lem}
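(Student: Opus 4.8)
The plan is to prove the identity by induction on the length $|\runS|$ of the run, reading $\tokval{\tokT}{\runS}$ as $\tokval{\tokT}{\confS{\runS}}$ and using throughout that $\tokval{\tokT}{\cdot}$ is additive over $\mid$ and assigns $0$ both to authorization terms and to deposits of any token other than $\tokT$ (in particular to $\BTC$ deposits). The base case $|\runS| = 1$ is immediate: the run is the single initial configuration $\runat{\runS}{0}$, which contains only $\BTC$ deposits, so $\tokval{\tokT}{\runS} = 0$ for $\tokT \neq \BTC$, while $\genval{\tokT}{\runS} = 0$ (no transition mints $\tokT$) and $\burnval{\tokT}{\runS} = 0$ (the defining sum is empty).

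For the inductive step let $\runSi$ be the prefix of $\runS$ of length $|\runS| - 1$, so that $\runSi$ is itself a symbolic run and the last transition of $\runS$ is $\confS{\runSi} \xrightarrow{\labS} \confS{\runS}$. I argue by cases on the rule deriving this transition. The authorization rules and rules \giveRule and \exchangeRule only add authorization terms or relabel the owner of existing deposits, hence they change none of $\tokval{\tokT}{\cdot}, \genval{\tokT}{\cdot}, \burnval{\tokT}{\cdot}$; rule \splitRule turns a deposit of $\valV + \valVi$ units of a token into two deposits of $\valV$ and $\valVi$ units of the \emph{same} token, and rule \joinRule (which applies only when the two joined deposits hold the same token) does the reverse. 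In all these cases $\tokval{\tokT}{\cdot}$ is preserved for every $\tokT$ and $\genval{\tokT}{\cdot}, \burnval{\tokT}{\cdot}$ are untouched, so the identity for $\runS$ follows from the induction hypothesis applied to $\runSi$.

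It remains to handle the two rules that affect the minted/burnt quantities. Under \genRule the transition mints $\valV > 0$ units of a fresh token: if that token differs from $\tokT$ nothing relevant to $\tokT$ changes and we conclude by induction; if it equals $\tokT$, freshness (with respect to the whole run) gives $\tokval{\tokT}{\confS{\runSi}} = \genval{\tokT}{\runSi} = \burnval{\tokT}{\runSi} = 0$, and after the transition $\tokval{\tokT}{\runS} = \valV = \genval{\tokT}{\runS}$ with $\burnval{\tokT}{\runS} = 0$, so the identity holds. Under \burnRule, if two or more deposits are burnt they are all $\BTC$ and nothing relevant to $\tokT \neq \BTC$ changes; if a single deposit holding $\valV$ units of $\tokT$ is burnt then $\tokval{\tokT}{\runS} = \tokval{\tokT}{\confS{\runSi}} - \valV$ and $\burnval{\tokT}{\runS} = \burnval{\tokT}{\runSi} + \valV$ with $\genval{\tokT}{\cdot}$ unchanged, so the identity survives the transition; if the single burnt deposit holds some other token, nothing relevant changes. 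In every case the identity holds for $\runS$, completing the induction.

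The only genuinely delicate point is the bookkeeping around \genRule: one must make precise that ``fresh'' is meant with respect to the entire run, which is exactly what guarantees that a non-$\BTC$ token is minted in at most one transition (so that $\genval{\tokT}{\cdot}$ is single-valued) and is never burnt before it is minted --- without this, the vanishing of $\genval{\tokT}{\runSi}$ and $\burnval{\tokT}{\runSi}$ invoked in the $\genOp$ case could fail. Everything else is a routine, rule-by-rule verification that each reduction preserves the invariant.
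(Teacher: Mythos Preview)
Your proof is correct and follows exactly the same approach as the paper's: induction on the run with a case analysis on the semantic rule used in the last step, observing that every rule other than \genRule and \burnRule preserves the $\tokT$-balance while leaving $\genval{\tokT}{\cdot}$ and $\burnval{\tokT}{\cdot}$ unchanged. Your write-up is considerably more detailed than the paper's two-sentence sketch, and your remark about needing freshness to be global (so that $\genval{\tokT}{\cdot}$ is single-valued and no burn of $\tokT$ can precede its minting) is a genuine subtlety that the paper leaves implicit.
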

\begin{proof} 
  By induction on $\runS$, and by case analysis on each step.
  Inspecting each symbolic semantics rule, we can see that each step
  preserves the amount of token units, except for minting
  ($\genRule{}$) and burning ($\burnRule{}$), which are explicitly
  taken into account by the equation.
\end{proof}

  \newcommand{\ftokensplit}{{\sf S}}


\begin{figure*}[t!]
  \centering
  \small
  \begin{tabular}{c}
    \(
    \irule
    {\valV > 0}
    {\confDep[x]{\pmvA}{0:\BTC} \mid \confG
    \xrightarrow{\labAuth[x]{\pmvA}{\actGen{x}{\valV}}}
    \confDep[x]{\pmvA}{0:\BTC} \mid 
    \confAuth[x]{\pmvA}{\actGen{x}{\valV}} \mid \confG
    }
    \smallnrule{[AuthGen]}
    \)
    \\[20pt]
    \(
    \irule
    {\vec{x} = x_1 \cdots x_n \quad j \in 1..n \quad
    y \text{ fresh (except in burn auth for $\vec{x}$)} \quad
    n=1 \lor (n \geq 1 \land \forall i: \tokT[i]=\BTC)
    }
    {\big( \mmid_{i \in 1..n} \confDep[x_i]{\pmvA[i]}{\valV[i]:\tokT[i]} \big) \mid \confG
    \xrightarrow{\labAuth[x_j]{\pmvA[j]}{\actBurn{\vec{x}}{y}}}
    \big( \mmid_{i \in 1..n} \confDep[x_i]{\pmvA[i]}{\valV[i]:\tokT[i]} \big) \mid 
    \confAuth[{\varX[j]}]{\pmvA[j]}{\actBurn{\vec{x}}{y}} \mid 
    \confG
    }
    \smallnrule{[AuthBurn]}
    \)
    \\[20pt]
    \(
    \irule
    {\valV,\valVi \geq 0}
    {\confDep[x]{\pmvA}{(\valV + \valVi):\tokT} \mid \confG
    \xrightarrow{\labAuth[x]{\pmvA}{\actSplit{\varX}{\valV}{\pmvB}}} 
    \confDep[x]{\pmvA}{(\valV+\valVi):\tokT} \mid 
    \confAuth[x]{\pmvA}{\actSplit{\varX}{\valV}{\pmvB}} \mid \confG
    }
    \smallnrule{[AuthSplit]}
    \)
    \\[20pt]
    \(
    \irule
    {(\pmvP,z) \in \setenum{(\pmvA,x),(\pmvB,y)}}
    {\confDep[x]{\pmvA}{\valV:\tokT} \mid \confDep[y]{\pmvB}{\valVi:\tokT} \mid \confG
    \xrightarrow{\labAuth[z]{\pmvP}{\actJoin{x}{y}{\pmvC}}}
    \confDep[x]{\pmvA}{\valV:\tokT} \mid \confDep[y]{\pmvB}{\valVi:\tokT} \mid 
    \confAuth[z]{\pmvP}{\actJoin{x}{y}{\pmvC}} \mid \confG
    }
    \smallnrule{[AuthJoin]}
    \)
    \\[20pt]
    \(
    \irule
    {(\pmvP,z) \in \setenum{(\pmvA,x),(\pmvB,y)} \quad
    \tokT \neq \BTC}
    {\confDep[\varX]{\pmvA}{\valV:\tokT} \mid \confDep[\varY]{\pmvB}{\valVi:\tokTi} \mid \confG
    \xrightarrow{\labAuth[z]{\pmvP}{\actExchange{\varX}{\varY}}}
    \confDep[\varX]{\pmvA}{\valV:\tokT} \mid \confDep[\varY]{\pmvB}{\valVi:\tokTi} \mid
    \confAuth[z]{\pmvP}{\actExchange{\varX}{\varY}} \mid \confG
    }
    \smallnrule{[AuthExchange]}
    \)
    \\[20pt]
    \(
    \irule
    {}
    {\confDep[\varX]{\pmvA}{\valV:\tokT} \mid \confG
    \xrightarrow{\labAuth[\varX]{\pmvA}{\actGive{\varX}{\pmvB}}}
    \confDep[\varX]{\pmvA}{\valV:\tokT} \mid
    \confAuth[\varX]{\pmvA}{\actGive{\varX}{\pmvB}} \mid \confG
    }
    \smallnrule{[AuthGive]}
    \)
  \end{tabular}
  \caption{Semantics of authorizations.}
  \label{fig:tokens-s:auth:semantics}
\end{figure*}

\section{Bitcoin transactions}
\label{sec:bitcoin}

In this~\namecref{sec:bitcoin} we recall the functionality of Bitcoin.
To this purpose we rely on the formal model of~\cite{bitcointxm},
simplifying or omitting the parts that are irrelevant for our subsequent technical development
(\eg, we abstract from the fact that, in the Bitcoin blockchain, transactions are grouped into blocks).

\paragraph*{Transactions}

Following the formalization in~\cite{bitcointxm}, 
we represent transactions as records with the following fields:%
\footnote{Bitcoin transactions can also impose time constraints on when they can be appended to the blockchain, or when they can be redeemed. Since time constraints are immaterial for our technical development, we omit them.}.
\begin{itemize}
\item $\txIn{}$ is the list of \emph{inputs}. 
  Each of these inputs is a \emph{transaction output} $(\txT,i)$, 
  referring to the $i$-th output field of $\txT$.
\item $\txWit{}$ is the list of \emph{witnesses},
  of the same length as $\txIn{}$.
  Intuitively, for each $(\txT,i)$ in the $\txIn{}$ field, 
  the witness at the same index must make the $i$-th output script 
  of $\txT$ evaluate to true.
\item $\txOut{}$ is the list of \emph{outputs}.  
  Each output is a record of the form $\{ \txscript: \expe, \txval: \valV \}$, 
  where $\expe$ is a script, and $\valV \geq 0$.
\end{itemize} 

We let $\txf$ range over transaction fields, 
and we denote with $\txT.\txf$ the content of field $\txf$ of transaction $\txT$.
We write $\txT.\txf(i)$ for the $i$-th element of the sequence $\txT.\txf$, when in range;
when the sequence has exactly one element, we write $\txT.\txf$ for $\txT.\txf(1)$.
For transaction outputs $(\txT,i)$, we interchangeably use the notation
$(\txT,i)$ and $\txT.\txOut[]{}(i)$,
and we use the notation above to access their sub-fields $\txscript$ and $\txval$.
When clear from the context, we just write the name $\pmvA$ of a user in place
of her public/private keys, \eg we write $\versig{\pmvA}{\expe}$ for $\versig{\constPK[\pmvA]}{\expe}$,
and $\sig{\pmvA}{\txT}$ for $\sig{\constSK[\pmvA]}{\txT}$.

\paragraph*{Scripts}
\label{sec:bitcoin-scripts}

Bitcoin scripts are terms with the following syntax:
\begin{align*}
  \expe 
  \; \bnfdef \;
  & \valV  
  && \text{constant (integer or bitstring)}
  \\[-1pt]
  \bnfmid 
  & \expe \circ \expe 
  && \text{operators $(\circ \in \{+, -, =, <\})$}
  \\[-1pt]
  \bnfmid 
  & \ifE{\expe}{\expe}{\expe}  
  && \text{conditional}
  \\[-1pt]
  \bnfmid 
  & \seqat{\expe}{n}
  && \text{$n$-th element of sequence $\expe$ ($n \in \Nat$)}
  \\[-1pt]
  \bnfmid
  & \rtx.\txWit{}
  && \text{witnesses of the redeeming tx}
  \\[-1pt]
  \bnfmid
  & \sizeE{\expe} 
  && \text{size (number of bytes)}
  \\[-1pt]
  \bnfmid
  & \hashE{\expe} 
  && \text{hash}
  \\[-1pt]
  \bnfmid
  & \versig{\expe}{\expei}
  && \text{signature verification}
\end{align*}

Besides constants $\valV$, basic arithmetic/logical operators, and conditionals,  
scripts can access the elements of a sequence ($\seqat{\expe}{n}$),
and the sequence of witnesses of the redeeming transaction ($\rtx.\txWit{}$);
further, they can compute the size $\sizeE{\expe}$ of a bitstring
and its hash $\hashE{\expe}$.
The script $\versig{\expe}{\expei}$
evaluates to $1$ if the signature resulting from the evaluation of $\expei$
is verified against the public key resulting from the evaluation of~$\expe$,
and $0$ otherwise.%
\footnote{Multi-signature verification is supported by Bitcoin scripts, but immaterial for our technical development: therefore, we omit it.}
For all signatures, the signed message is the redeeming transaction (except its witnesses).%

The semantics of scripts is in~\Cref{fig:bitcoin:scripts:sem}.
The script evaluation function $\sem[\txT,i]{\cdot}$ takes two additional parameters:
$\txT$ is the redeeming transaction, and
$i$ is the index of the redeeming input/witness.
The result of the semantics can be an integer, a bitstring, or a sequence of integers/bitstrings.
We denote with $\hashSem{}$ a public hash function, with $\sizeF{\valV}$ the size (in bytes) 
of an integer $\valV$,
and with $\txvername$ a signature verification function 
(the definition of these semantic operators is standard, see \eg~\cite{bitcointxm}).
The semantics of a script can be undefined, \eg when accessing an element of a non-sequence:
we denote this case as $\bot$. 
All the operators are \emph{strict}
\ie they evaluate to~$\bot$ if some of their operands is~$\bot$. 
We use standard syntactic sugar for scripts, \eg:
\begin{inlinelist}
\item $\false \eqdef 0$,
\item $\true \eqdef 1$, 
\item $\expe \andE \expei \eqdef \ifE{\expe}{\expei}{\false}$,
\item $\expe \orE \expei \eqdef \ifE{\expe}{\true}{\expei}$, and finally
\item $\notE \expe \eqdef \ifE{\expe}{\false}{\true}$.
\end{inlinelist}

\begin{figure*}[t]
  \small
  \[
  \begin{array}{c}
    \sem[\txT,i]{\valV} 
    = \valV 
    \qquad
    \sem[\txT,i]{\expe \circ \expei} 
    = \sem[\txT,i]{\expe} \circ_\bot \sem[\txT,i]{\expei}
    \qquad
    \sem[\txT,i]{\ifE{\expe[0]}{\expe[1]}{\expe[2]}} 
    = \ifSem{\sem[\txT,i]{\expe[0]}}{\sem[\txT,i]{\expe[1]}}{\sem[\txT,i]{\expe[2]}}
    \\[8pt]
    \sem[\txT,i]{\seqat{\expe}{n}} = \valV[n] \;\; 
    \text{if $\sem[\txT,i]{\expe} = \valV[1] \cdots \valV[k]$ ($1 \leq n \leq k$)}
    \qquad
    \sem[\txT,i]{\rtx.\txWit{}} = \txT.\txWit[i]{}
    \\[8pt]
    \sem[\txT,i]{\sizeE{\expe}} 
    = \sizeF{\sem[\txT,i]{\expe}}
    \qquad
    \sem[\txT,i]{\hashE{\expe}} 
    = \hashSem{\sem[\txT,i]{\expe}}
    \qquad
    \sem[\txT,i]{\versig{\expe}{\expei}}
    = \txver{\scriptsize \sem[\txT,i]{\expe}}{\sem[\txT,i]{\expei}}{}{}{\txT}{i}
  \end{array}
  \]
  \vspace{-5pt}
  \caption{Semantics of Bitcoin scripts.}
  \label{fig:bitcoin:scripts:sem}
\end{figure*}

\paragraph*{Blockchains}

A blockchain $\bcB$ is a finite sequence $\txT[0] \cdots \txT[n]$,
where $\txT[0]$ is the only coinbase transaction (\ie, $\txT[0].\txIn{} = \bot$).
%
We say that the transaction output $(\txT[i],j)$ is \emph{spent} in $\bcB$
iff there exists some $\txT[i']$ in $\bcB$ (with $i' > i$) 
and some $j'$ such that $\txT[i'].\txIn[j']{} = (\txT[i],j)$.
The \emph{unspent transaction outputs} of $\bcB$, written $\utxo{\bcB}$,
is the set of transaction outputs $(\txT[i],j)$ which are unspent in $\bcB$.
%
A transaction $\txT$ is a \emph{valid extension} of $\bcB = \txT[0] \cdots \txT[n]$
whenever the following conditions hold:
\begin{enumerate}
\item for each input $i$ of $\txT$, if $\txT.\txIn[i]{} = (\txTi,j)$ then:
  \begin{itemize}
  \item \label{consistent-update:match} 
    $\txTi = \txT[h]$, for some $h < n$ (\ie, $\txTi$ is in $\bcB$);
  \item \label{consistent-update:unspent}
    the output $(\txTi,j)$ is not spent in~$\bcB$;
  \item \label{consistent-update:output} 
    $\sem[{\txT},i]{(\txTi,j).\txscript} = \valV \neq 0$;
  \end{itemize}
\item \label{consistent-update:value}
  the sum of the amounts of the inputs of $\txT$ is greater or equal
  to the sum of the amount of its outputs.
\end{enumerate}

The Bitcoin consensus protocol ensures that 
each transaction $\txT[i]$ in the blockchain is valid
with respect to the sequence of past transactions $\txT[0] \cdots \txT[i-1]$.
The difference between the amount of inputs and that of outputs of transactions
is the \emph{fee} paid to miners who participate in the consensus protocol.

\section{Neighbourhood covenants}
\label{sec:covenants}

To extend Bitcoin with neighbourhood covenants, we amend the model of pure Bitcoin  
in the previous~\namecref{sec:bitcoin} as follows:
\begin{itemize}
\item in transactions, we add a field to outputs, making them records of the form 
  \mbox{$\{ \txarg:\ArgA,\, \txscript:\expe,\, \txval:\valV \}$},
  where $\ArgA$ is a sequence of values;
  Intuitively, this extra element can be used to encode a state within transactions;
\item in scripts, we add operators to access all the outputs of the redeeming transaction,
  and a relevant subset of those of the sibling and parent transactions
  (by contrast, pure Bitcoin scripts can only access the redeeming transaction, and only as a whole,
  to verify it against a signature);
\item in scripts, we add operators for covenants. 
\end{itemize}

We now formalize our Bitcoin extension. 
We use $\txo$ to refer to the following transaction outputs:
\begin{align*}
  \txo \;
  \bnfdef \;\;
  & \phantom{\bnfmid\;} \rtxo{}{\expe}
  && \text{output of the redeeming tx}
  \\[-1pt]
  & \bnfmid \;
    \stxo{}{\expe}
  && \text{output of a sibling tx}
  \\[-1pt]
  & \bnfmid \;
    \ptxo{}{\expe}
  && \text{output of a parent tx}
\end{align*}

More precisely, consider the case where a transaction output $(\txTi,j)$ is redeemed
by a transaction $\txT$, through its $i$-th input.
When used within the script of $(\txTi,j)$:
$\rtxo{}{n}$ refers to the $n$-th output of $\txT$;
$\stxo{}{n}$ refers to the output redeemed by the $n$-th input of $\txT$;
$\ptxo{}{n}$ refers to the output redeemed by the $n$-th input of $\txTi$.
In~\Cref{fig:covenants:txo}, we exemplify these outputs in relation to the transaction $\txT[c]$.
The semantics of $\txo$ is defined in the first line of~\Cref{fig:covenants:sem};
its result is a pair $(\txT,n)$.

We extend scripts as follows, where $\txf \in \setenum{\txarg, \txval}$:
\begin{align*}
  \expe \; 
  \bnfdef \;\;
  \cdots 
  & \bnfmid \;
    \txof{\txo}{\txf}
  && \text{field of a tx output}
  \\[-2pt]
  & \bnfmid \;
    \verscript{\expe}{\txo}
  && \text{basic covenant}
  \\[-2pt]
  & \bnfmid \;
    \verrec{\txo}
  && \text{recursive covenant}
  \\[-2pt]
  & \bnfmid \;
    \inidx
  && \text{index of redeeming tx input}
  \\[-2pt]
  & \bnfmid \;
    \outidx
  && \text{index of redeemed tx output}
  \\[-2pt]
  & \bnfmid \;
    \inlen{\txo}
  && \text{number of inputs}
  \\[-2pt]
  & \bnfmid \;
    \outlen{\txo}
  && \text{number of outputs}
  \\[-2pt]
  & \bnfmid \;
    \txid{\txo}
  && \text{hash of (tx,output)}
\end{align*}

The script $\txof{\txo}{\txf}$ gives access to the field $\txf$ of a transaction output $\txo$
(where $\txf$ is either $\txarg$ or $\txval$).
The basic covenant $\verscript{\expe}{\txo}$ checks that the script in the transaction output $\txo$
is syntactically equal to $\expe$ (note that $\expe$ is not evaluated).
The ``recursive'' covenant $\verrec{\txo}$ checks that the script in $\txo$
is syntactically equal to script which is currently being evaluated.
The operators $\inidx$ and $\outidx$ evaluate, respectively,
to the index of the redeeming input and redeemed output.
We call \emph{current transaction output} ($\ctxo{}$) the transaction output which includes
the script which is currently being evaluated, \ie:
\[
\ctxo{} \; \eqdef \; \stxo{}{\inidx}
\]
The scripts $\inlen{\txo}$ and $\outlen{\txo}$ evaluate, respectively,
to the number of inputs and to the number of outputs of the transaction containing $\txo$.
Finally, $\txid{\txo}$ evaluates to a unique identifier of the transaction output $\txo$.

\begin{example}
  \label{ex:covenants:txo}
  Consider the transactions in~\Cref{fig:covenants:txo}.
  The script in $\txT[c].\txOut[1]{}$ checks that 
  \begin{inlinelist}
  \item the $\txarg$ field of $\ctxo{}$, \ie the same output which contains the script, equals to $n_c$;
  \item the $\txarg$ field of $\ptxo{}{1}$, \ie the parent transaction output redeemed by
    $\txT[c].\txIn[1]{}$, equals to $n_p$;
  \item the $\txarg$ field of $\rtxo{}{3}$, \ie the third output of the redeeming transaction $\txT[r]$, 
    equals to $n_r$;
  \item the $\txarg$ field of $\stxo{}{2}$, \ie the sibling transaction output redeemed by 
    $\txT[r].\txIn[2]{}$, equals to $n_s$.
  \end{inlinelist}
  Note that, in general, any script used in $\txT[c]$ can not access
  the parents of $\txT[p]$ and those of $\txT[s]$ --- and in general all the transactions
  which are farther than those shown in the figure.
  \qedex
\end{example}

\newcommand{\txFigPtx}{%
  \begin{tabular}[t]{|l|}
    \hline
    \multicolumn{1}{|c|}{$\txT[p]$} \\
    \hline
    \txIn{$\cdots$} \\
    \txOut[1]{}: $\{ \txarg:n_p, \cdots \} \textcolor{darkgray}{\text{ // } \ptxo{}{1}}$ \\ 
    \hline
  \end{tabular}
}

\newcommand{\txFigPtxi}{%
  \begin{tabular}[t]{|l|}
    \hline
    \multicolumn{1}{|c|}{$\txTi[p]$} \\
    \hline
    \txIn{$\cdots$} \\
    \txOut[1]{}: $\{ \cdots \} \textcolor{darkgray}{\text{ // } \ptxo{}{2}}$ \\ 
    \hline
  \end{tabular}
}

\newcommand{\txFigCtx}{%
  \begin{tabular}[t]{|l|}
    \hline
    \multicolumn{1}{|c|}{$\txT[c]$} \\
    \hline
    \txIn{$(\txT[p], 1) \; (\txTi[p], 1)$} \\
    \txOut[1]{}: $\{$ \\
    \quad\txarg: $n_c$ \\
    \quad\txscript: $\txof{\ctxo{}}{\txarg}=n_c$ \\
    \quad\hspace{5pt} $\andE \txof{\ptxo{}{1}}{\txarg}=n_p$ \\
    \quad\hspace{5pt} $\andE \txof{\rtxo{}{3}}{\txarg}=n_r$ \\
    \quad\hspace{5pt} $\andE \txof{\stxo{}{2}}{\txarg}=n_s$ \\
    $\}$ \\[2pt]
    \hline
  \end{tabular}
}

\newcommand{\txFigStx}{%
  \begin{tabular}[t]{|l|}
    \hline
    \multicolumn{1}{|c|}{$\txT[s]$} \\
    \hline
    \txIn{$\cdots$} \\
    \txOut[1]{}: $\{ \txarg:n_s, \cdots \} \textcolor{darkgray}{\text{ // } \stxo{}{2}}$ \\ 
    \txOut[2]{}: $\{ \cdots \} \hspace{8pt} \textcolor{darkgray}{\text{ // not accessible }}$ \\ 
    \hline
  \end{tabular}
}

\newcommand{\txFigRtx}{%
  \begin{tabular}[t]{|l|}
    \hline
    \multicolumn{1}{|c|}{$\txT[r]$} \\
    \hline
    \txIn{$(\txT[c], 1) \; (\txT[s], 1)$} \\
    \txOut[1]{}: $\cdots \hspace{4pt} \textcolor{darkgray}{\text{ // } \rtxo{}{1}}$ \\
    \txOut[2]{}: $\cdots \hspace{4pt} \textcolor{darkgray}{\text{ // } \rtxo{}{2}}$ \\
    \txOut[3]{}: $\{ \txarg:n_r, \cdots \} \textcolor{darkgray}{\text{ // } \rtxo{}{3}}$ \\ 
    \hline
  \end{tabular}
}

\begin{figure}[t]
  \resizebox{\columnwidth}{!}{
    \begin{tikzpicture} 
      \node at (0,3) {\txFigPtx};
      \node at (4.75,3) {\txFigPtxi};
      \node at (-0.45,0.15) {\txFigCtx};
      \node at (4.5,1.25) {\txFigStx};
      \node at (4.5,-0.85) {\txFigRtx};
    \end{tikzpicture}
  } 
  \caption{Accessing transaction outputs through a script.}
  \label{fig:covenants:txo}
\end{figure}

\begin{figure*}[t]
  \small
  \[
  \begin{array}{c}
    \sem[\txT,i]{\rtxo{}{\expe}} 
    = (\txT, \sem[\txT,i]{\expe})
    \qquad
    \sem[\txT,i]{\stxo{}{\expe}}
    = \txT.\txIn[{\sem[\txT,i]{\expe}}]{}
    \qquad
    \sem[\txT,i]{\ptxo{}{\expe}}
    = \txTi.\txIn[{\sem[\txT,i]{\expe}}]{}
    \; \text{ if } \txT.\txIn[i]{} = (\txTi,j)
    \\[8pt]
    \sem[\txT,i]{\txof{\txo}{\txf}}
    = \sem[\txT,i]{\txo}.{\txf}
    \qquad
    \sem[\txT,i]{\verscript{\txo}{\expe}} 
    = \expe \equiv \sem[\txT,i]{\txo}.\txscript
    \qquad
    \sem[\txT,i]{\verrec{\txo}} 
    = \txT.\txIn[i]{}.\txscript \equiv  \sem[\txT,i]{\txo}.\txscript
    \qquad
    \sem[\txT,i]{\txid{\txo}}
    = \hashSem{\sem[\txT,i]{\txo}}
    \\[8pt]
    \sem[\txT,i]{\outidx} 
    = j    
    \;\; \text{if } \txT.\txIn[i]{} = (\txT,j)
    \qquad
    \sem[\txT,i]{\inidx} 
    = i
    \qquad
    \sem[\txT,i]{\outlen{\txo}} 
    = |\txTi.\txOut{}| 
    \qquad
    \sem[\txT,i]{\inlen{\txo}} 
    = |\txTi.\txIn{}|
    \; \text{ if } \sem[\txT,i]{\txo} = (\txTi,j)
  \end{array}
  \]
  \caption{Semantics of neighbourhood covenants (extending~\Cref{fig:bitcoin:scripts:sem}).}
  \label{fig:covenants:sem}
\end{figure*}

\Cref{fig:covenants:sem} defines the semantics of extended scripts.
As in~\Cref{sec:bitcoin}, the function $\sem{\cdot}$ takes as parameters
the redeeming transaction $\txT$ and the index $i$ of the redeeming input.
We denote with $\equiv$ syntactic equality between two scripts,
\ie $\expe \equiv \expei$ is $1$ when $\expe$ and $\expei$ are exactly the same, $0$ otherwise.

\paragraph*{Turing completeness}

Our neighbourhood covenants make Bitcoin Turing-complete. 
Indeed, this is also true without exploiting the 
$\stxo{}{}$ and $\ptxo{}{}$ operators,
\ie by only using covenants between the current and the redeeming transaction.
To prove this, we describe how to simulate in the extended Bitcoin 
any counter machine~\cite{FMR68}, a well-known Turing-complete computational model.
A \emph{counter machine} is a pair $(n,s)$, 
where $n \in \Nat$ is the number of integer registers of the machine, and $s$ is a sequence of instructions. 
Instructions are the following: 
\mbox{${\sf inc}\ i$} increments register~$i$, 
\mbox{${\sf dec}\ i$} decrements it, 
\mbox{${\sf zero}\ i$} sets it to zero, 
\mbox{${\sf if}\ i\neq 0$ ${\sf goto}\ j$} conditionally jumps to instruction $j$ when register $i$ is not zero, 
\mbox{${\sf halt}$} terminates the machine.
The state of a counter machine $(n,s)$ is a tuple $(v_1,\ldots,v_n, p)$ 
where each $v_i$ represents the current value of register $i$, 
and $p$ is the number of the next instruction to execute (\ie, the program counter).
To exploit the currency transfer capabilities of Bitcoin, 
we slightly extend the counter machine model, 
by requiring that the machine has an initial $\BTC$ balance which, upon termination, 
is transferred to the user $\pmvA$ if the content of the first register is $0$, or to $\pmvB$ otherwise.
We call this extended model \emph{UTXO-counter machine}.

\begin{thm}
  \label{th:covenants:turing-completeness}
  Neighbourhood covenants can simulate any UTXO-counter machine.
  Hence, they are Turing-complete.
\end{thm}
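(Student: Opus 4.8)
The plan is to simulate an arbitrary UTXO-counter machine $(n,s)$ by a \emph{chain} of transactions $\txT[0]\,\txT[1]\,\txT[2]\cdots$, each spending the single output of its predecessor. The current machine configuration $(v_1,\dots,v_n,p)$ is stored as the sequence $v_1\cdots v_n\,p$ in the $\txarg$ field of that output, and the output script enforces that the next transaction in the chain carries exactly the successor configuration --- precisely the constraint a covenant provides. The genesis transaction $\txT[0]$ has one output whose $\txval$ is the machine's initial $\BTC$ balance, whose $\txarg$ encodes the initial configuration $(v_1^0,\dots,v_n^0,1)$, and whose script is a single fixed script $\CM$ depending only on the (finite, known) program $s$; every output along the chain carries this same $\CM$, which is possible because we enforce it with the recursive covenant $\verrec{}$ rather than copying the script's own source into itself.

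First I would build $\CM$. Writing $p=\seqat{\ctxo{\txarg}}{n+1}$ and $v_i=\seqat{\ctxo{\txarg}}{i}$ for the components read from the current output, $\CM$ is a nested conditional $\ifE{p=1}{\cdots}{\ifE{p=2}{\cdots}{\cdots}}$ with one branch per instruction index (legitimate since $s$ is fixed). In the branch for $s[p]$, when $s[p]$ is not ${\sf halt}$ the script requires that the redeeming transaction has exactly one input and one output ($\inlen{\rtxo{}{1}}=1 \andE \outlen{\rtxo{}{1}}=1$), that this output carries the same script ($\verrec{\rtxo{}{1}}$) and value ($\rtxo{\txval}{1}=\ctxo{\txval}$), and that its $\txarg$ equals the configuration obtained by applying $s[p]$ to $(v_1,\dots,v_n,p)$: for ${\sf inc}\ i$, ${\sf dec}\ i$ and ${\sf zero}\ i$ it modifies the $i$-th component accordingly and sets the counter to $p+1$; for ${\sf if}\ i\neq 0\ {\sf goto}\ j$ it keeps the registers and sets the counter to $j$ when $v_i\neq 0$ and to $p+1$ otherwise, again a conditional on a value read from $\txarg$. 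When $s[p]$ is ${\sf halt}$ the chain ends: instead of a covenant, the script releases the locked funds, requiring $\versig{\pmvA}{\rtx.\txWit{}}$ if $v_1=0$ and $\versig{\pmvB}{\rtx.\txWit{}}$ otherwise. Note that $\CM$ only inspects $\rtxo{}{}$ outputs (plus $\ctxo{}$ to read the current state) and uses no covenant other than $\verrec{}$ on the redeeming transaction --- consistent with the remark that covenants relating only the current and redeeming transactions already suffice.

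Then I would prove faithfulness as a bisimulation between runs of $(n,s)$ and legal extensions of the chain rooted at $\txT[0]$. Forwards: by induction on the number of executed instructions, if the machine is in configuration $c$ and the chain tip encodes $c$ with script $\CM$, then the unique step $c\to c'$ is matched by appending exactly one transaction that is a valid extension (using the script semantics of~\Cref{fig:covenants:sem} and the validity conditions of~\Cref{sec:bitcoin}), whose output again has script $\CM$ and encodes $c'$; and when the machine reaches ${\sf halt}$, the last output is redeemable only by $\pmvA$ if $v_1=0$ and only by $\pmvB$ otherwise, so the balance is transferred as the UTXO-counter machine prescribes. Backwards: any transaction validly spending the current tip must make $\CM$ evaluate to a nonzero value, and the shape of $\CM$ then forces its new output to carry $\CM$ again and to encode precisely the unique successor configuration; hence every maximal chain over $\txT[0]$ corresponds to a prefix of the run of $(n,s)$ and agrees with it on the final payout. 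Since $(n,s)$ is arbitrary and counter machines are Turing-complete~\cite{FMR68}, Bitcoin extended with neighbourhood covenants is Turing-complete as well.

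The bulk of the work --- writing $\CM$ out as an explicit term of the script language of~\Cref{sec:covenants} --- is routine if tedious. The one delicate point, which I expect to be the main obstacle, is \emph{faithfulness}: $\CM$ must pin down the \emph{entire} successor configuration together with the propagated script and the preserved value, and must reject every other way of spending the tip (extra or missing outputs, a tampered script, an altered register), so that a computational adversary cannot steer the chain away from the unique machine run; getting this conjunction of checks exactly right is the crux. A secondary subtlety is that counter registers are unbounded: this is harmless in the abstract model of~\Cref{sec:bitcoin}, where script constants range over $\Int$, but to run the construction on real Bitcoin one would encode each register as a variable-length bitstring and replace register arithmetic with the corresponding bitstring operations.
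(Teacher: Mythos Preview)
Your proposal is correct and follows essentially the same approach as the paper's proof sketch: encode the machine configuration in the $\txarg$ of a single output, use a fixed script $\CM$ that case-splits on the program counter, enforce script propagation with $\verrec{\rtxo{}{1}}$, constrain the successor configuration via $\rtxo{}{1}.\txarg$, and release the balance to $\pmvA$ or $\pmvB$ on ${\sf halt}$ according to the first register. Your write-up is in fact somewhat more thorough than the paper's sketch (explicit $\inlen{}$ check, explicit bisimulation argument, and the remark on bitstring encodings for bounded-integer implementations), but the underlying construction is the same.
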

\begin{proof}(\emph{sketch})
  We represent the state of the counter machine as a single transaction having one output.
  We simulate an execution step by appending a new transaction, 
  which redeems the output representing the old state, 
  and transfers its balance to a new output representing the new state. 
  This is done until the machine halts, at which point 
  we transfer its balance to the user determined by the final registers state.
  We remark that this simulation is made possible by the use of unbounded integers in our model, while Bitcoin only supports 32-bit integers.

  We represent the machine state in the $\txarg$ field of the transaction output $\txo$, 
  storing it as a sequence of integers. 
  As a shorthand, we write $\txo.\txr{i}$ for $\txo.\txarg.i$, and $\txo.\txp$  for $\txo.\txarg.(n+1)$.
  To simulate the execution steps of the machine, 
  we define the script $\CM$, 
  which checks that the new transaction $\txT$ indeed represents the next state. 
  More in detail, we first check whether the instruction in $s$ at position $\ctxo{\txp}$ 
  is \mbox{$\sf halt$}, 
  in which case we require that $\txT$ distributes the balance to users in the intended manner, 
  depending on the current state.
  When $\ctxo{\txp}$ points to any other instruction, we start by requiring that $\txT$ has only one output, 
  the same balance, and the same script. 
  We use a recursive covenant $\verrec{}$ on the redeeming transaction to ensure the last part.
  Then, we check that the new state in $\txT.\txOut[1]{}.\txarg$ agrees with the counter machine semantics, 
  by cases on the instruction pointed to by $\txo.\txp$. 
  If the instruction is \mbox{${\sf inc}\ i$}, then we require 
  $\rtxo{}{1}.\txr{i} = \ctxo{}.\txr{i}+1$, $\rtxo{}{1}.\txr{k} = \ctxo{}.\txr{k}$ for all $k\neq i$, 
  and $\rtxo{}{1}.\txp = \ctxo{}.\txp + 1$. 
  The \mbox{${\sf dec}\ i$} and \mbox{${\sf zero}\ i$} cases are analogous.
  For the instruction \mbox{${\sf if}\ i\neq 0$ ${\sf goto}\ j$}, 
  we check the value of $\ctxo{}.\txr{i}$: 
  if nonzero, we require that $\rtxo{}{1}.\txp = j$,
  otherwise that $\rtxo{}{1}.\txp = \ctxo{}.\txp + 1$. 
  In both cases, we require that $\rtxo{}{1}.\txr{k} = \ctxo{}.\txr{k}$ for all $k$.

  Users start the simulation by appending to the blockchain a transaction $\txT[0]$ 
  having one output with the desired value, 
  the script $\CM$, and a sequence of \mbox{$n+1$} zeros as $\txarg$. 
  After that, the balance is effectively locked inside the transaction, 
  and the only way to transfer it back to the users is to simulate all the steps of the machine, 
  until it halts. 
  So, our simulated execution is a form of \emph{secure multiparty computation}~\cite{Andrychowicz16cacm,Yao86focs,Goldreich87stoc}.
\end{proof}

Although, for simplicity, we use UTXO-counter machines just to transfer funds to $\pmvA$ or $\pmvB$ upon termination,
it would be easy to generalise the computational model and simulation technique
to encompass \emph{interactive} computations, 
which at run-time can receive inputs and perform currency transfers.
Doing so, we can execute arbitrary smart contracts, 
with the same expressiveness of Turing-complete smart contracts platforms.
In principle, we could craft a smart contract which implements user-defined tokens
in an account-based fashion: this contract would record the balance of the tokens of all users, 
and execute token actions.
However, in practice this construction would be highly inefficient: 
performing a single token transfer would require to append a large number of transactions,
and to pay the related fees.

To improve the efficiency of tokens, 
instead of limiting to covenants on the redeeming transaction, as in the simulation above, 
we could also use covenants on the parent and sibling transactions.
In~\Cref{sec:tokens:c} we show an efficient implementation of tokens,
which fully exploits neighbourhood covenants.

  \paragraph*{Implementing neighbourhood covenants}
\label{sec:implementation}

To extend Bitcoin with neighborhood covenants, only small changes to the script language are needed.
Currently, scripts can only access the redeeming transaction, and only for signature verification.
To enable covenants, scripts need to access the fields of the redeeming transaction,
and those of the parent and the \emph{sibling} transaction outputs.
First, the UTXO data structure must be extended to record the parents of unspent outputs.
Full nodes could simply access these transactions from their identifiers.
Lightweight nodes, \ie Bitcoin clients with limited resources that store the UTXO instead of the whole blockchain, must also store parents beside the UTXO;
when all the children of a transaction are spent, the parent can be deleted.

To implement the covenants $\verrec{}$ and $\verscript{}{}$, 
the Bitcoin script language must be extended with new opcodes.
A former covenant-enabling opcode is the ${\small\sf CheckOutputVerify}$ of~\cite{Moser16bw},
which uses placeholders to represent variable parts of the script
(\eg, $\versig{\texttt{<\textit{pubKey}>}}{\rtx.\txWit{}}$).
However, its implementation requires string substitutions at run-time 
to insert the wanted values in the script before checking script equality.
Instead, our neighbourhood covenants can be implemented more efficiently.
In our covenants, we can use exactly the same script along a chain of transactions,
relying on the $\txarg$ sequence to record state updates.
The script can access the state through the operator $\txof{\txo}{\txarg}$,
which require suitable opcodes.
Since the script is fixed, nodes do not have to perform string substitutions,
and checking script equality can be efficiently performed by comparing their hashes. 

Adding the $\txarg$ field to outputs does not require to
alter the structure of pure Bitcoin transactions.
Indeed, the $\txarg$ values can be stored at the beginning of the script
as push operations on the alternative stack,
and copied to the main stack when the script refers to them,
using the same technique used in~\cite{balzac}. 
When hashing the scripts for comparison, we discard these $\txarg$ values:
this just requires to skip the prefix of the script comprising
all the {\small\sf push} to the alternative stack.

\section{Implementing tokens in Bitcoin}
\label{sec:tokens:c}

In this~\namecref{sec:tokens:c} we show how to implement token actions in Bitcoin.
To this purpose, we define a \emph{computational model}, 
which describes the interactions of users who exchange messages 
and append transactions to the Bitcoin blockchain.
A \emph{computational run} $\runC$ is a sequence of bitstrings $\labC$, 
each of which encodes one of the following actions:
\begin{inlinelist}
\item \mbox{$\pmvA \rightarrow *:m$}, denoting the broadcast of a bitstring $m$;
\item $\txT$, denoting the appending of a transaction $\txT$ to the blockchain.
\end{inlinelist}
A computational run always starts from a coinbase transaction $\txT[0]$.
By extracting the transactions from a run $\runC$, 
we obtain a blockchain $\bcB[\runC]$.

We relate the computational and the symbolic models through a relation
between symbolic runs $\runS$ and computational runs $\runC$:
intuitively, $\runS$ is \emph{coherent} with $\runC$ 
when each step in $\runS$ is simulated by a step in~$\runC$.
In the rest of this~\namecref{sec:tokens:c} we provide the main intuition about 
the coherence relation, 
\iftoggle{arxiv}{postponing the full details to~\appendixname~\ref{app:coherence}.}
{relegating the full details to~\cite{BLZ20arxiv}.}
The coherence relation, denoted as~$\!\coherRel{}{}{}\!$, is parameterized over two injective functions
$\txMapC$ and $\tkMapC$ which track, respectively,
the \emph{names} $\varX$ 
and the \emph{tokens} $\tokT$ occurring in $\confS{\runS}$,
mapping them to transaction outputs $(\txT,i)$, where $\txT$ occurs in $\runC$.
We abbreviate $\coherRel{\runS}{\runC}{\txMapC,\tkMapC}$ as $\coherRel{\runS}{\runC}{}$.

We simulate each symbolic token action in \Cref{fig:tokens-s:semantics}
by appending a suitable transaction to the blockchain.
We use the $\txarg$ part of transaction outputs to record the token data:
\begin{enumerate}[1.]
\item $\tkop$ is the action implemented by the transaction: 
  $0 = \genOp$, $1 = \burnOp$, $2 = \splitOp$, $3 = \joinOp$, $4 = \exchangeOp$, $5 = \giveOp$;
\item $\tkown$ is the (public key of the) user who owns of the token units controlled by the tx output; 
\item $\tkval$ is the number of units controlled by the tx output; 
\item $\tkid$ is the unique token identifier.
\end{enumerate}

We implement the token actions as a single script $\tokScript$, 
which uses a switch on the $\tkop$ value to jump to the first instruction of the requested action.
Since the script is quite complex, we present independently 
the parts corresponding to each action
(we display the complete script in~\Cref{fig:full-token:script}).
All the transactions implementing token actions use \emph{exactly} the same script,
which we preserve throughout executions by using recursive covenants.
To improve readability, 
we refer to the elements of the $\txarg$ sequence by name rather than by index, 
\eg we write $\txof{\txo}{\tkop}$ rather than $\txof{\txo}{\seqat{\txarg}{1}}$.



\paragraph*{Gen}

\newcommand{\ftokengen}{{\sf G}}

We implement the $\genOp$ action by the following script:
\begin{lstlisting}[language=balzac,numbers=left,numbersep=5pt,xleftmargin=10pt,classoffset=1,morekeywords={},classoffset=2,morekeywords={FundsA,Commit,Reveal},framexbottommargin=0pt]
if not verrec(ptxo(1))         // ctxo is a gen
then ctxo.tkid = txid(ptxo(1)) // token id
     and ptxo(1).val = 0       // spent txo has 0 BTC
     and outlen(ctxo) = 1      // gen has 1 out
     and ctxo.tkval > 0        // positive token val
else ... // the other branches must preserve token id
\end{lstlisting}

\noindent
Recall that $\genOp$ produces a symbolic step of the form:
\[
\confDep[\varX]{\pmvA}{0:\BTC} 
\xrightarrow{\actGen{\varX}{\valV}} 
\confDep[\varY]{\pmvA}{\valV:\tokT} 
\tag*{$(v>0)$}
\]
To translate this symbolic action into a computational one,
we must spend a transaction output corresponding to $\confDep[\varX]{\pmvA}{0:\BTC} $,
and produce a fresh output corresponding to $\confDep[\varY]{\pmvA}{\valV:\tokT}$.
Assuming that the deposit $\varX$ corresponds to an unspent transaction output $(\txTi,1)$ on the blockchain,
this requires to append a transaction $\txT$ redeeming $(\txTi,1)$, and ensuring that:
\begin{enumerate}
\item the parent transaction output $(\txTi,1)$ is not a token deposit, but just a plain $\BTC$ deposit
  (line~\lineno{1});
\item $\tkid$ is the identifier of the parent tx output (line~\lineno{2}).
  This corresponds to identifying the fresh name $\tokT$ 
  with the deposit name $\varX$ of the redeemed $\BTC$ deposit;
\item $0 \BTC$ are redeemed from the parent transaction (line~\lineno{3});
\item $\txT$ has exactly one output (line~\lineno{4});
\item $\tkval$ is positive (line~\lineno{5}), corresponding to 
  the constraint $\valV > 0$ in the symbolic semantics.
\end{enumerate}

Notice that the first time the script is evaluated is when \emph{redeeming} $\txT$ (not when appending it).
At that time, $\ctxo{}$ will evaluate to $(\txT,1)$, and $\ptxo{}{1}$ to $(\txTi,1)$.
The script ensures that, when $\txT$ is redeemed, its $\tkid$ will contain a unique identifier
of the token.
Crucially, the scripts which implement the other token actions will preserve this identifier
in the $\tkid$ parameter.
This identifier is essential to guarantee the correctness of the $\joinOp$ and $\exchangeOp$ actions.

\paragraph*{Burn}
\label{ex:burn}

We implement the $\burnOp$ action by the script: \\
\vbox{\begin{lstlisting}[language=balzac,numbers=left,numbersep=5pt,xleftmargin=10pt,classoffset=1,morekeywords={},classoffset=2,morekeywords={},framexbottommargin=0pt]
versig(ctxo.owner,rtx.wit) and // check owner
verscr(false,rtxo(1)) and     // make rtx unspendable
inlen(rtxo(1)) = 1 and        // rtx has 1 in
outlen(rtxo(1)) = 1           // rtx has 1 out
\end{lstlisting}}

\noindent
Recall that $\burnOp$ produces a symbolic step:
\[
\big( \mmid_{i \in 1..n} \confDep[x_i]{\pmvA[i]}{\valV[i]:\tokT[i]} \big) 
\xrightarrow{\actBurn{x_1 \cdots x_n}{y}}
\gnil
\]
There are two cases, according to whether we are burning a single token deposit, 
or one or more $\BTC$ deposits.
In the first case, assuming that the computational counterpart of $\varX[1]$ is the output $(\txTi,1)$,
the corresponding computational step is to append a transaction $\txT$ redeeming $(\txTi,1)$.
The witness of $\txT$ must carry a signature of the owner $\pmvA$, 
and its output script is $\false$, making it unspendable.
In the second case, it suffices to append a transaction $\txT$ which redeems all the transaction outputs
corresponding to $\varX[1], \ldots, \varX[n]$, and has a $\false$ script.

\paragraph*{Split}
\label{ex:split}

We implement the $\splitOp$ action by the script: 
  \begin{lstlisting}[language=balzac,numbers=left,numbersep=5pt,xleftmargin=10pt,classoffset=1,morekeywords={},classoffset=2,morekeywords={},framexbottommargin=0pt]
versig(ctxo.owner,rtx.wit)             // check owner
and verrec(rtxo(1))               // covenants on rtx
and verrec(rtxo(2))  
and inlen(rtxo(1)) = 1                // rtx has 1 in
and outlen(rtxo(1)) = 2             // rtx has 2 outs
and rtxo(1).tkval >= 0        // positive token value
and rtxo(2).tkval >= 0          
and rtxo(1).owner = ctxo.owner      // preserve owner
and rtxo(1).tkid = ctxo.tkid         // preserve tkid
and rtxo(2).tkid = ctxo.tkid     
and rtxo(1).tkval + rtxo(2).tkval = ctxo.tkval
  \end{lstlisting}

Recall that $\splitOp$ produces a symbolic step of the form:
\[
\confDep[\varX]{\pmvA}{(\valV+\valVi):\tokT} 
\xrightarrow{\actSplit{\varX}{\valV}{\pmvB}} 
\confDep[\varY]{\pmvA}{\valV:\tokT} 
\mid
\confDep[\varZ]{\pmvB}{\valVi:\tokT} 
\]

Assuming that $\varX$ corresponds to an unspent transaction output $(\txTi,1)$,
performing this step in Bitcoin requires to append a transaction $\txT$ redeeming $(\txTi,1)$, 
and ensuring that:
\begin{enumerate}
\item the witness of $\txT$ carries a signature of the owner (line~\lineno{1});
\item $\txT$ has only one input and two outputs, both containing the same script of $(\txTi,1)$
  (line~\lineno{2}-\lineno{5});
\item the $\tkval$ of $\txT$'outputs $\rtxo{}{1}$ and $\rtxo{}{2}$ are $\geq 0$
  (line~\lineno{6}-\lineno{7}), corresponding to  
  the precondition $\valV, \valVi \geq 0$ in \nrule{[Split]};
\item $\tkid$ of $\txT$'s outputs is the same of $(\txTi,1)$ (line~\lineno{9}-\lineno{10});
\item the sum of token values $\tkval$ of the outputs of $\txTi$ is equal to the token value of $(\txTi,1)$ 
  (line-\lineno{11}).
\end{enumerate}
\paragraph*{Join}

We implement the $\joinOp$ action by the script: 
\lstinputlisting[language=balzac,numbers=left,numbersep=5pt,xleftmargin=10pt,classoffset=1,morekeywords={},classoffset=2,morekeywords={},framexbottommargin=0pt]{join.txt}

Recall that $\joinOp$ produces a symbolic step of the form:
\[
\confDep[\varX]{\pmvA}{\valV:\tokT} 
\mid
\confDep[\varY]{\pmvB}{\valVi:\tokT} 
\; \xrightarrow{\actJoin{\varX}{\varY}{\pmvC}} \;
\confDep[\varZ]{\pmvC}{(\valV+\valVi):\tokT} 
\]

Assume that $\varX$ and $\varY$ correspond to the
unspent transaction outputs $(\txTi,1)$ and  $(\txTii,1)$.
To perform the corresponding computational step
we append a transaction $\txT$ redeeming $(\txTi,1)$ and $(\txTii,1)$, and ensuring that,
for both inputs:
\begin{enumerate}
\item $\txT$ has two inputs and one output, containing the same script of $(\txTi,1)$ and $(\txTii,1)$
  (line~\lineno{1}-\lineno{5});
\item the token identifier $\tkid$ of the output of $\txTi$ ($\rtxo{}{1}$)
  is the same of $(\txTi,1)$ (line~\lineno{6});
\item the witness of $\txT$ carries a signature of the owner (line~\lineno{7});
\item the sum of token values $\tkval$ of both inputs
  is equal to the token value of $(\txT,1)$ 
  (line~\lineno{8}).
\end{enumerate}

Note that the script in $(\txTi,1)$ ensures that the one in $(\txTii,1)$ is the same, and vice-versa.
In this way, we prevent joining tokens with bitcoins.
To also prevent joining tokens of different type, 
the script checks that the $\tkid$ of the current transaction is the same as 
the one of the first output of the redeeming transaction.
This is done by both inputs.
In other words, 
$\stxo{\tkid}{1} = \rtxo{\tkid}{1}$ and
$\stxo{\tkid}{2} = \rtxo{\tkid}{1}$,
implying that 
$\stxo{\tkid}{1} = \stxo{\tkid}{2}$.

\paragraph*{Exchange}

We implement the $\exchangeOp$ action by the script: 
\lstinputlisting[language=balzac,numbers=left,numbersep=5pt,xleftmargin=10pt,classoffset=1,morekeywords={},classoffset=2,morekeywords={FundsA,Commit,Reveal},framexbottommargin=0pt]{exchange.txt}

Recall that the symbolic $\exchangeOp$ step has the form:
\[
\confDep[x]{\pmvA}{\valV:\tokT} \mid \confDep[y]{\pmvB}{\valVi:\tokTi} 
\xrightarrow{\actExchange{x}{y}} 
\confDep[x']{\pmvA}{\valVi:\tokTi} \mid \confDep[y']{\pmvB}{\valV:\tokT}
\]
where $\tokT$ must be a token, while $\tokTi$ is either a $\BTC$ or a token.

Assume that $\varX$ and $\varY$ correspond to the unspent transaction outputs $(\txTi,1)$ and  $(\txTii,1)$.
To perform the corresponding computational step
we append a transaction $\txT$ redeeming $(\txTi,1)$ and $(\txTii,1)$, and ensuring that,
for both inputs:
\begin{enumerate}
	\item $\txT$ has two inputs and two output (lines~\lineno{1}-\lineno{2});
	\item the first input and the first output of $\txT$ must contain the same script of $(\txTi,1)$ and $(\txTii,1)$
	(lines~\lineno{3}-\lineno{4});
	\item the witness of $\txT$ carries a signature of the owner (line~\lineno{5});
	\item the owner in the first output of $\txT$ ($\rtxo{}{1}$) must be equal to the owner in the second input  $(\txTii,1)$ (line~\lineno{6});
	\item dually, the owner in the second output of $\txT$ ($\rtxo{}{2}$) must be equal to the owner in the first input  $(\txTi,1)$ (line~\lineno{7});
	\item the token value and identifier of the first output of $\txT$ ($\rtxo{}{1}$) 
	must be equal to those of $(\txTi,1)$ (line~\lineno{8}-\lineno{9}).
\end{enumerate}
Furthermore, if $\verrec{\stxo{}{2}}$ is true, \ie we are exchanging a token with a token.
In this case, we require that:
\begin{enumerate}
	\item the second output of $\txT$ must contain the same script of $(\txTi,1)$ and $(\txTii,1)$ (line~\lineno{11});
	\item the token value and identifier of the second output of $\txT$ ($\rtxo{}{2}$) 
	must be equal to those of $(\txTii,1)$ (line~\lineno{12}-\lineno{13});
\end{enumerate} 
When exchanging a token with a $\BTC$ deposit, we require:
\begin{enumerate}
	\item the script of the second output of $\txT$ ($\rtxo{}{2}$) to be $\versig{\ctxo{}.\tkown}{\rtx.\txWit{}}$ (line~\lineno{15});
	\item the value of the second output of $\txT$ 	to be equal to the value of $(\txTii,1)$ (line~\lineno{16}).
\end{enumerate}
\paragraph*{Give}

We implement the $\giveOp$ action by the script: \\
\vbox{\begin{lstlisting}[language=balzac,numbers=left,numbersep=5pt,xleftmargin=10pt,classoffset=1,morekeywords={},classoffset=2,morekeywords={},framexbottommargin=0pt]
inlen(rtxo(1)) = 1                    // rtx has 1 in
and outlen(rtxo(1)) = 1              // rtx has 1 out
and versig(ctxo.owner, rtx.wit)       // check  owner
and verrec(rtxo(1))             // covenant on rtx(1)
and rtxo(1).tkid = ctxo.tkid         // preserve tkid
and rtxo(1).tkval = ctxo.tkval      // preserve value                        
\end{lstlisting}}

\noindent
Recall that $\giveOp$ produces a symbolic step of the form:
\[
\confDep[x]{\pmvA}{\valV:\tokT}
\xrightarrow{\actGive{x}{\pmvB}} 
\confDep[y]{\pmvB}{\valV:\tokT} 
\]
Assuming that $\varX$ corresponds to an unspent transaction output $(\txTi,1)$,
performing this step in Bitcoin requires to append a transaction $\txT$ redeeming $(\txTi,1)$, 
and ensuring that:
\begin{enumerate}
\item $\txT$ has only one input and one output (lines~\lineno{1}-\lineno{2});
\item the witness of $\txT$ carries a signature of the owner (line~\lineno{3});
\item the output of $\txT$ ($\rtxo{}{1}$) contains the same script of $(\txTi,1)$
  (line~\lineno{4});
\item the token value and identifier of the first output of $\txT$ ($\rtxo{}{1}$) 
  is the same of those of $(\txTi,1)$ (lines~\lineno{5}-\lineno{6}).
\end{enumerate}


\paragraph*{Authorizations} 

Symbolic authorization steps (\Cref{fig:tokens-s:auth:semantics}) 
correspond to computational broadcasts of signatures. 
Computational users, however, can also broadcast other messages:
in particular, adversaries can broadcast any arbitrary bitstring they can compute in PPTIME. 
Coherence discards any broadcast which does not correspond to any of the symbolic steps above,
\ie, such broadcasts correspond to no symbolic steps. 
Discarding these messages does not affect the security of tokens, 
because the other computational messages 
(\ie, transactions and their signatures) 
are enough to reconstruct the symbolic run from the computational one.

\paragraph*{Other transactions}

A subtle case of coherence is that of transactions appended by \emph{dishonest} users.
To illustrate the issue, suppose that some dishonest $\pmvA[1] \cdots \pmvA[n]$ own some bitcoins,
represented in the symbolic run
as deposits $\confDep[{\varX[i]}]{\pmvA[i]}{\valV[i]:\BTC}$ for $i \in 1..n$, 
and in the computational run as transaction outputs $\txT[x_1] \cdots \txT[x_n]$,
each one redeemable with $\pmvA[i]$'s private key.
The dishonest users can sign an \emph{arbitrary} $\txTi$ 
which redeems all the $\txT[x_1] \cdots \txT[x_n]$,
and append it to the blockchain.
Crucially, $\txTi$ may not correspond to $\genOp$, $\splitOp$, $\joinOp$, $\exchangeOp$ or $\giveOp$ actions.
In this case, to obtain coherence, we simulate the appending of $\txTi$ 
with a (previously authorized) \emph{burn} of the deposits 
$\confDep[{\varX[i]}]{\pmvA[i]}{\valV[i]:\BTC}$.
In subsequent steps, coherence will ignore the descendants of $\txTi$ in the computational run,
since in general they cannot be represented symbolically.
More precisely, 
appending a transaction where none of the inputs corresponds to a symbolic deposit
results in no symbolic action.
The case of a deposit $\confDep[\varX]{\pmvA}{\valV:\tokT}$ with a user-generated token $\tokT$ is similar: 
appending $\txTi$ is simulated by a symbolic \emph{burn},
and $\txTi$ is not represented symbolically.
In all cases, a transaction which spends a symbolically-represented input
must be represented symbolically, otherwise coherence is lost.

\paragraph*{Efficiency of the implementation}

To estimate the efficiency of the implementation, 
we consider the number of cryptographic operations,
as their execution cost is an order of magnitude
greater than the other operations.  
In particular, performing $\verrec{}$ and $\verscript{}{}$ requires to compute the hash of a script
(once this is done, the cost of comparing two hashes is negligible).
This cost can be reduced by incentivizing nodes to cache scripts.
The most expensive token action is $\exchangeOp$, which, 
having two inputs, needs to verify $2$ signatures and execute at most $10$ covenants operations,
which overall require to compute at most $6$ script hashes.
If nodes cache scripts, the cost of the action is not 
dissimilar to the one required to append a standard transaction with two inputs.

Note that, even though $\tokScript$ is a non-standard script,
it could be used in a standard {\small\sf P2SH} transaction, as in~\cite{balzac},
if it did not exceed the 520-bytes limit.
Taproot \cite{bip341} would mitigate this issue:
for scripts with multiple disjoint branches, 
Taproot allows the witness of the redeeming transaction
to reveal just the needed branch.
Therefore, the 520-bytes limit would apply to branches instead of the whole script.

\section{Computational soundness}
\label{sec:computational-soundness}

An immediate consequence of the definition of coherence is the following~\namecref{lem:coher:s-to-c},
which ensures that unspent deposits in a symbolic run $\runS$ 
have a corresponding unspent transaction output in any computational run $\runC$ coherent with $\runS$.

\begin{lem}
  \label{lem:coher:s-to-c}
  Let \mbox{$\coherRel{\runS}{\runC}{}$}.
  For each deposit $\confDep[\varX]{\pmvA}{\valV:\tokT}$ in $\confS{\runS}$,
  there exists a distinct tx output $\txMapC(x)$ in $\utxo{\bcB[\runC]}$
  storing $\valV:\tokT$, and spendable through a signature of $\pmvA$.
\end{lem}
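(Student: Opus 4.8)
The plan is to prove the statement by induction on the derivation of the coherence judgement $\coherRel{\runS}{\runC}{}$, equivalently on the common length of the two runs, and to carry along a uniform per-deposit invariant that is exactly the claim of the lemma: for every $\confDep[\varX]{\pmvA}{\valV:\tokT}$ in $\confS{\runS}$, the output $\txMapC(\varX)$ lies in $\utxo{\bcB[\runC]}$, it records $\valV$ units of $\tokT$ --- in the $\txval$ field when $\tokT = \BTC$, and in the $\tkval$ component of $\txarg$ (with the $\tkid$ component encoding $\tokT$ consistently with $\tkMapC$) when $\tokT$ is user-defined --- and its script is satisfied exactly by a witness carrying a signature of $\pmvA$. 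Distinctness of $\txMapC(\varX)$ for distinct names is immediate from the injectivity of $\txMapC$, which is part of the coherence relation, so the only real work is in preserving the per-deposit invariant along each coherent step.

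For the base case, $\confG[0]$ contains only $\BTC$ deposits and $\runC$ starts from a coinbase $\txT[0]$; by definition of coherence, $\txMapC$ sends each such deposit to a distinct output of the initial transaction(s), whose script has the form $\versig{\pmvA}{\rtx.\txWit{}}$, so the invariant holds. For the inductive step I would do a case analysis on the last symbolic step, of the form $\confG \xrightarrow{\labS} \confGi$, together with the computational move coherence matches it with. The token-action cases \genRule, \burnRule, \splitRule, \joinRule, \exchangeRule and \giveRule (\Cref{fig:tokens-s:semantics}) are handled uniformly: coherence appends a transaction $\txT$ whose inputs are exactly the $\txMapC$-images of the consumed deposits --- so those outputs leave $\utxo{}$, matching their disappearance from the symbolic configuration --- and whose fresh outputs carry the script $\tokScript$ (or an ordinary signature-verification script against the new owner's key, in the cases where a deposit becomes a plain $\BTC$ deposit); after updating $\txMapC$ and $\tkMapC$ on the fresh names and tokens, the value/identifier conjuncts follow from the covenant checks of the relevant branch of \Cref{fig:full-token:script} (\eg the $\tkval$-sum and $\tkid$-preservation clauses) and the signature conjunct from the $\versig{\txof{\ctxo{}}{\tkown}}{\rtx.\txWit{}}$ clause present in that branch. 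Symbolic authorization steps (\Cref{fig:tokens-s:auth:semantics}) are matched with broadcasts of signatures, which do not alter $\bcB[\runC]$, so the invariant is untouched; a computational broadcast with no symbolic counterpart is discarded by coherence and likewise changes nothing relevant.

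The step I expect to be the main obstacle is the one described in the paragraph \emph{Other transactions}: the appending, by a possibly dishonest coalition, of an arbitrary transaction $\txTi$ some of whose inputs are outputs $\txMapC(\varX[i])$ that are images of symbolic deposits. Coherence matches this with a previously-authorized $\burnOp$ of exactly those deposits, and I would need to check that (i) such a $\burnOp$ step is indeed enabled in $\confG$ --- which holds because coherence only admits the move after the matching \smallnrule{[AuthBurn]} broadcasts have occurred --- and (ii) no surviving deposit is damaged, i.e.\ every $\confDep[\varX]{\pmvA}{\valV:\tokT}$ still in $\confGi$ has $\txMapC(\varX)$ not among the inputs of $\txTi$ and hence still unspent after $\txTi$ is appended; this uses that $\txTi$'s inputs are precisely the $\txMapC$-images of the burnt names and that, thereafter, coherence ignores the descendants of $\txTi$. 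A second, more delicate point hidden in the last conjunct is that the scripts must make the relevant output spendable \emph{only} through $\pmvA$'s signature, not merely make $\pmvA$'s signature sufficient; this requires reading off, for each branch of $\tokScript$, that the guard reached along it contains the clause $\versig{\txof{\ctxo{}}{\tkown}}{\rtx.\txWit{}}$ (with $\tkown$ holding $\pmvA$'s key, by the preceding steps), together with the analogous observation for plain $\BTC$ outputs.
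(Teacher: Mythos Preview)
Your approach—induction on the derivation of coherence—is the same as the paper's, but two differences are worth noting. First, you derive the per-output facts (correct $\tkval$, $\tkid$, owner, script) by appealing to the covenant clauses of $\tokScript$ evaluated when the parent output is redeemed; this is an unnecessary detour, and it does not directly handle the $\BTC$-deposit subcases where the redeemed output carries $\btcScript$ rather than $\tokScript$. The definition of coherence in \appendixname~\ref{app:coherence} already records all of these properties explicitly, both as standing invariants attached to $\txMapC$ and as per-item conditions on the appended transaction in inductive case~1, so once coherence is assumed they hold by definition—the paper's proof just cites ``the intended fields''. Second, rather than splitting on the six action rules, the paper splits the inductive step on whether the name $\varX$ already occurs in the shorter run (then $\txMapC(\varX)=\txMapCi(\varX)$, the induction hypothesis applies, and one only checks that the inputs of the appended transaction are images of \emph{consumed} names) or is fresh (then $\txMapC(\varX)$ is an output of the just-appended $\txT$, trivially unspent). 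This collapses your per-action analysis into two uniform subcases, plus the separate handling of inductive case~2 of coherence. Two minor remarks: the runs do not share a length (inductive case~2 extends only $\runC$), so the induction is on the derivation—equivalently on $|\runC|$—not on a ``common length''; and the lemma only asserts spendability \emph{through} $\pmvA$'s signature, which the paper reads off the $\tkown$/script invariants of coherence without arguing the exclusivity you worry about.
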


\Cref{lem:coher:c-to-s} is a sort of dual of \Cref{lem:coher:s-to-c}, 
describing how to relate computational token deposits to symbolic ones.
In general, this is complex since a computational adversary
can craft arbitrary scripts which are not representable in the symbolic model:
hence, the blockchain can contain transactions $\txT$ which do not correspond
to any symbolic deposit.
A tricky case is when, after some token $\tokT$ is minted in the symbolic world,
a computational adversary creates a descendent $\txTi$ of $\txT$ 
with the token script and $\tkid = \tokT$, effectively forging new units of $\tokT$.
Note that such forgery can not be prevented, since the
adversary can create transactions with arbitrary outputs.
However, such forged tokens are useless: this is guaranteed by \Cref{lem:coher:c-to-s}, which shows that $\txTi$ is unspendable.

\begin{lem}
  \label{lem:coher:c-to-s}
  Let \mbox{$\coherRel{\runS}{\runC}{}$}.
  Let $(\txT,i)$ be a tx output in $\utxo{\bcB[\runC]}$ storing $\valV:\tokT$,
  with $\tokT \neq \BTC$.
  If $(\txT,i).\tkid$ does not correspond to any tx output in $\bcB[\runC]$, 
  then $(\txT,i)$ is unspendable.
  Otherwise, if $(\txT,i).\tkid$ occurs in $\bcB[\runC]$,
  and $(\txT,i).\tkid$ is equal to $\txMapC(\varX)$ for some $\varX$ such that
  $\actGen{\varX}{\valVi}$ is fired in $\runS$, then:
  \begin{enumerate}
  \item if $(\txT,i) \not\in \ran(\txMapC)$ then $(\txT,i)$ is unspendable;
  \item if $(\txT,i) = \txMapC(\varY)$ for some $\varY$, then $(\txT,i)$ is spendable, 
    and $\confDep[\varY]{\pmvA}{\valV:\tokT}$ occurs in $\confS{\runS}$,
    where $\pmvA = (\txT,i).\tkown$.
  \end{enumerate}
\end{lem}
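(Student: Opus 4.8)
The plan is to obtain the statement by combining two ingredients: \textbf{(A)} a syntactic analysis of the token script $\tokScript$, showing that every spendable token output on $\bcB[\runC]$ has a \emph{genuine minting ancestry}; and \textbf{(B)} an invariant maintained along the coherence derivation $\coherRel{\runS}{\runC}{}$, identifying exactly which token outputs of $\bcB[\runC]$ correspond to symbolically-minted tokens. All unspendability claims are to be read modulo the negligible-probability event that the hash $\hashSem{}$ collides (unless collision-freeness is already built into the coherence relation, in which case the argument is deterministic). Observe that the statement is carefully phrased to speak only about token outputs whose $\tkid$ is the hash of a blockchain output genuinely tied to a symbolic $\genOp$: the token outputs that an adversary may freely fabricate (minting on an output outside $\ran(\txMapC)$) have a $\tkid$ equal to the hash of that output, hence --- by collision resistance --- lie outside every case of the lemma.

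For \textbf{(A)} I would proceed by well-founded recursion on the transaction DAG of $\bcB[\runC]$. Suppose $(\txT,i)$ has script $\tokScript$ and is spendable; then $\tokScript$ evaluates to a non-zero value when $(\txT,i)$ is redeemed. The script first tests $\verrec{\ptxo{}{1}}$, i.e.\ whether the output $\txT.\txIn[1]{}$ spent by the first input of $\txT$ also carries $\tokScript$. If it does not, $(\txT,i)$ is a $\genOp$ output and the script forces $(\txT,i).\tkid=\txid{\ptxo{}{1}}=\hashSem{\txT.\txIn[1]{}}$, with $\txT.\txIn[1]{}$ a non-token output holding $0\,\BTC$ and $\txT$ having a single output. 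If it does, then (since every transaction of $\bcB[\runC]$ is a valid extension, so $\tokScript$ already evaluated to a non-zero value when $\txT$ redeemed $\txT.\txIn[1]{}$) a case inspection of the $\burnOp/\splitOp/\joinOp/\exchangeOp/\giveOp$ sub-scripts shows that a $\tokScript$-carrying output of $\txT$ can only be produced by a $\splitOp$, $\joinOp$, $\exchangeOp$ or $\giveOp$ action, each of which --- through its $\tkid$-preservation lines together with the $\verrec{}$ checks on the inputs and outputs of the redeeming transaction --- forces $(\txT,i).\tkid$ to equal the $\tkid$ of the relevant token parent of $\txT$, and that parent to carry $\tokScript$ as well. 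Iterating, $(\txT,i)$ is reachable only along a chain of $\tokScript$-outputs, all with the same $\tkid$, from a $\genOp$ output $(\txT[0],1)$ whose spent input $\txT[0].\txIn[1]{}$ is a non-token output with $(\txT,i).\tkid=\hashSem{\txT[0].\txIn[1]{}}$. The delicate sub-cases are $\joinOp$ and $\exchangeOp$, which redeem two inputs: for $\joinOp$ the sub-script in \emph{each} of the two inputs checks both that the sibling carries $\tokScript$ and that both inputs' $\tkid$ coincide with the redeeming output's $\tkid$, so the two inputs share a $\tkid$, which by collision resistance is precisely what forbids merging two genuinely distinct tokens; $\exchangeOp$ is analogous, with the two outputs treated independently.

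For \textbf{(B)} I would induct on the length of the coherence derivation, proving that $\txMapC$ maps the names of deposits in $\confS{\runS}$ exactly onto the token outputs of $\bcB[\runC]$ lying on a chain (as in (A)) rooted at a $\genOp$ transaction whose spent input is $\txMapC(\varX)$ for some $\varX$ on which $\actGen{\varX}{\valVi}$ was fired, and that whenever $(\txT,i)=\txMapC(\varY)$ the deposit $\confDep[\varY]{\pmvA}{\valV:\tokT}$ occurs in $\confS{\runS}$ with $\pmvA=(\txT,i).\tkown$ and $\valV=(\txT,i).\tkval$. Broadcast moves leave $\bcB[\runC]$ unchanged; for a transaction append, the crucial point (established in the treatment of ``other transactions'') is that a transaction redeeming a symbolically-tracked output must itself be mirrored by a symbolic action, so that a tracked token chain stays tracked as it is extended, while a transaction with no tracked input yields no symbolic step and stays outside $\ran(\txMapC)$. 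Combining (A) and (B): if $(\txT,i).\tkid$ is the hash of no output of $\bcB[\runC]$, then $(\txT,i)$ has no $\genOp$ ancestor and is therefore unspendable --- the first claim. Otherwise, assume $(\txT,i).\tkid=\hashSem{\txMapC(\varX)}$ with $\genOp$ fired on $\varX$ in $\runS$. If $(\txT,i)$ were spendable, (A) would give a $\genOp$ ancestor $(\txT[0],1)$ with $\hashSem{\txT[0].\txIn[1]{}}=\hashSem{\txMapC(\varX)}$, whence $\txT[0].\txIn[1]{}=\txMapC(\varX)$ by collision resistance; this $\genOp$ is then mirrored in $\runS$, and by (B) the whole chain down to $(\txT,i)$ is tracked, so $(\txT,i)\in\ran(\txMapC)$. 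Contrapositively, if $(\txT,i)\notin\ran(\txMapC)$ then $(\txT,i)$ is unspendable (item~(1)); and if $(\txT,i)=\txMapC(\varY)$, then by (B) $\confDep[\varY]{\pmvA}{\valV:\tokT}$ occurs in $\confS{\runS}$ with $\pmvA=(\txT,i).\tkown$, and the spendability of $\txMapC(\varY)$ follows from \Cref{lem:coher:s-to-c} (item~(2)).

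I expect the main obstacle to be ingredient \textbf{(A)}: carrying out the case split on $\tokScript$ carefully enough that the two-input actions $\joinOp$ and $\exchangeOp$ are handled soundly --- in particular showing that a spendable $\tokScript$-output really does constrain both of its relevant parents to be $\tokScript$-outputs with matching $\tkid$ --- and threading the collision-resistance hypothesis through the recursion so that each unspendability conclusion is genuinely ``unspendable except on a negligible-probability collision of $\hashSem{}$''. A secondary, more clerical difficulty is keeping the coherence bookkeeping of $\txMapC$, $\tkMapC$ and the per-action rules synchronised through all the cases of the induction in (B), especially for transactions appended by dishonest users, which coherence simulates as symbolic burns.
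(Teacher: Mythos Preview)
Your approach is sound, but it is organised differently from the paper's proof. You separate the argument into two independent ingredients: \textbf{(A)} a well-founded recursion on the transaction DAG establishing that every spendable $\tokScript$-output has a genuine $\genOp$ ancestor with matching $\tkid$, and \textbf{(B)} an induction on the coherence derivation tracking which token chains are symbolically represented. The paper instead runs a \emph{single} induction on the coherence derivation $\coherRel{\runS}{\runC}{}$ and never isolates (A) as a standalone lemma. The trick that lets the paper avoid a separate DAG recursion is this: whenever a fresh transaction $\txT$ is appended (inductive case~2), its parent output $\txT.\txIn[1]{}$ was still in $\utxo{\bcB[\runCi]}$ at the previous step, so the induction hypothesis for the \emph{shorter} computational run applies directly to the parent. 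From the $\tokScript$ gen-check one gets that either $(\txT,i).\tkid$ is the hash of that parent (immediately contradicting the case hypothesis), or the parent carries $\tokScript$ with the same $\tkid$; in the latter case the IH declares the parent unspendable, contradicting the fact that $\txT$ just spent it. Thus the ``ancestry'' argument you package as (A) is performed one step at a time, inside the coherence induction.

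What each approach buys: the paper's single induction is more economical and avoids restating the script case analysis twice, but it leaves the $\tkid$-preservation step (``the parent has the same script and $\tkid$'') somewhat implicit --- it really comes from evaluating the \emph{parent's} $\tokScript$ when $\txT$ redeemed it, exactly the point you spell out in (A). Your decomposition is longer but more transparent: (A) is a purely syntactic fact about $\tokScript$ on any valid blockchain, independent of coherence, and (B) is bookkeeping. The two-input cases ($\joinOp$, $\exchangeOp$) and the collision-resistance threading that you flag as the main obstacle are genuinely the delicate parts in both proofs; the paper handles them no more explicitly than you do.
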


Note that all the results above require as hypothesis that \mbox{$\coherRel{\runS}{\runC}{}$}.
In the rest of this~\namecref{sec:computational-soundness} we show that,
with overwhelming probability, for each computational run $\runC$ there exists
a symbolic run $\runS$ which is coherent with $\runC$.
Actually, our computational soundness result 
is more precise than that.
First, we consider only a subset $\PartT$ of users to be honest, 
while we consider all the others dishonest: 
without loss of generality, we model them as a single adversary $\Adv$.
We assume that both honest users and the adversary have a \emph{strategy},
which allows them to decide the next actions, according to the past run.
Our computational soundness result establishes that,
with overwhelming probability, 
any computational run conforming to the (computational) strategies 
has a corresponding symbolic run conforming to the corresponding (symbolic) strategies.
Consequently, if there exists an attack at the computational level,
then the attack is also observable at the symbolic level.

\paragraph*{Symbolic strategies}

The strategy $\stratS{\pmvA}$ of a honest user $\pmvA$ 
is a PTIME algorithm which allows $\pmvA$ to select which action(s) to perform, 
among those permitted by the token semantics.
$\stratS{\pmvA}$ receives as input a finite symbolic run $\runS$,
and outputs a finite set of enabled actions,
with the constraint that $\stratS{\pmvA}$ cannot output authorizations for $\pmvB \neq \pmvA$.
We require strategies to be \emph{persistent}: 
if on a run $\stratS{\pmvA}$ chooses an action $\labS$, 
and $\labS$ is not taken as the next step in the run 
(\eg, because some other user acts earlier), 
then $\stratS{\pmvA}$ must still choose $\labS$ after that step, 
if still enabled.
%
The \emph{adversary} $\Adv$ acts on behalf of all the dishonest users,
and controls the scheduling among all users (including the honest ones).
Her symbolic strategy $\stratS{\Adv}$ is a PTIME algorithm 
taking as input the current run and the sets of moves outputted by the strategies of honest users. 
The output of $\stratS{\Adv}$ 
is a single action $\labS$ (to be appended to the current run).
To rule out authorization forgeries,
we require that if $\labS$ is an authorization by some honest $\pmvA$, 
then it must be chosen by $\stratS{\pmvA}$.
%
%
Fixing a set of strategies $\stratSSet$ 
--- both for the honest users and for the adversary --- 
we obtain a unique run, which is made by the sequence of actions 
chosen by $\stratS{\Adv}$
when taking as input the outputs of the honest users' strategies.
We say that this run is \emph{conformant} to $\stratSSet$.
When $\stratS{\Adv} \not\in \stratSSet$,
we say that $\runS$ conforms to $\stratSSet$
when there exists some $\stratS{\Adv}$ such that
$\runS$ conforms to $\stratSSet \cup \setenum{\stratS{\Adv}}$.

\paragraph*{Computational strategies}

A computational strategy $\stratC{\pmvA}$ for a honest user $\pmvA$ is a PPTIME algorithm
which receives as input a computational run $\runC$,
and outputs a finite set of computational labels.
The choice among these labels is controlled by $\Adv$'s strategy, specified below.
We assume that each user knows her private key, and the public keys of all the other users.
We impose a few sanity constraints: 
\begin{inlinelist}
\item we forbid $\pmvA$ to impersonate another user;
\item \label{item:stratC:witness} 
  if the strategy outputs a transaction $\txT$, 
  then $\txT$ must be a valid update of the blockchain $\bcB[\runC]$
  obtained from the run in input, and
  all the witnesses of $\txT$ have already been broadcast in the run;
\item finally, strategies must be persistent, similarly to the symbolic case.
\end{inlinelist}
%
$\Adv$'s computational strategy $\stratC{\Adv}$
is a PPTIME algorithm taking as input a run $\runC$
and the moves chosen by each honest user.
The strategy gives as output a single computational label, 
to be appended to the run.
We assume that $\Adv$ can impersonate any other user.
Given a symbolic strategy $\stratS{\pmvA}$, we can obtain a computational strategy
$\stratC{\pmvA} = \stratMap(\stratS{\pmvA})$
by implementing the symbolic actions as the corresponding computational ones
(this can be done using the same technique as in~\cite{BZ18bitml}).
%
%
Given a set of computational strategies $\stratCSet$ 
--- both for the honest users and for the adversary --- 
we probabilistically generate a \emph{conformant} computational run, 
made by the sequence of actions chosen by $\stratC{\Adv}$
when taking as input the outputs of the honest users' strategies.

\paragraph*{Computational soundness}

We are now ready to establish our main result.
We assume that cryptographic primitives are secure,
\ie, hashes are collision resistant and signatures cannot be forged
(except with negligible probability).

\begin{thm}
  \label{th:computational-soundness}
  Let $\stratSSet$ be a set of symbolic strategies 
  for all $\pmvA \in \PartT$.
  Let $\stratCSet$ be a set of computational strategies
  such that $\stratC{\pmvA} = \stratMap(\stratS{\pmvA})$
  for all $\pmvA \in \PartT$,
  and including an arbitrary adversary strategy $\stratC{\Adv}$.
  For each $k \in \Nat$ and security parameter $\eta$, we define the following experiment:
  \begin{enumerate}
  \item generate $\runC$ conforming to $\stratCSet$, with $|\runC| \leq \eta^k$;
  \item if there exists $\runS$ conforming to $\stratSSet$ such that $\coherRel{\runS}{\runC}{}$
    then return 1, otherwise return 0.
  \end{enumerate}
  Then, the experiment returns 1 with overwhelming probability \wrt the security parameter $\eta$.
\end{thm}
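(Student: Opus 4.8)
The plan is to prove the statement by induction on the length of the generated computational run $\runC$, building in parallel a symbolic run $\runS$ together with the injective maps $\txMapC,\tkMapC$, and maintaining as invariant that $\coherRel{\runS}{\runC}{}$ and that $\runS$ conforms to $\stratSSet$. Conformance to $\stratSSet$ is witnessed by a symbolic adversary strategy $\stratS{\Adv}$ that we construct alongside, by letting it run the (PPTIME) computational adversary $\stratC{\Adv}$ internally and translate its outputs; since this translation is polynomial, $\stratS{\Adv}$ is PTIME, as a symbolic adversary strategy must be. The base case is immediate: a computational run starts from a coinbase $\txT[0]$, which corresponds to the initial symbolic configuration $\confG[0]$ whose $\BTC$ deposits are the images under $\txMapC$ of the outputs of $\txT[0]$.

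For the inductive step I would do a case analysis on the next computational label appended by $\stratC{\Adv}$. If it is a broadcast $\pmvA \rightarrow * : m$ where $m$ is a signature that matches an authorization enabled by $\stratS{\pmvA}$ (for honest $\pmvA$) or, after translation, by the dishonest users, I extend $\runS$ with the corresponding symbolic authorization step; every other broadcast leaves $\runS$ unchanged, consistently with coherence discarding messages that correspond to no symbolic step. If it is a transaction $\txT$ none of whose inputs lies in $\ran(\txMapC)$, I leave $\runS$ unchanged and record $\txT$ (and, by induction, its future descendants) as symbolically irrelevant. If $\txT$ spends some output $\txMapC(\varX)$, I read $\txT.\txarg.\tkop$ together with the shape of $\txT$ (number of inputs and outputs) to decide which of $\genOp,\burnOp,\splitOp,\joinOp,\exchangeOp,\giveOp$ it implements, extend $\runS$ with that action applied to the $\txMapC$-preimages of the spent outputs, and extend $\txMapC,\tkMapC$ on the freshly created outputs and token identifiers; the cases in which $\txT$ spends symbolic deposits but does not match the token script (e.g.\ a dishonest user reshuffling plain $\BTC$ deposits) are simulated, as in the \emph{Other transactions} paragraph, by a symbolic $\burnOp$ of the involved deposits.

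The core of the argument is to show that, except with negligible probability, this last subcase always produces a symbolic step that is \emph{enabled} and keeps the run \emph{conformant}. Enabledness has three ingredients: (a) the authorizations required by the matching rule of \Cref{fig:tokens-s:semantics} are already present in $\confS{\runS}$ --- this holds because, by the sanity constraint on computational strategies, all witnesses of $\txT$ were broadcast in $\runC$ strictly before $\txT$, hence already translated by the broadcast case; (b) the side conditions of the matching rule (value preservation, positivity, sibling/parent and $\tkid$ checks) coincide with the script-level checks enforced by $\tokScript$, by the very construction of $\tokScript$ in \Cref{sec:tokens:c}; (c) the freshness premises hold because $\txMapC,\tkMapC$ stay injective and are extended only on new outputs and tokens. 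Conformance uses persistence of strategies: an honest symbolic move selected earlier and not yet played remains available, and an honest \emph{computational} move, being the image under $\stratMap$ of a symbolic one, corresponds to a move in $\stratS{\pmvA}(\runS)$; moves by the adversary (possibly impersonating dishonest users) are exactly those that the constructed $\stratS{\Adv}$ is allowed to output.

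The two ways the construction can fail are genuine cryptographic bad events, and controlling them is the main obstacle. First, a witness of some appended transaction could contain a valid signature $\sig{\pmvA}{\txTi}$ with $\pmvA\in\PartT$ on a message $\txTi$ that $\stratS{\pmvA}$ never authorized; then the broadcast case produces no authorization and the transaction case later gets stuck. We exclude this by a reduction to existential unforgeability (EUF-CMA) of the signature scheme: simulate the whole experiment, answering $\pmvA$'s signing requests with a signing oracle and outputting the forged message/signature pair; since $|\runC|\le\eta^k$, only polynomially many queries are made. Second, two distinct transaction outputs occurring in $\bcB[\runC]$ could have colliding $\hashSem{\cdot}$ values; this is precisely what would let the adversary produce a \emph{spendable} forged token unit whose $\tkid$ equals an honest token's identifier, contradicting \Cref{lem:coher:c-to-s}, and it is excluded by a reduction to collision resistance of $\hashSem{\cdot}$. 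A union bound over the polynomially many signatures and pairs of outputs bounds the probability of any bad event by a negligible function of $\eta$; on its complement the inductive construction runs deterministically to completion, yielding the required $\runS$, and the experiment returns $1$. Hence the experiment returns $1$ with overwhelming probability in $\eta$.
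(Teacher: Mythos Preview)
Your overall architecture matches the paper's proof: both build the symbolic run inductively along the computational run (the paper phrases it as a maximal-coherent-prefix argument, but that is the same induction), both do the case split on whether the new label is a broadcast or a transaction, both argue enabledness of the symbolic step from the checks enforced by $\tokScript$ together with the previously broadcast witnesses, and both isolate signature forgery as the negligible-probability failure mode.

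There is, however, one case your dichotomy does not cover. You split the transaction case into ``all inputs unmapped'' (ignore), ``spends $\txMapC(\varX)$ and matches a token action'' (replay that action on the $\txMapC$-preimages), and ``spends symbolic deposits but does not match the token script'' (fall back to $\burnOp$). But when $\txT$ spends a mapped token output $(\txTi,i)$ with $(\txTi,i).\txscript=\tokScript$, the script \emph{must} be satisfied, so your $\burnOp$ fallback never fires on token inputs; yet if $(\txT,1).\tkop\in\{3,4\}$ the script allows a second input $(\txTii,j)$ that need not lie in $\ran\txMapC$, and then ``the action applied to the $\txMapC$-preimages of the spent outputs'' is undefined. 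The paper handles exactly these two subcases: for $\tkop=3$ (join) it invokes \Cref{lem:coher:c-to-s} to conclude that an unmapped sibling carrying $\tokScript$ with a matching $\tkid$ is unspendable, so this branch is vacuous; for $\tkop=4$ (exchange) with an unmapped sibling it does \emph{not} use $\exchangeOp$ but instead the symbolic $\giveOp$, relying on the dedicated coherence clauses (items~\ref{item:coherence:actGive}\ref{item:coherence:actGive:b} and~\ref{item:coherence:actGive}\ref{item:coherence:actGive:c}) which were introduced precisely for this situation. Your sketch needs these two patches; otherwise the induction stalls on a perfectly valid $\txT$ that neither your action-replay nor your burn fallback can absorb.
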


As an application of our results, we lift the balance preservation result
of \Cref{lem:tokens-s:tokval-preservation} from the symbolic to the computational model.
We start by defining the balance of a computational token.
Then, in~\Cref{th:tokvals-tokvalc} we establish the equivalence between 
symbolic token balance and the computational one.

\begin{defn}[Balance of a computational token]
  \label{def:token-c:tokval}
  For all computational runs $\runC$ and transaction outputs $\txout$, let:
  \[
  P_{\txout} \; = \; \Big\{ \txouti \; \Big|
  \begin{array}{l}
    \txouti \in \utxo{{\bcB[\runC]}} \;\text{and}\; \txouti \text{ is spendable} \; \text{and} \\
    \txouti.\tkid = \txout \;\text{and}\; \txouti.\txscript = \tokScript
  \end{array}
  \Big\}
  \]
  We define the balance of the computational token $\txout$ in $\runC$ as:
  \[
  \tokvalC{\txout}{\runC} 
  \; = \;
  \sum_{\txouti \in P_{\txout}} \txouti.\tkval
  \]
\end{defn}

\begin{thm}
  \label{th:tokvals-tokvalc}
  Let $\coherRel{\runS}{\runC}{}$.
  If $\actGen{\varX}{\valV}$ is fired in $\runS$, generating a fresh token $\tokT$,
  then:
  \[
  \tokval{\tokT}{\runS} \; = \; \tokvalC{\txMapC(\varX)}{\runC} 
  \]
\end{thm}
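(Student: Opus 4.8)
The plan is to prove the identity by setting up a value-preserving bijection between the symbolic $\tokT$-deposits occurring in $\confS{\runS}$ and the computational transaction outputs that \emph{contribute} to $\tokvalC{\txMapC(\varX)}{\runC}$, and then summing over the two sides. First I would unfold the two definitions. By \Cref{def:token-s:tokval}, $\tokval{\tokT}{\runS} = \sum_{\varY \in D}\valV[\varY]$, where $D$ is the set of names $\varY$ for which $\confDep[\varY]{\pmvA}{\valV[\varY]:\tokT}$ occurs in $\confS{\runS}$ (for some $\pmvA$). By \Cref{def:token-c:tokval}, $\tokvalC{\txMapC(\varX)}{\runC} = \sum_{\txouti \in P}\txouti.\tkval$, where $P = P_{\txMapC(\varX)}$ is the set of \emph{spendable} outputs $\txouti \in \utxo{\bcB[\runC]}$ with $\txouti.\txscript = \tokScript$ and $\txouti.\tkid = \txMapC(\varX)$. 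So the goal reduces to showing that $\varY \mapsto \txMapC(\varY)$ restricts to a bijection $D \to P$ that preserves values, i.e. with $\txMapC(\varY).\tkval = \valV[\varY]$.

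For the inclusion $\txMapC(D)\subseteq P$ I would take $\varY \in D$ and invoke \Cref{lem:coher:s-to-c}: $\txMapC(\varY)$ is a distinct output of $\utxo{\bcB[\runC]}$, spendable, and storing $\valV[\varY]:\tokT$. Unpacking what ``storing $\valV[\varY]:\tokT$'' means for a user-minted token under the coherence relation, this output has $\txscript = \tokScript$, $\tkval = \valV[\varY]$, and $\tkid$ equal to the identifier of $\tokT$; and since $\tokT$ was minted by the step $\actGen{\varX}{\valV}$ --- which, by the mechanics of rule \genRule{}, identifies the fresh token with the spent $\BTC$-deposit $\varX$, so that computationally $\tkid$ is set to $\txid{\ptxo{}{1}} = \txMapC(\varX)$ --- we get $\txMapC(\varY).\tkid = \txMapC(\varX)$, hence $\txMapC(\varY)\in P$. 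For the reverse inclusion $P \subseteq \txMapC(D)$ I would take $\txouti \in P$ and invoke \Cref{lem:coher:c-to-s}: $\txouti$ is a spendable UTXO with script $\tokScript$ (so it stores some token $\neq \BTC$) whose $\tkid$ equals $\txMapC(\varX)$, with $\actGen{\varX}{\cdot}$ fired in $\runS$. Case~(1) of that lemma would force $\txouti$ to be unspendable, contradicting $\txouti \in P$; so case~(2) holds, giving $\txouti = \txMapC(\varY)$ for some $\varY$ such that $\confDep[\varY]{\pmvA}{\valV[\varY]:\tokTi}$ occurs in $\confS{\runS}$, where $\tokTi$ is the token identified by $\txouti.\tkid = \txMapC(\varX)$ --- that is, $\tokTi = \tokT$ by injectivity of $\tkMapC$ together with $\tkMapC(\tokT) = \txMapC(\varX)$ --- and $\valV[\varY] = \txouti.\tkval$. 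Thus $\varY\in D$ and the value is matched. Injectivity of $\txMapC$ then upgrades the two inclusions to a bijection, and summing yields $\tokval{\tokT}{\runS} = \sum_{\varY\in D}\valV[\varY] = \sum_{\txouti\in P}\txouti.\tkval = \tokvalC{\txMapC(\varX)}{\runC}$.

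The hard part will not be the combinatorics of the bijection but the bookkeeping needed to align three slightly different notions of ``a transaction output holding token $\tokT$'': the informal ``storing $\valV:\tokT$'' used in the statement of \Cref{lem:coher:s-to-c}, the three explicit field conditions defining $P$ in \Cref{def:token-c:tokval}, and the hypotheses of \Cref{lem:coher:c-to-s}. Concretely, I will need the coherence invariant that every computational deposit of a user-minted token carries \emph{exactly} the script $\tokScript$ and records $\tkMapC(\tokT)$ in its $\tkid$ field, together with the fact --- already folded into the very existence of the coherence relation, via collision-resistance of the hash --- that distinct tokens have distinct $\tkid$s, so that the condition $\tkid = \txMapC(\varX)$ genuinely pins down $\tokT$. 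The one subtlety that might look dangerous, namely adversarially forged units of $\tokT$ inflating the computational count, is in fact handled for free: \Cref{lem:coher:c-to-s} already certifies that any such forgery lies outside $\ran(\txMapC)$ and is therefore unspendable, hence excluded from $P$.
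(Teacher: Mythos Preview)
Your proposal is correct and takes essentially the same approach as the paper: the paper proves the equality as two inequalities, using \Cref{lem:coher:s-to-c} for $\tokval{\tokT}{\runS} \leq \tokvalC{\txMapC(\varX)}{\runC}$ and item~2 of \Cref{lem:coher:c-to-s} for the reverse inequality, which is exactly your bijection argument unpacked into its two halves. Your version is in fact slightly more explicit about the bookkeeping (the role of $\tkMapC(\tokT)=\txMapC(\varX)$ and why forged outputs are excluded from $P$), which the paper leaves implicit.
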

\begin{proof}
  We first prove that $\tokval{\tokT}{\runS} \leq \tokvalC{\txMapC(\varX)}{\runC}$.
  For each $\confDep[\varY]{\pmvA}{\valV:\tokT}$ in $\confS{\runS}$,
  by~\Cref{lem:coher:s-to-c} there exists a distinct spendable
  $(\txT,i) = \txMapC(\varY)$ in $\utxo{\bcB[\runC]}$
  such that $(\txT,i).\tkown = \pmvA$, 
  $(\txT,i).\txscript = \tokScript$, 
  $(\txT,i).\tkval = \valV$, 
  and $(\txT,i).\tkid = \txMapC(\varX)$.
  Then, $(\txT,i) \in P_{\txMapC(\varX)}$,
  and so $\valV$ contributes to $\tokvalC{\txMapC(\varX)}{\runC}$.
  We now prove that $\tokval{\tokT}{\runS} \geq \tokvalC{\txMapC(\varX)}{\runC}$.
  Let $\txout \in P_{\txMapC(\varX)}$.
  Since $\txout \in \utxo{\bcB[\runC]}$ and it is spendable,
  by item 2 of~\Cref{lem:coher:c-to-s} some deposit
  $\confDep[]{\pmvA}{\valV:\tokT}$ occurs in $\confS{\runS}$, with $\pmvA = \txout.\tkown$.
  Then, $\valV$ contributes to $\tokval{\tokT}{\runC}$.
\end{proof}

\section{Related work and conclusions}
\label{sec:conclusions}

\newcommand{\ZCB}{\ensuremath{\contrFmt{\it ZCB}}\xspace}

We have proposed a secure and efficient implementation of fungible tokens on Bitcoin,
exploiting neighbourhood covenants, a powerful yet simple extension of the Bitcoin script language.
The security of our construction is established as a computational soundness result
(\Cref{th:computational-soundness}).
This guarantees that adversaries at the Bitcoin level cannot make tokens diverge from
the ideal behaviour specified by the symbolic token semantics, 
unless with negligible probability.

To keep the presentation simple, we have slightly limited the functionality of tokens,
making split/join/exchange actions operate on just two deposits, and omitting time constraints.
Lifting these restrictions would only affect the size, but not the complexity, of our technical development.
Further, it would allow to use tokens \emph{as is} within high-level languages for Bitcoin contracts, 
\eg BitML~\cite{BZ18bitml,bitmlracket},
simplifying the design of financial contracts which manage tokens.
For instance, we could model as follows a basic zero-coupon bond~\cite{PeytonJones00icfp}
where an investor $\pmvA$ pays upfront to a bank $\pmvB$ $5$ units of a token $\tokT$,
and receives back $1 \BTC$ after a maturity date $t$:
\[
  \splitname ( 
  \splitB{5\mbox{:}\tokT}{\withdrawC{\pmvB}}
  \mid
  \splitB{1\mbox{:}\BTC}{\afterC{t}{\withdrawC{\pmvA}}}
  )
\]

A research question arising from our work
is how to exploit neighbourhood covenants in the design of high-level languages for Bitcoin contracts.
Besides enhancing the expressiveness of these languages 
(\eg, by allowing for unbounded recursion), 
neighbourhood covenants would enable a simpler compilation technique,
compared \eg that used in BitML.
Indeed, to guarantee the liveness of contracts, BitML requires the participants in a contract to pre-exchange their signatures on all the transactions obtained by the compiler.
By using covenants, we could avoid this overhead, since the logic which controls that a contract action is permitted is not based on signatures, but is implemented by the covenant.

  \paragraph*{Related work}
\label{sec:related}


In Ethereum, similarly to our approach, tokens are implemented
on top of the platform using a smart contract,
following the ERC-20 and ERC-712 standards~\cite{ERC20,ERC721}.
An alternative approach is to provide a native support for tokens:
the works~\cite{Chakravarty20utxoma,Chakravarty20tokens} follow this
approach, proposing an extension of the UTXO model used in the Cardano blockchain~\cite{Chakravarty20wtsc}.
The work~\cite{Zahnentferner20iacr} also proposes an extension of the UTXO model
that supports native tokens.
Unlike the previous approaches, in this model there is no privileged crypto-currency:
transaction fees are paid in the same currency which is being exchanged.
Algorand~\cite{algorandAssets} supports native tokens in an account-based model,
allowing their minting, burn, transfer, 
and their exchange through atomic groups of transactions.


The first proposals of covenants in Bitcoin date back at least to 2013~\cite{bitcointalk-covenants}. 
Nevertheless, their inclusion into the official Bitcoin protocol is still uncertain,
mainly because of the cautious approach of the Bitcoin community
to accept extensions~\cite{BIP0002}.
The emerging of Bitcoin layer-2 protocols like the Lightning Network~\cite{Poon15lighting}
has revived the interest in covenants,
as witnessed by a recent Bitcoin Improvement Proposal (BIP 119~\cite{BIP119}).
Currently, covenants are supported by Bitcoin Cash~\cite{cashscript-covenants},
a mainstream blockchain platform originated from a fork of Bitcoin.

The work~\cite{Moser16bw} proposes a new opcode {\small\sf CheckOutputVerify}
to explicitly constrain the outputs of the redeeming transaction,
while \cite{Oconnor17bw} implements covenants 
by extending the current implementation of {\small\sf versig}
with a new opcode {\small\sf CheckSigFromStack},
which can check a signature on \emph{arbitrary} data on the stack.
Both~\cite{Moser16bw} and \cite{Oconnor17bw} can constrain
the script of the redeeming transaction
to contain the same covenant of the spent one, 
enabling recursive covenants similarly to our $\verrec{\rtxo{}{n}}$.

An alternative approach is to implement covenants without adding new opcodes.
This approach is proposed by~\cite{Swambo20bitcoin},
which relies on modified signature verification scheme,
allowing users to sign a transaction \emph{template}, 
\ie to sign only parts of a transaction, leaving the state parameters variable.
The idea that covenants would allow to implement state machines on Bitcoin was 
first made by~\cite{Oconnor17bw}:
in~\Cref{th:covenants:turing-completeness} we show that, 
using our neighbourhood covenants, Bitcoin can actually simulate counter machines, 
assuming unbounded integers.

The work~\cite{Chepurnoy19cbt} describes an approach to extend the UTXO model 
with a variant of covenants which, similarly to ours, can access the redeeming 
and the sibling transactions.
Besides this, the extension in \cite{Chepurnoy19cbt} features registers, 
loops, complex data structures, and native tokens,
making it quite distant from an implementation on Bitcoin.
Compared to~\cite{Chepurnoy19cbt}, our extension devises a minimal extension to the UTXO model, 
so to allow for an efficient implementation on Bitcoin, and support secure fungible tokens.

The work~\cite{BLZ20isola} introduces a formal model of covenants,
which can be implemented in Bitcoin with modifications similar to the ones discussed
in~\Cref{sec:covenants}.
However, since the covenants in~\cite{BLZ20isola} only constrain the redeeming transaction,
their implementation is simpler, since it does not require to extend the UTXO set with the parents. 
As a downside, the covenants of~\cite{BLZ20isola} do not allow to implement fungible tokens.

  \paragraph*{Acknowledgements}
Massimo Bartoletti is partially supported by
Convenzione Fondazione di Sardegna e Atenei Sardi project F74I19000900007 \emph{``ADAM''}.
Stefano Lande is supported by P.O.R.\ F.S.E.\ 2014-2020.
Roberto Zunino is partially supported by
MIUR PON \textit{``Distributed Ledgers for Secure Open Communities''}.


\begin{figure*}[tp!]
  \centering
  \begin{center}
      \lstinputlisting[language=balzac,numbers=left,numbersep=7pt,classoffset=1,morekeywords={},classoffset=2,morekeywords={},framexbottommargin=0pt,framexleftmargin=0pt,framexrightmargin=-80pt,xleftmargin=80pt]{full-token.txt}
  \end{center}
  \vspace{-10pt}
  \caption{Full token script.}
  \label{fig:full-token:script}
\end{figure*}

\bibliographystyle{IEEEtran}
\bibliography{main}

\iftoggle{arxiv}{
\clearpage
\appendices
\iftoggle{arxiv}
{\section{Supplementary material for~\Cref{sec:tokens:c}}}
{}
\label{app:coherence}

We now formalize the coherence relation between symbolic and computational runs 
sketched in~\Cref{sec:computational-soundness}.
Below, we call $\tokScript$ the token script in~\Cref{fig:full-token:script},
and we call $\btcScript$ the script:
\[
  \versig{\ctxo{\tkown}}{\rtx.\txWit{}}
\]

We inductively define the relation: 
\[
\coher{\runS}{\runC}{\txMapC}{\tkMapC}{\burnMapC}
\]
where:
\begin{itemize}
\item $\runS$ is a symbolic run;
\item $\runC$ is a computational run;
\item $\txMapC$ is an injective function, mapping the names $\varX$ occurring in $\confS{\runS}$
  (except those only in authorizations)
  to unspent transaction outputs $(\txT,i)$ (where $\txT$ occurs in $\runC$);
\item $\tkMapC$ is an injective function, mapping the tokens $\tokT$ occurring in $\confS{\runS}$
  to unspent transaction outputs $(\txT,i)$ (where $\txT$ occurs in $\runC$).
\item $\burnMapC$ is an injective function, mapping the names  $\varX$ occurring in $\confS{\runS}$ only in
  $\burnOp$ authorizations to off-chain transactions.
\end{itemize}
We require the relation to respect the following invariants:
\begin{itemize}

\item if $\txMapC(x) = (\txT,i)$, then for all $(\txTi,j) \in \txT.\txIn{}$,
  there exists $y$ such that $\txMapC(y) = (\txTi,j)$.

\item if $\confS{\runS} = \confDep[\varX]{\pmvA}{\valV:\BTC} \mid \cdots$,
then there exist $\txT,i$ such that $\txMapC(\varX) = (\txT,i)$ and:
\begin{enumerate}
\item $(\txT,i).\txarg = \constPK[\pmvA]$
\item $(\txT,i).\txscript = \versig{\ctxo{}.\txarg}{\rtx.\txWit{}}$
\item $(\txT,i).\txval = \valV$.
\end{enumerate}

\item if $\confS{\runS} = \confDep[\varX]{\pmvA}{\valV:\tokT} \mid \cdots$
  with $\tokT \neq \BTC$
  then there exist $\txT,i$, $\txTi,j$ such that 
  $\txMapC(\varX) = (\txT,i)$, 
  $\tkMapC(\tokT) = (\txTi,j)$, and:
  \begin{enumerate}
  \item $(\txT,i).\tkop \in 0..5$
  \item $(\txT,i).\tkid = (\txTi,j)$
  \item $(\txT,i).\tkown = \pmvA$
  \item $(\txT,i).\tkval = \valV$
  \item $(\txT,i).\txscript = \tokScript$
  \item $(\txT,i).\txval = 0$
  \end{enumerate}
  
\end{itemize}

\paragraph*{Base case}
$\coher{\runS}{\runC}{\txMapC}{\tkMapC}{\burnMapC}$
holds if the following conditions hold:
\begin{itemize}
\item $\runS$ and $\runC$ are initial, \ie: $\runS$ is \emph{initial} if it contains only bitcoin deposits,
and $\runC$ only contains from a coinbase transaction $\txT[0]$;
\item $\txMapC{}$ injectively maps exactly the names $\varX$ of the deposits $\confDep[\varX]{\pmvA}{\valV:\BTC}$
  in $\confS{\runS}$ to outputs $(\txT[0],i)$ of the form:
  \begin{enumerate}
  \item $(\txT[0],i).\txarg = \constPK[\pmvA]$
  \item $(\txT[0],i).\txscript = \versig{\ctxo{}.\txarg}{\rtx.\txWit{}}$
  \item $(\txT[0],i).\txval = \valV$
  \end{enumerate}
\item $\dom(\tkMapC) = \emptyset$
\end{itemize}

\paragraph*{Inductive case 1}
$\coher{\runS\labS}{\runC\labC}{\txMapC}{\tkMapC}{\burnMapC}$ holds if
$\coher{\runS}{\runC}{\txMapCi}{\tkMapCi}{\burnMapCi}$ and one of the following cases applies:
\begin{enumerate}[(1)]

\item \label{item:coherence:actGen}
$\labS = \actGen{\varX}{\valV}$. 
In $\confS{\runS}$ there exists a deposit $\confDep[\varX]{\pmvA}{0:\BTC}$
and $\labC = \txT$, where:
\begin{itemize}
\item $\txT.\txIn{} = \txMapCi(\varX)$
\item $\txT$ has exactly one output
\item $(\txT,1).\tkop = 0$
\item $(\txT,1).\tkid = \txMapCi(\varX)$
\item $(\txT,1).\tkown = \pmvA$
\item $(\txT,1).\tkval = \valV$
\item $(\txT,1).\txscript = \tokScript$
\item $(\txT,1).\txval = 0$
\end{itemize}
In $\confS{\runS\labS}$ there exists a deposit $\confDep[\varY]{\pmvA}{\valV:\tokT}$ 
which did not exist in $\confS{\runS}$.
The mapping $\txMapC$ inherits all the bindings of $\txMapCi$ except that for $\varX$,
and extends it with the binding $\varY \mapsto (\txT,1)$.
The mapping $\tkMapC$ extends $\tkMapCi$ with the binding $\tokT \mapsto \txMapCi(\varX)$.
The mapping $\burnMapC$ is equal to $\burnMapCi$.

\item \label{item:coherence:actSplit}
$\labS = \actSplit{\varX}{\valV}{\pmvB}$.
In $\confS{\runS}$ there exists a deposit $\confDep[\varX]{\pmvA}{(\valV+\valVi):\tokT}$
and $\labC = \txT$.
The following conditions hold:
\begin{itemize}
\item $\txT.\txIn{} = \txMapCi(\varX)$
\item $\txT$ has exactly two outputs (of indices 1 and 2)
\item $(\txT,1).\tkown = \pmvA$
\item $(\txT,2).\tkown = \pmvB$
\end{itemize}
Further, if $\tokT = \BTC$, the following conditions hold:
\begin{itemize}
  \item $(\txT,1).\txscript = (\txT,2).\txscript = \btcScript$
  \item $(\txT,1).\txval = \valV$
  \item $(\txT,2).\txval = \valVi$
\end{itemize}
Further, if $\tokT \neq \BTC$, the following conditions hold:
\begin{itemize}
  \item $(\txT,1).\tkop = 2$
  \item $(\txT,1).\tkid = (\txT,2).\tkid = \txMapCi(\varX).\tkid$
  \item $(\txT,1).\tkval = \valV$
  \item $(\txT,2).\tkval = \valVi$
  \item $(\txT,1).\txscript = (\txT,2).\txscript = \tokScript$    
  \item $(\txT,1).\txval = (\txT,2).\txval = 0$
\end{itemize}
In both cases, in $\confS{\runS\labS}$ there exist two deposits 
$\confDep[\varY]{\pmvA}{\valV:\tokT}$ and $\confDep[\varYi]{\pmvB}{\valVi:\tokT}$
which did not exist in $\confS{\runS}$.
The mapping $\txMapC$ inherits all the bindings of $\txMapCi$ except that for $\varX$,
and extends it with the bindings $\varY \mapsto (\txT,1)$ and $\varYi \mapsto (\txT,2)$.
The mapping $\tkMapC$ is equal to $\tkMapCi$.
The mapping $\burnMapC$ is equal to $\burnMapCi$.

\item \label{item:coherence:actJoin}
$\labS = \actJoin{\varX}{\varY}{\pmvC}$.
In $\confS{\runS}$ there exist two deposits 
$\confDep[\varX]{\pmvA}{\valV:\tokT}$ and $\confDep[\varY]{\pmvB}{\valVi:\tokT}$,
and $\labC = \txT$.
The following conditions hold:
\begin{itemize}
\item $\txT$ has exactly two inputs
\item $\txT.\txIn[1]{} = \txMapCi(\varX)$
\item $\txT.\txIn[2]{} = \txMapCi(\varY)$
\item $\txT$ has exactly one output
\item $(\txT,1).\tkown = \pmvC$
\end{itemize}
Further, if $\tokT = \BTC$, the following conditions hold:
\begin{itemize}
  \item $(\txT,1).\txval = \valV+\valVi$
  \item $(\txT,1).\txscript = \btcScript$
\end{itemize}
Further, if $\tokT \neq \BTC$, the following conditions hold:
\begin{itemize}
  \item $(\txT,1).\tkop = 3$
  \item $(\txT,1).\tkid = \txMapCi(\varX).\tkid = \txMapCi(\varY).\tkid$
  \item $(\txT,1).\tkval = \valV+\valVi$
  \item $(\txT,1).\txscript = \tokScript$    
  \item $(\txT,1).\txval = 0$
\end{itemize}
In both cases, in $\confS{\runS\labS}$ there exist one deposit
$\confDep[\varZ]{\pmvA}{(\valV+\valVi):\tokT}$
which did not exist in $\confS{\runS}$.
The mapping $\txMapC$ inherits all the bindings of $\txMapCi$ except those for $\varX$ and $\varY$,
and extends it with the binding $\varZ \mapsto (\txT,1)$.
The mapping $\tkMapC$ is equal to $\tkMapCi$.
The mapping $\burnMapC$ is equal to $\burnMapCi$.

\item \label{item:coherence:actExchange}
$\labS = \actExchange{\varX}{\varY}$.
In $\confS{\runS}$ there exist two deposits 
$\confDep[\varX]{\pmvA}{\valV:\tokT}$ and $\confDep[\varY]{\pmvB}{\valVi:\tokTi}$,
$\tokT \neq \BTC$, 
and $\labC = \txT$.
The following conditions hold:
\begin{itemize}
\item $\txT$ has exactly two inputs and two outputs
\item $\txT.\txIn[1]{} = \txMapCi(\varX)$
\item $\txT.\txIn[2]{} = \txMapCi(\varY)$
\item $(\txT,1).\tkop = 4$
\item $(\txT,1).\tkown = \pmvB$
\item $(\txT,2).\tkown = \pmvA$
\item $(\txT,1).\tkid = \txMapCi(\varX).\tkid$
\item $(\txT,1).\tkval = \valV$
\item $(\txT,1).\txval = 0$
\item $(\txT,1).\txscript = \tokScript$
\end{itemize}
Further, if $\tokTi = \BTC$, the following conditions hold:
\begin{itemize}
  \item $(\txT,2).\txval = \valVi$
  \item $(\txT,2).\txscript = \btcScript$
\end{itemize}
Further, if $\tokTi \neq \BTC$, the following conditions hold:
\begin{itemize}
  \item $(\txT,2).\txscript = \tokScript$
  \item $(\txT,2).\tkval = \valVi$
  \item $(\txT,2).\tkid = \txMapCi(\varY).\tkid$
  \item $(\txT,2).\txval = 0$
\end{itemize}
In both cases, in $\confS{\runS\labS}$ there exist two deposits
$\confDep[\varXi]{\pmvA}{\valVi:\tokTi}$ and $\confDep[\varYi]{\pmvB}{\valV:\tokT}$
which did not exist in $\confS{\runS}$.
The mapping $\txMapC$ inherits all the bindings of $\txMapCi$ except those for $\varX$ and $\varY$,
and extends it with the bindings $\varXi \mapsto (\txT,2)$ and $\varYi \mapsto (\txT,1)$.
The mapping $\tkMapC$ is equal to $\tkMapCi$.
The mapping $\burnMapC$ is equal to $\burnMapCi$.

\item \label{item:coherence:actGive} 
$\labS = \actGive{\varX}{\pmvB}$.
In $\confS{\runS}$ there exists a deposit $\confDep[\varX]{\pmvA}{\valV:\tokT}$
and $\labC = \txT$.
One of the following subcases must apply:
\begin{enumerate}
\item \label{item:coherence:actGive:a}
  The following conditions hold:
  \begin{itemize}
  \item $\txT.\txIn{} = \txMapCi(\varX)$
  \item $\txT$ has exactly one output
  \item $(\txT,1).\tkown = \pmvB$
  \end{itemize}
  Further, if $\tokT = \BTC$, the following conditions hold:
  \begin{itemize}
  \item $(\txT,1).\txscript = \btcScript$
  \item $(\txT,1).\txval = \valV$
  \end{itemize}
  Further, if $\tokT \neq \BTC$, the following conditions hold:
  \begin{itemize}
  \item $(\txT,1).\tkop = 5$ 
  \item $(\txT,1).\tkid = \txMapCi(\varX).\tkid$
  \item $(\txT,1).\tkval = \valV$
  \item $(\txT,1).\txscript = \tokScript$    
  \item $(\txT,1).\txval = 0$
  \end{itemize}
  In both cases, in $\confS{\runS\labS}$ there exists a deposit
  $\confDep[\varY]{\pmvB}{\valV:\tokT}$ which did not exist in $\confS{\runS}$.
  
\item \label{item:coherence:actGive:b}
  The following conditions hold:
  \begin{itemize}
  \item $\txT$ has exactly two inputs and two outputs 
  \item $(\txT,1).\txIn{} = \txMapCi(\varX)$
  \item $(\txT,2).\txIn{} \not\in \ran(\txMapCi)$
  \item $(\txT,1).\tkop = 4$ 
  \item $(\txT,1).\txscript = \tokScript$
  \item $(\txT,1).\tkid = \txMapCi(\varX).\tkid$
  \item $(\txT,1).\tkown = \txT.\txIn[2]{}.\tkown$ 
  \item $(\txT,1).\tkval = \valV$
  \item $(\txT,1).\txval = 0$
  \end{itemize}
  Note: in this case we are using the symbolic $\giveOp$ action to match the appending of 
  a transaction $\txT$ which consumes two inputs: 
  the first one is mapped by $\txMapCi$ and encodes a user-defined token $\tokT$,
  while the second one is not mapped by $\txMapCi$.
  So, the computational action encoded by $\txT$ is an $\exchangeOp$ between $\tokT$ and some $\tokTi$.
  Appending $\txT$ actually preserves the token identifier and its value, 
  just changing its owner. 

\item \label{item:coherence:actGive:c}
  The following conditions hold:
  \begin{itemize}
  \item $\txT$ has exactly two inputs and two outputs 
  \item $(\txT,1).\txIn{} \not\in \ran(\txMapCi)$
  \item $(\txT,2).\txIn{} = \txMapCi(\varX)$
  \item $(\txT,1).\tkop = 4$ 
  \item $(\txT,2).\txscript = \tokScript$
  \item $(\txT,2).\tkid = \txMapCi(\varX).\tkid$
  \item $(\txT,2).\tkown = \txT.\txIn[1]{}.\tkown$ 
  \item $(\txT,2).\tkval = \valV$
  \item $(\txT,2).\txval = 0$
  \end{itemize}

\end{enumerate}
In all the subcases above,
the mapping $\txMapC$ inherits all the bindings of $\txMapCi$ except that for $\varX$,
and extends it with the binding 
$\varY \mapsto (\txT,1)$ (in subcases \ref{item:coherence:actGive:a}-\ref{item:coherence:actGive:b}),
or with $\varY \mapsto (\txT,2)$ (in subcase \ref{item:coherence:actGive:c}).
The mapping $\tkMapC$ is equal to $\tkMapCi$.
The mapping $\burnMapC$ is equal to $\burnMapCi$.

\item \label{item:coherence:actBurn} 
$\labS = \actBurn{\vec{\varX}}{\varY}$.
Let $\vec{\varX} = \varX[1] \cdots \varX[n]$.
In $\confS{\runS}$ there exist $n$ deposits $\confDep[{\varX[i]}]{\pmvA[i]}{\valV[i]:\tokT[i]}$,
and $\labC = \txT = \burnMapCi(\varY)$.
One of the following subcases must apply:
\begin{enumerate} 

\item $\tokT[1] \neq \BTC$ and $n=1$. Further:
  \begin{itemize}
  \item $\txT.\txIn{} = \txMapCi(\varX[1])$
  \item $\txT$ has exactly one output
  \item $(\txT,1).\tkop = 1$
  \item $(\txT,1).\txscript = \false$
  \item $(\txT,1).\txval = 0$
  \end{itemize}
  Note: in this case we are using the symbolic $\burnOp$ action to match the appending of 
  a transaction $\txT$ which consumes one input mapped by $\txMapCi$,
  and encodes a user-defined token.
  The script of $\txT$ is set to $\false$, actually destroying the token. 

\item $\tokT[1] = \BTC$ and $\tokT[i] = \BTC$ for all $i \in 1..n$. Further:
  \begin{itemize}
  \item $\txT.\txIn{}$ contains at least $\txMapCi(\varX[i])$ for all $i \in 1..n$,
    but it does not contain any other output in $\ran(\txMapCi)$:
    \[
    \setcomp{\varX}{\txMapCi(\varX) \in \txT.\txIn{}} = \setenum{\varX[1], \ldots, \varX[n]}
    \]
  \item $\txT$ does not correspond to any action $\labS$ discussed in the 
    items~\ref{item:coherence:actGen}-\ref{item:coherence:actGive} above.
  \end{itemize}

\end{enumerate}
The mapping $\txMapC$ is equal to $\txMapCi$.
The mapping $\tkMapC$ is equal to $\tkMapCi$.
The mapping $\burnMapC$ inherits all the bindings in $\burnMapCi$, except that for $\varY$.

\item \label{item:coherence:authGen}
$\labS = \labAuth[\varX]{\pmvA}{\actGen{\varX}{\valV}}$.
In $\confS{\runS}$ there exists a deposit $\confDep[\varX]{\pmvA}{0:\BTC}$.
$\labC = \pmvB \rightarrow *:m$, 
where $m$ is the signature of $\pmvA$ on a transaction $\txT$ satisfying the 
conditions in item~\ref{item:coherence:actGen}.
The mappings $\txMapC$, $\tkMapC$, $\burnMapC$ are equal to
$\txMapCi$, $\tkMapCi$, $\burnMapCi$, respectively.

\item \label{item:coherence:authSplit}
$\labS = \labAuth[\varX]{\pmvA}{\actSplit{\varX}{\valV}{\pmvB}}$.
In $\confS{\runS}$ there exists a deposit $\confDep[\varX]{\pmvA}{(\valV+\valVi):\tokT}$.
$\labC = \pmvB \rightarrow *:m$, 
where $m$ is the signature of $\pmvA$ on a transaction $\txT$ satisfying the 
conditions in item~\ref{item:coherence:actSplit}.
The mappings $\txMapC$, $\tkMapC$, $\burnMapC$ are equal to
$\txMapCi$, $\tkMapCi$, $\burnMapCi$, respectively.

\item \label{item:coherence:authJoin}
$\labS = \labAuth[\varZ]{\pmvP}{\actJoin{\varX}{\varY}{\pmvC}}$,
where $(\pmvP,\varZ) \in \setenum{(\pmvA,\varX),(\pmvB,\varY)}$,
for some $\pmvA$ and $\pmvB$.
In $\confS{\runS}$ there exist two deposits 
$\confDep[\varX]{\pmvA}{\valV:\tokT}$ and $\confDep[\varY]{\pmvB}{\valVi:\tokT}$.
$\labC = \pmvB \rightarrow *:m$, 
where $m$ is the signature of $\pmvP$ on a transaction $\txT$ satisfying the 
conditions in item~\ref{item:coherence:actJoin}.
The mappings $\txMapC$, $\tkMapC$, $\burnMapC$ are equal to
$\txMapCi$, $\tkMapCi$, $\burnMapCi$, respectively.

\item \label{item:coherence:authExchange}
$\labS = \labAuth[\varZ]{\pmvP}{\actExchange{\varX}{\varY}}$,
where $(\pmvP,\varZ) \in \setenum{(\pmvA,\varX),(\pmvB,\varY)}$,
for some $\pmvA$ and $\pmvB$.
In $\confS{\runS}$ there exist two deposits 
$\confDep[\varX]{\pmvA}{\valV:\tokT}$ and $\confDep[\varY]{\pmvB}{\valVi:\tokTi}$, with $\tokT \neq \BTC$.
$\labC = \pmvB \rightarrow *:m$, 
where $m$ is the signature of $\pmvP$ on a transaction $\txT$ satisfying the 
conditions in item~\ref{item:coherence:actExchange}.
The mappings $\txMapC$, $\tkMapC$, $\burnMapC$ are equal to
$\txMapCi$, $\tkMapCi$, $\burnMapCi$, respectively.

\item \label{item:coherence:authGive}
$\labS = \labAuth[\varX]{\pmvA}{\actGive{\varX}{\pmvB}}$.
In $\confS{\runS}$ there exists a deposit $\confDep[\varX]{\pmvA}{\valV:\tokT}$.
$\labC = \pmvB \rightarrow *:m$, 
where $m$ is the signature of $\pmvA$ on a transaction $\txT$ satisfying the 
conditions in item~\ref{item:coherence:actGive}.
The mappings $\txMapC$, $\tkMapC$, $\burnMapC$ are equal to
$\txMapCi$, $\tkMapCi$, $\burnMapCi$, respectively.

\item \label{item:coherence:authBurn}
$\labS = \labAuth[x_j]{\pmvA[j]}{\actBurn{\vec{x}}{y}}$.
Let $\vec{\varX} = \varX[1] \cdots \varX[n]$.
In $\confS{\runS}$ there exist $n$ deposits $\confDep[{\varX[i]}]{\pmvA[i]}{\valV[i]:\tokT[i]}$.
$\labC = \pmvB \rightarrow *:m$, 
where $m$ is the signature of $\pmvA[j]$ on a transaction $\txT$ satisfying the 
conditions in item~\ref{item:coherence:actBurn}.
The mappings $\txMapC$ and $\tkMapC$ are equal to $\txMapCi$ and $\tkMapCi$, respectively.
If $y \in \dom(\burnMapCi)$, then $\burnMapCi(\varY) = \txT$, and $\burnMapC = \burnMapCi$.
Otherwise, $\burnMapC$ extends $\burnMapCi$ with the binding $\varY \mapsto \txT$.

\end{enumerate}

\paragraph*{Inductive case 2}

$\coher{\runS}{\runC\labC}{\txMapC}{\tkMapC}{\burnMapC}$ holds if
$\coher{\runS}{\runC}{\txMapC}{\tkMapC}{\burnMapC}$ and one of the following cases applies:
\begin{enumerate}[(1)]

\item $\labC = \txT$, where no input of $\txT$ belongs to $\ran{\txMapC}$.
  
\item $\labC = \pmvA \rightarrow *:m$, where $\labC$ does not correspond to any symbolic move,
  according to the inductive case 1.

\end{enumerate}

\section{Supplementary material for~\Cref{sec:computational-soundness}}
\label{sec:app-computational-soundness}


\begin{proofof}{lem:coher:s-to-c}
  By induction on \mbox{$\coherRel{\runS}{\runC}{}$}.
  The base case is trivial.
  Otherwise, there are the following two cases:
  \begin{itemize}

  \item For inductive case 1 of the definition of coherence, we have
    $\runS = \runSi\labS$ and $\runC = \runCi\labC$,
    with \mbox{$\coherRel{\runSi}{\runCi}{}$}.
    There are two subcases:
    \begin{itemize}
    \item $x$ occurs in $\runSi$:
      by the induction hypothesis, $\txMapCi(x)$ is an output with the
      intended fields.  We therefore only need to ensure that
      $\txMapC(x)=\txMapCi(x)$.  Inspecting how each $\labS$ is
      handled in the definition of coherence, we observe that this
      always holds except when $x$ no longer occurs in
      $\confG[\runS]$.  The latter case contradicts the hypothesis.
    \item $x$ does not occur in $\runSi$:
      Inspecting how each $\labS$ is handled in the definition of
      coherence, we observe that this is only possible when
      $\labC = \txT$ and $\txMapC(x)$ points to some output
      of $\txT$ which has the intended fields.
    \end{itemize}

  \item For inductive case 2 of the definition of coherence, we have
    $\runS = \runSi$ and $\runC = \runCi\labC$, with
    \mbox{$\coherRel{\runSi}{\runCi}{}$}.
    By the induction hypothesis, $\txMapCi(x)$ is an output with the
    intended fields. We therefore only need to ensure that
    $\txMapC(x)=\txMapCi(x)$, which holds since in inductive case 2
    $\txMapCi = \txMapC$.
  \end{itemize}
\end{proofof}

\begin{proofof}{lem:coher:c-to-s}
  By induction on \mbox{$\coherRel{\runS}{\runC}{}$}.
  The base case is trivial, since it only involves bitcoins.
  Otherwise, there are the following two cases:
  \begin{itemize}
  \item For inductive case 1 of the definition of coherence, we have
    $\runS = \runSi\labS$ and $\runC = \runCi\labC$,
    with \mbox{$\coherRel{\runSi}{\runCi}{}$}.
    There are two subcases:
    \begin{itemize}
    \item $(\txT, i)$ also occurs (unspent) in $\runCi$:
      We now prove the items of the lemma:
      \begin{itemize}
      \item %
        when $(\txT,i).\tkid$ does not correspond to any outputs, we
        apply the induction hypothesis to prove it is unspendable.

      \item %
        When its $\tkid$ corresponds to some mapped output but
        $(\txT,i) \notin \ran(\txMapC)$ (case 1), we also have
        $(\txT,i) \notin \ran(\txMapCi)$, since otherwise $(\txT, i)$
        would have been spent by $\labC$ (as can be verified
        inspecting all the possible cases in the definition of
        coherence).  Hence the induction hypothesis proves it
        unspendable.

      \item %
        When instead $(\txT,i) = \txMapC(y)$ (case 2), we also have
        $(\txT,i) = \txMapCi(y)$, since otherwise $(\txT, i)$ would
        have been spent by $\labC$ (as can be verified inspecting all
        the possible cases in the definition of coherence).  The
        thesis then follows from the induction hypothesis and the fact
        that $\labS$ can not spend $y$ (otherwise $(\txT,i)$ would
        also be spent).
      \end{itemize}
      
    \item $(\txT, i)$ does not occur in $\runCi$:
      We have $\labC = \txT$.
      Inspecting how each $\labS$ is handled in the definition of
      coherence, we observe that this case is only possible when
      $(\txT, i).\tkid$ is some output in $\ran(\txMapC)$, which
      belongs to the blockchain.
      Further, inspecting the same cases, we observe that $(\txT, i)$
      must also belong to $\ran(\txMapC)$, it is spendable, and that
      it corresponds to a suitable symbolic deposit in
      $\confG[\runS]$.
    \end{itemize}

  \item For inductive case 2 of the definition of coherence, we have
    $\runS = \runSi$ and $\runC = \runCi\labC$, with
    \mbox{$\coherRel{\runSi}{\runCi}{}$}.
    We also have $\txMapCi = \txMapC$ in this case.
    There are two subcases:
    \begin{itemize}
    \item $(\txT, i)$ also occurs (unspent) in $\runCi$:
      the thesis immediately follows from the induction hypothesis,
      which involves the same $\runS$ and $\txMapC$.
    \item $(\txT, i)$ does not occur in $\runCi$:
      by the definition of coherence, this is possible only when
      $\labC = \txT$ and no input of $\txT$ belongs to $\txMapC$.
      We now prove the items of the lemma:
      \begin{itemize}
      \item %
        when $(\txT,i).\tkid$ does not correspond to any output, we
        want to prove $(\txT,i)$ is unspendable. Indeed, the first
        part of the token script requires either that 
        $(\txT,i).\tkid$ is the parent output (contradiction) or that
        it has the same script and $\tkid$.
        In the latter case, by the induction hypothesis, the parent
        was unspendable -- contradiction.        

      \item %
        When $(\txT,i).\tkid$ corresponds to some mapped output but
        $(\txT,i) \notin \ran(\txMapC)$ (case 1), we also
        want to prove it is unspendable.

        Indeed, the first part of the token script requires either
        that $(\txT,i).\tkid$ is the parent output or that it has the
        same script and $\tkid$.
        In the former case, by hypothesis we get that $(\txT,i).\tkid$
        (being a parent) does not belong to $\ran(\txMapC)$, so it
        does not correspond to any mapped output -- contradiction.
        In the latter case, by the induction hypothesis, the parent
        was unspendable -- contradiction.

      \item %
        When instead $(\txT,i) = \txMapC(y)$ (case 2), we want to
        prove it is spendable and that it corresponds to a suitable
        symbolic deposit.  This case is actually impossible since
        $\txMapC(y) = \txMapCi(y)$ is an output which already occurs
        in $\runCi$, contradicting the hypothesis $\labC = \txT$.
      \end{itemize}
    \end{itemize}
  \end{itemize}
\end{proofof}

\begin{proofof}[Computational soundness]{th:computational-soundness}
  Assume that $\runC$ satisfies the hypotheses, 
  but there does not exist any $\runS$ which is coherent to $\runC$
  and conforming to the symbolic strategies. 
  Consider the longest prefix $\runCi$ of $\runC$ such that
  there exists some $\runSi$
  which is coherent (to $\runCi$) 
  and conforming (to the computational strategies). 
  We have that $\runC = \runCi \labC \, \runCii$.
  We now show that either 
  $\runCi\labC$ has a corresponding symbolic run $\runSi\runSii$ 
  which is coherent and conforming to the symbolic strategies
  (contradicting the maximality of $\runCi$), or
  the adversary succeeded in a signature forgery 
  (which can happen only with negligible probability).
  We proceed by cases on $\labC$:

  \paragraph*{Case 1 --- $\labC$ is a broadcast message}

  Let $\labC = \pmvA \rightarrow *:m$.
  Then, either one of the items~\ref{item:coherence:authGen} to~\ref{item:coherence:authBurn} 
  of the inductive case 1 applies, or the inductive case 2 applies 
  (since, by construction, the inductive case 2 catches all the other cases).
  In any case, we find a coherent extension $\runSii$ (possibly empty).
  Further, $\runSi \runSii$ conforms to the symbolic strategies:
  \begin{inlinelist}
  \item if the inductive case 2 was applied, then $\runSii$ is empty, and so conformance 
    follows from the induction hypothesis;
  \item if the inductive case 1 was applied, then the broadcast message $m$ is a signature
    of some user $\pmvB$.
    If $\pmvB$ is dishonest, then $\Adv$ can compute the signature:
    then, there exists a symbolic strategy of $\Adv$ which performs such move,
    and the resulting run conforms to the users' symbolic strategies.
    If $\pmvB$ is honest, either the signature was forged by $\Adv$ 
    (but only with a negligible probability), 
    or produced by $\pmvB$'s computational strategy.
    In the latter case, since the computational strategy of $\pmvB$ is derived from the symbolic one,
    $\pmvB$'s symbolic strategy outputs an authorization of $\pmvB$,
    which is exactly the label contained in $\runSii$.
    Therefore, $\runSi \runSii$ conforms to the symbolic strategies.
  \end{inlinelist}
  
  \paragraph*{Case 2 --- $\labC$ is a transaction}
  Let $\labC = \txT$. We consider the following subcases, according to the inputs of $\txT$:
  \begin{itemize}
    
  \item if none of the inputs of $\txT$ belongs to $\ran(\txMapC)$, 
    then we achieve coherence and conformance with $\runSii$ empty.
    
  \item otherwise, if \emph{all} the inputs of $\txT$ in $\ran(\txMapC)$ 
    correspond to symbolic $\BTC$ deposits, there are two subcases:
    \begin{itemize}
      
    \item $\txT$ is obtained by one of the items 
      from \ref{item:coherence:actGen}, \ref{item:coherence:actSplit}, \ref{item:coherence:actJoin}
      or \ref{item:coherence:actGive} of the inductive case 1 of coherence.
      Note that, by requirement (ii) of the definition of computational strategies,
      then all the witnesses in $\txT$ must have been broadcast in a previous computational step:
      by coherence, the corresponding authorizations have been performed in the symbolic run.
      Then, we can choose for $\runSii$ the symbolic action corresponding to $\txT$,
      which makes $\runSii$ a coherent extension of $\runSi$.
      Conformance holds trivially, by choosing $\Adv$'s symbolic strategy to perform 
      the symbolic action corresponding to $\txT$.
      
    \item $\txT$ is obtained by the item \ref{item:coherence:actBurn}
      of the inductive case 1 of coherence
      (by construction, this catches all the other cases).
      The proof proceeds as in the previous case.
      
    \end{itemize}
    
  \item otherwise, there exists at least one input of $\txT$ in $\ran(\txMapC)$ 
    corresponding to a deposit of a user-defined token, say $\confDep[\varX]{\pmvA}{\valV:\tokT}$.
    Let $(\txTi,i) = \txMapC(\varX)$: since this output stores a user-defined token,
    then it must be $(\txTi,i).\txscript = \tokScript$.
    Since $\txT$ spends $(\txTi,i)$, then $\tokScript$ evaluates to true.
    We proceed by cases on $(\txT,1).\tkop$:
    \begin{itemize}


    \item $\tkop = 1$. In this case appending $\txT$ corresponds to performing a $\burnOp$ action.
      The script $\tokScript$ in $(\txTi,i)$ ensures that
      $\txT$ has only one input, \ie $(\txTi,i)$, and one outputs.
      Since $\tokScript$ requires the signature of $(\txTi,i).\tkown$,
      and since by requirement (ii) of the definition of computational strategies,
      all the witnesses in $\txT$ must have been broadcast in a previous computational step,
      by coherence, it follows that the corresponding authorization (with a fresh name $\varY$)
      has been performed in the symbolic run.          
      Let $\runSii = \actBurn{\varX}{\varY}$.
      Then $\runSii$ is a valid extension of $\runSi$, since the 
      preconditions of rule $\burnRule$ are respected.
      Further, $\runSi\runSii$ is coherent with $\runCi \labC$ by the item \ref{item:coherence:actBurn}
      of the inductive case 1 of coherence.
      Conformance holds trivially, by choosing $\Adv$'s symbolic strategy to perform 
      the symbolic action corresponding to $\txT$ (this also applies to all subcases below).

    \item $\tkop = 2$. In this case appending $\txT$ corresponds to performing a $\splitOp$ action.
      The script $\tokScript$ in $(\txTi,i)$ ensures that:
      \begin{enumerate}
      \item $\txT$ has only one input, \ie $(\txTi,i)$, and two outputs.
        Let $\valV[1] = (\txT,1).\tkval$, let $\valV[2] = (\txT,2).\tkval$,
        and let $\pmvB = (\txT,2).\tkown$;
      \item $(\txT,1).\txscript = \tokScript$;
      \item $\valV[1],\valV[2] \geq 0$, and $\valV[1] + \valV[2] = \valV$.
      \end{enumerate}
      Since $\tokScript$ requires the signature of $(\txTi,i).\tkown = \pmvA$,
      and since by requirement (ii) of the definition of computational strategies,
      all the witnesses in $\txT$ must have been broadcast in a previous computational step,
      by coherence, it follows that the corresponding authorization has 
      been performed in the symbolic run.
      Let $\runSii = \actSplit{\varX}{\valV[1]}{\pmvB}$.
      Then $\runSii$ is a valid extension of $\runSi$, since the 
      preconditions of rule $\splitRule$ are respected.
      Further, $\runSi\runSii$ is coherent with $\runCi \labC$ by the item \ref{item:coherence:actSplit}
      of the inductive case 1 of coherence.

    \item $\tkop = 3$. 
      The script $\tokScript$ in $(\txTi,i)$ ensures that:
      \begin{enumerate}
      \item $\txT$ has two inputs and one output: let $(\txTii,j)$ be the other input, sibling of $(\txTi,i)$;
      \item $(\txT,1).\txscript = (\txTii,j).\txscript = \tokScript$;
      \item $(\txTi,i).\tkid = (\txT,1).\tkid = (\txTii,j).\tkid$.
      \end{enumerate}
      There are two further subcases, according to whether $(\txTii,j) \in \ran(\txMapC)$ or not:
      \begin{itemize}

      \item if $(\txTii,j) \in \ran(\txMapC)$, 
        then this input corresponds to a deposit $\confDep[\varY]{\pmvB}{\valVi:\tokT}$, and
        appending $\txT$ corresponds to performing a $\joinOp$ action.
        By item 3 above, both $\varX$ and $\varY$ store the same token $\tokT$.
        Since the script in $(\txTi,i)$ requires the signature of $\pmvA$,
        and that in $(\txTii,j)$ requires the signature of $\pmvB$,
        and since by requirement (ii) of the definition of computational strategies,
        all the witnesses in $\txT$ must have been broadcast in a previous computational step,
        by coherence, it follows that the corresponding authorizations have 
        been performed in the symbolic run.
        Let $\pmvC = (\txT,1).\tkown$.
        Let $\runSii = \actJoin{\varX}{\varY}{\pmvC}$. 
        Then $\runSii$ is a valid extension of $\runSi$, since the 
        preconditions of rule $\joinRule$ are respected.
        Further, $\runSi\runSii$ is coherent with $\runCi \labC$ by the item \ref{item:coherence:actJoin}
        of the inductive case 1 of coherence.

      \item if $(\txTii,j) \not\in \ran(\txMapC)$, 
        then by the first item of~\Cref{lem:coher:c-to-s}, the output $(\txTii,j)$ 
        is unspendable --- contradiction.
        Then, this case does not apply.

      \end{itemize}

    \item $\tkop = 4$. 
      The script $\tokScript$ in $(\txTi,i)$ ensures that:
      \begin{enumerate}
      \item $\txT$ has two inputs and two outputs: let $(\txTii,j)$ be the other input, sibling of $(\txTi,i)$;
      \item $(\txT,1).\txscript = \tokScript$;
      \end{enumerate}
      There are two further subcases:
      \begin{itemize}

      \item if $(\txTii,j) \not\in \ran(\txMapC)$, 
        then appending $\txT$ corresponds to performing a $\giveOp$ action.
        Since the script in $(\txTi,i)$ requires the signature of $\pmvA$,
        and since by requirement (ii) of the definition of computational strategies,
        all the witnesses in $\txT$ must have been broadcast in a previous computational step,
        by coherence, it follows that the corresponding authorization has
        been performed in the symbolic run.
        Let $\pmvB = (\txTii,j).\tkown$, and let $\runSii = \actGive{\varX}{\pmvB}$. 
        Then $\runSii$ is a valid extension of $\runSi$, since the 
        preconditions of rule $\giveRule$ are respected.
        There are two subcases.
        If $(\txTii,j) = \txT.\txIn[1]{}$, then 
        $\runSi\runSii$ is coherent with $\runCi \labC$ by the item \ref{item:coherence:actGive}c
        of the inductive case 1 of coherence.
        Otherwise, if $(\txTii,j) = \txT.\txIn[2]{}$ we obtain coherence 
        by the item \ref{item:coherence:actGive}b
        of the inductive case 1 of coherence.

      \item otherwise, if $(\txTii,j) \in \ran(\txMapC)$,
        then this input corresponds to a deposit $\confDep[\varY]{\pmvB}{\valVi:\tokTi}$.
        Appending $\txT$ corresponds to performing an $\exchangeOp$ action.
        Since the script in $(\txTi,i)$ requires the signature of $\pmvA$,
        and that in $(\txTii,j)$ requires the signature of $\pmvB$,
        and since by requirement (ii) of the definition of computational strategies,
        all the witnesses in $\txT$ must have been broadcast in a previous computational step,
        by coherence, it follows that the corresponding authorizations have 
        been performed in the symbolic run.
        There are two further subcases:
        \begin{itemize}
        \item $\tokTi = \BTC$. 
          Then, it must be 
          $(\txTi,i) = \txT.\txIn[1]{}$ and
          $(\txTii,j) = \txT.\txIn[2]{}$.
          Let $\runSii = \actExchange{\varX}{\varY}$. 
          
        \item if $\tokTi \neq \BTC$.
          If $(\txTi,i) = \txT.\txIn[1]{}$ and $(\txTii,j) = \txT.\txIn[2]{}$
          then let $\runSii = \actExchange{\varX}{\varY}$, 
          otherwise let $\runSii = \actExchange{\varY}{\varX}$. 
      
        \end{itemize}
        Then, $\runSi\runSii$ is coherent with $\runCi \labC$ by the item \ref{item:coherence:actExchange}
        of the inductive case 1 of coherence.
      \end{itemize}
      
    \item $\tkop = 5$. In this case appending $\txT$ corresponds to performing a $\giveOp$ action.
      The script $\tokScript$ in $(\txTi,i)$ ensures that:
      \begin{enumerate}
      \item $\txT$ has only one input, \ie $(\txTi,i)$, and one output;
      \item $(\txT,1).\txscript = \tokScript$;
      \item $(\txT,1).\tkval = (\txTi,i).\tkval$;
      \item $(\txT,1).\tkid = (\txTi,i).\tkid$.
      \end{enumerate}
      Since $\tokScript$ requires the signature of $(\txTi,i).\tkown = \pmvA$,
      and since by requirement (ii) of the definition of computational strategies,
      all the witnesses in $\txT$ must have been broadcast in a previous computational step,
      by coherence, it follows that the corresponding authorization has 
      been performed in the symbolic run.
      Let $\runSii = \actGive{\varX}{\pmvB}$.
      Then $\runSii$ is a valid extension of $\runSi$, since the 
      preconditions of rule $\giveRule$ are respected.
      Further, $\runSi\runSii$ is coherent with $\runCi \labC$ by the item \ref{item:coherence:actGive}a
      of the inductive case 1 of coherence.
      \qedhere


    \end{itemize}
    
  \end{itemize}
  
\end{proofof}

\bartnote{Nota per la completeness: sia $\txT$ una $\exchangeOp$ computazionale con input 
  $(\txTi,i)$ e $(\txTii,j)$ non mappato. Siccome $(\txTi,i)$ \`e mappato, l'esistenza della firma 
  computazionale implica quella della corrispondente autorizzazione simbolica. 
  Supponiamo che la firma su $\txT$ per l'input $(\txTii,j)$ non sia stata fatta. 
  Sotto queste ipotesi \`e possibile fare una $\giveOp$ simbolica, 
  che non ha una controparte computazionale. Infatti, non posso fare $\giveOp$ computazionale,
  perch\`e ho firmato una $\txT$ di $\exchangeOp$; non posso nemmeno fare la $\exchangeOp$, 
  perch\`e manca la firma. Per avere una bisimulazione, forse si potrebbe sdoppiare la regola $\giveRule$.
  La nuova $\giveRule$ richiede due autorizzazioni che corrispondono ai due witness di $\txT$.
  Il passo $\giveOp$ simbolico descritto sopra non si pu\`o pi\`u fare, perch\'e i passi computazionali 
  sono coerenti con la $\nrule{[Give2]}$, non con la $\giveRule$ classica.
}



}
{}

\end{document}